\pgfplotsset{width=13cm,compat=1.10}
\newtheorem{lemma}{{\bf \sc Lemma}}
\newtheorem{proposition}{{\bf \sc Proposition}}
\newtheorem{remark}{{\bf \sc Remark}}
\newtheorem{observation}{{\bf \sc Observation}}
\crefname{claim}{claim}{claims}
\title{\vspace{-2cm} Equal Pay for \emph{Similar} Work\thanks{We thank the editor and four anonymous referees We are especially grateful to Koji Yokote for numerous helpful comments and for directing us to mathematical results used for our analysis. We also thank Bo Cowgill, Zoe Cullen, Mayara Felix, Takuma Habu, Michihiro Kandori, Erik Madsen, Akihiko Matsui, Teddy Mekonnen, Ellen Muir, Yusuke Narita, Chris Neilson, Yoko Okuyama, Debraj Ray,  Jesse Shapiro, Neil Thakral, Rakesh Vohra, and Seth Zimmerman for their helpful comments and conversations, as well as to seminar audiences at UPenn, NYU, Yale, Brown, USC, UCSD, Princeton, Wash U, Northwestern, University of Queensland, University of Melbourne and conference audiences at NBER Market Design, SITE (Market Design, Gender), EC, SOLE, EAAMO, Virtual Market Design Seminar, MAD, Tokyo Gender Economics Conference, Organization@Cornell24, and CETC.   We acknowledge research assistance from Atsumu Akazawa, Nanami Aoi, Kento Hashimoto, Asuka Hirano, Nadeen Kablawi, Masanori Kobayashi, Shinji Koiso, Kevin Li, Anya Marchenko, Sota Minowa, Daiji Nagara, Leo Nonaka, Ryosuke Sato, and Kenji Utagawa. Fuhito Kojima is supported by the JSPS KAKENHI Grant-In-Aid 21H04979 and JST ERATO Grant Number JPMJER2301. Bobak Pakzad-Hurson acknowledges support from the James M. and Cathleen D. Stone  Inequality Initiative, and the Orlando Bravo Center for Economic Research.}}
\author{Diego Gentile Passaro
\hspace{0.5cm}Fuhito Kojima
\hspace{0.5cm} Bobak Pakzad-Hurson
\thanks{Gentile Passaro: Amazon Science, diegogentilepassaro@gmail.com; this research does not relate to Gentile Passaro's position at Amazon Science. Kojima: Department of Economics, University of Tokyo, fuhitokojima1979@gmail.com. Pakzad-Hurson: Department of Economics, Brown University, bph@brown.edu.}}
\begin{document}

\maketitle

\begin{abstract}

\noindent Equal pay laws increasingly  require that workers with different group identities doing ``similar'' work are paid equal wages within firm. We study such  ``equal pay for similar work'' (EPSW) policies theoretically and test our models' predictions empirically using evidence from a 2009 gender-based Chilean EPSW. Under EPSW, firms segregate their workforce by gender. When there are more men than women in a labor market, EPSW increases the gender wage gap.
\end{abstract}

\section*{Introduction}

\begin{flushright}

\begin{quote}
``No employee with status within one or more protected
class or classes shall be paid a wage at a rate less than the rate at
which an employee without status within the same protected class or
classes in the same establishment is paid for... \textbf{similar work} [emphasis added]''\\
\end{quote}
-New York Labor Code, Section 194
\end{flushright}

Firms have some degree of wage-setting power in many labor markets \citep[see, e.g.][]{Manning2005,Card2022}. Because of this, they may pay workers different relative salaries in ways that are repugnant to society at large. In particular, wage gaps between groups of workers, often men and women, are frequent rallying points for governmental action. A popular form of legislation seeks to prohibit firms from paying disparate wages to different workers, guided by the principle of ``equal pay for equal work'' (EPEW). In the United States, 49 states had EPEW laws in effect in 2015, requiring each firm to pay equal wages to all of its workers doing equal work.

However, EPEW may be difficult to enforce; ``equal pay'' is straightforward to define, but it is likely that no two workers do exactly ``equal work'' within a firm. Firms can avoid the intent of these laws by pointing out differences between workers or making other maneuvers such as job title proliferation  to marginally heterogenize their workforce \citep{BB1986,goldin}.\footnote{For example, a manufacturer told the \emph{Washington Post} (1964)\nocite{washingtonpost} that his firm would ``downgrade some job classifications for women and reassign higher-level,
higher-paying duties to men'' in response to EPEW. We are grateful to Martha Bailey for bringing this article to our attention.
} To combat this enforceability issue, many EPEW laws have been updated to include a measure of coarseness--they require a firm to set ``equal pay for similar work'' (EPSW).\footnote{Some jurisdictions refer to such policies as ``Equal Pay for Comparable Work'' or ``Equal Pay for Substantially Similar Work.'' Given the similarity in both the stated goals and remaining wording of these policies to EPSW, we view them all synonymously.} California, among the first states in the US that moved from EPEW to EPSW, described a ``key difference'' of EPSW as ``mak[ing] it more difficult for employers to justify unequal pay.''\footnote{See \url{https://www.dir.ca.gov/dlse/california_equal_pay_act.htm}.} The judicial system has taken a similar view of the ``coarseness'' of EPSW; the highest court in the US to have adjudicated an alleged violation of EPSW was the 9$^{th}$ Circuit Court of Appeals, whose majority opinion in \emph{Freyd v. University of Oregon} states ```[similar] work'
is a more inclusive standard than equal work,'' and interprets ``similar'' work in the  university faculty setting to be defined by the department-by-rank cell (e.g. all full professors in the economics department of a particular university are ``similar'' to one another). As of January 2023, more of the US workforce is under the jurisdiction of a state EPSW law than a state EPEW law.\footnote{The percent of the US workforce that is under the jurisdiction of EPSW and EPEW, respectively, are 45.9 and 45.6. These figures are calculated from 1) finding all states covered by each of these policies (see \url{https://www.dol.gov/agencies/wb/equal-pay-protections}), and 2) the share of the US workforce employed in each state (see \url{https://www.statista.com/statistics/223669/state-employment-in-the-us/)}.} The equal pay provision in EPSW is frequently group based in that it binds only across groups of workers, prohibiting, for example, that a man is paid more than a ``similar'' woman (and vice versa).

Despite the rapid growth of EPSW, little is known about its effects on labor market outcomes. Since EPSW is more constraining on firms than EPEW, EPSW may lead to a larger direct effect on wages. But how will  firms adjust their employment policies to adapt in equilibrium? How will potential employment changes affect the goal of ensuring fairer pay? 

We theoretically and empirically study the labor market effects of EPSW. Our findings suggest that the equilibrium effects of EPSW overwhelm the direct effects, leading to increased occupational segregation and a shift in the wage gap in favor of the majority group of workers in a labor market.
Therefore, these policies may counterintuitively exacerbate the problem they were intended to solve: for example, as in our empirical setting, if male labor force participation is higher than female labor force participation and men receive higher average wages than women, then implementing a gender-based EPSW will \emph{increase} the gender wage gap, exacerbating the problem it was designed to solve.

We develop a theoretical framework to elucidate key economic forces at play. Homogeneous firms with a constant-returns-to-scale production function compete for a continuum of heterogeneous workers. Each worker belongs to one of two groups, $A$ or $B$  (e.g. men or women), and perform ``similar'' work in the eyes of the law. Therefore, our theoretical predictions should be viewed as applying ``within job'' in a particular labor market, and should not be used to predict differential effects on, for example, custodians and lawyers working within the same firm who do not perform ``similar'' work.

An important point in our analysis is identifying firms' willingness to pay for each worker with and without EPSW. We wish to include three features on willingness to pay in our model. First, recalling that EPSW is a policy undertaken to reduce the impact of discrimination in the labor market, we wish to include in our model the possibility that firms are biased against one group of workers.\footnote{For example, taste-based bias by (managers of) firms can be incorporated by reducing the willingness to pay for workers of a particular group by a fixed amount $b>0$. Moreover, even though our model assumes complete information, our model can easily be extended to include uncertainty over a worker's output at the time of hiring, allowing us to nest statistical discrimination into our framework.} Second, and as discussed above, the intentional choice of EPSW to require equal pay among workers doing unequal work leads us to include in our model workers with heterogeneous output. Third, heterogeneities across groups may cause the distribution of willingness to pay to differ for workers of different groups. We satisfy these goals by endowing each worker with a \emph{(net of discrimination) productivity}---which we henceforth abbreviate to \emph{productivity}---drawn from a distribution that potentially differs by group identity. We highlight that we make no restrictions on the relative shape of the distributions for $A$ and $B$ group workers, i.e. do not assume that the $B$ group is more or less ``productive'' than the $A$ group.  Firms' willingness to pay for a worker will equal that worker's productivity, but EPSW may constrain the wages a firm can pay.

Our model analysis reveals important effects of EPSW on worker sorting across firms. We begin by describing the results of a simple, static model containing two firms and no frictions for intuition. Without EPSW, each worker can be hired by either firm in equilibrium, regardless of group identity or productivity. Similarly to the classic Bertrand model, firms compete fiercely for each worker, and as a result, the average gap in pay across groups $A$ and $B$ is equal to the difference in average productivity between the groups. Thus, any  discriminatory factors affecting firms'  willingness to pay are exactly reflected in the wage gap. 

With EPSW, we show that firms segregate the workforce, with one firm hiring all $A-$group workers and the other hiring all $B-$group workers. To understand why this is the case,  note that because each firm hires from only one group of workers, no firm is exposed to the constraint of equal pay in equilibrium. By contrast, EPSW makes poaching workers from its competitor costly: EPSW requires equal pay to any two workers from different groups and, by transitivity, this implies that  equal wages must be paid to \emph{all} workers it hires. Thus, EPSW serves as the enforcement mechanism for segregation, similarly to location choices in Hotelling's competition model.

We then show that EPSW moves the wage gap in favor of the majority group of workers. This result follows from an  equal profit condition between firms that must be satisfied in equilibrium. More specifically, if there are more $A-$group workers than $B-$group workers, the firm that hires these workers must receive smaller average profit from each worker than the other firm receives from the average $B-$group worker, so $A-$group workers' average wage is relatively higher. Notably, the directional effect of EPSW on the wage gap follows simply because the majority group has a larger population and, in particular, this conclusion holds regardless of the distributions of productivities of the two groups. Moreover, we also show that firm profit and the magnitude of increase in the wage gap co-move, implying that firms would benefit from selecting equilibria with larger wage gaps.

One might claim that our model stacks the deck against EPSW because the outcome without EPSW is already ``fair'' in the sense that workers from both groups are paid their productivity. We disagree.
Recall that our model does not make strong assumptions on  discriminatory factors, so ``productivity'' could incorporate firms' bias. Given this observation, our model does \emph{not} take a stance on whether or not the outcome without EPSW is fair. What we do show, by contrast, is that EPSW is relatively more advantageous to the majority group.  And this  conclusion is robust in that it does not depend on whether the outcome without EPSW is fair or not. Of course, if the labor market is not fair in that the majority group is favored without EPSW for discriminatory reasons, then our result implies that EPSW makes the labor market even less fair. 

Our basic model is stylized in several respects. To study whether our main predictions hold under more realistic assumptions, we also study a dynamic model of search and bargaining. The model features an arbitrary number of firms, search frictions, and reallocation frictions. Our search model produces  qualitatively similar results to those in the static model. Specifically, firms  segregate eventually, if not necessarily immediately, after the introduction of EPSW. A worker's bargaining power is endogenously determined by the number of firms willing to hire her, therefore, the wage gap moves in favor of the group for which more firms segregate.  Under an additional regularity condition we show that  more firms segregate for the majority group, and thus the wage gap moves in favor of the majority group.  These results are dynamic analogues of our main theoretical results in the static model. Moreover, we also find that firms with small reallocation frictions in the sense of having a nearly segregated pool of existing employees segregate early, while other firms may opt to stay desegregated for longer.

Our theoretical framework yields clear empirical tests. We predict that EPSW leads firms to \emph{fully} segregate, and not merely that it increases the share of the majority group of workers at each firm. Moreover, we predict EPSW leads to an increase in the wage gap toward majority-group workers in a particular labor market, therefore, EPSW's directional impact on wage inequality is heterogeneous depending on the labor market. These specific, heterogeneous predictions serve to differentiate our framework from others. For instance, consider an alternative hypothesis that EPSW reduces demand for labor from the historically-discriminated against group $g$. We would expect a decrease  in the share of firms' workforce from group $g$, but not necessarily all-or-nothing segregation. Moreover, we would expect a relative decrease in the wages of $g-$group workers, regardless of whether $g-$group workers are in the majority of a particular labor market.

We test our theoretical predictions by studying the effect of the enactment of EPSW in Chile in 2009. This EPSW was the first equal pay law in Chile, and it constrained how a firm could pay its workers across gender: no firm is permitted to pay a woman less than it pays a man for similar work, and vice versa. The law subjects firms in violation to substantial fines, and through a public-records request, we find direct evidence of policy enforcement. Importantly, EPSW binds only for firms above a particular size threshold. This allows for a straightforward event-study (difference-in-differences) design to estimate the causal effects of EPSW, wherein we compare differences in outcomes of firms  above (``treated'') and below (``control'') the threshold following EPSW. Following \cite{BSTW2018,gulyas,duchini,boheim}, we restrict our sample to firms just above and below this size threshold to limit size-based wage dynamics. That is, our identifying assumption is that parallel trends hold for similarly-sized firms. 

Using matched employer-employee data from 2005--2013 we identify the following effects of EPSW consistent with our model predictions. First, EPSW increases gender segregation across firms. The share of firms with workers of only one gender increases by 4.4 percentage points, off a baseline of 34.3\% of gender-segregated firms prior to EPSW enactment.  We show that EPSW leads to a ``missing mass'' of firms that are nearly-but-not-fully gender segregated, and moreover, that the share of ``missing,'' nearly segregated firms is of a similar magnitude as the increased share of fully segregated firms. Moreover, firms that are nearly-but-not-fully segregated pre policy are the most likely to fully segregate post policy. These patterns are consistent with the predictions from our search model, and suggest that firms on the margin of full segregation can more easily transition to full segregation (e.g.  by firing the only worker of the ``wrong'' gender).

Second, we show that EPSW moves the gender wage gap in favor of the majority group of workers in a local labor market. 
In local labor markets---defined by firm industry and county---where a majority of workers are men, we find that EPSW \emph{increases} (in favor of men) the gender wage gap by 4.3 percentage points, while in local labor markets where a majority of workers are women, we find that EPSW \emph{decreases} (in favor of women) the gender wage gap by 6.2 percentage points. These findings exactly match our prediction that EPSW benefits the majority group of workers in a labor market. Because men in Chile dominate the overall labor market (5/6 of all workers are employed in majority male local labor markets), the overall effect of EPSW is to increase the gender wage gap (in favor of men) by 2.7 percentage points.

\subsection*{Related Literature}
While we are the first, we believe, to analyze the novel equilibrium effects of EPSW, there are rich theoretical and empirical literatures related to EPEW.

Theoretical studies of EPEW have typically focused on its unintended effects. This focus can be traced back to Milton Friedman who once famously said, ``What you are doing, not intentionally, but by misunderstanding, when you try to get equal pay for equal work law... is reducing to zero the cost imposed on people who are discriminating for irrelevant reasons.''\footnote{See \url{https://www.aei.org/carpe-diem/milton-friedman-makes-the-case-against-equal-pay-for-equal-work-laws/}.} More recent work studies EPEW in Salop's classic location model; the first such paper is \cite{Bhaskar2002} and is succeeded by \citet{Kaas2009,lagerlof2020,Lanning,Berson2016}. These papers must contend with the very motivation that led to EPSW: what is ``equal work''? Doing so results in at least two difficulties. 
First, the authors interpret ``equal work'' literally and assume workers are equally productive, while in reality there may be very few workers whose productivities are exactly equal.
Second, their analyses predict that EPEW can either increase or decrease differences in outcomes across groups of workers, often within the same paper. The lack of clear policy-relevant predictions is reflected in the empirical literature on EPEW, which we discuss shortly. 
By contrast, we find that EPSW has clear, if unintended, effects: our theoretical analysis unambiguously predicts both job segregation and a change in the wage gap that favors the majority group of workers in the labor market, and our empirical analysis of Chilean data confirms both predictions.

The empirical literature investigating equal-labor-rights legislation primarily considers US policies in the 1960s and 1970s. As with the theoretical literature we detail above, this empirical literature draws mixed conclusions about whether  such legislation improves the employment rate or wages of protected classes of workers (see \cite{bailey,AB1999,BK1992,DH1991,NeumarkStock2006,HDG2020} for detailed discussions).\footnote{The findings of \cite{bailey} suggest that the conclusions from this literature may be sensitive to the econometric methods used.}  One difficulty in much of the literature is assessing the impacts of individual policies, as many related labor policies were enacted in quick succession.\footnote{The Equal Pay Act of 1963 requires equal pay for men and women for equal work, while Title VII of the Civil Rights Act of 1964 prohibits discrimination in hiring, layoffs, and promotions. There were also other federal equal pay policies---Executive order 11246 in 1965 banning discrimination in hiring by federal contractors against minority candidates, and an extension to include women in 1967; the Equal Employment Opportunity Act in 1972 to increase enforcement; and many individual state policies.} \cite{DH1991} argue that it is difficult to attribute observed effects to any one of the contemporaneous policies, as there may be complementarities between them. Our empirical setting of Chile is notable as no existing equal pay laws were on the books at the time EPSW was enacted in 2009, and no significant related policies were enacted in quick succession.

Methodologically, our paper is more related to the literature on ``best-price'' guarantees, which commit firms to rebating past consumers if prices fall in the future. These policies have the direct effect of equalizing payments across heterogeneous buyers, but have the unintended equilibrium effect of raising firm market power \citep{Butz1990,CF1991,SM1997,SM1997b}. In our paper, EPSW plays the role of a best-wage guarantee, but due to competition between firms, it importantly binds off the equilibrium path; firms segregate in equilibrium because a deviating firm faces a competitive disadvantage in hiring. As a result, while firms in our model have an ex-ante identical willingness to pay for each particular worker, the costs of hiring workers from the ``wrong'' group are differentiated in equilibrium, driving the unintended wage effects of EPSW. 
 This force is similar to ``artificial'' costs that heterogenize ex-ante identical products in consumer markets, which can lead to local market power for firms \citep{klemperer}.

 Therefore, a key force in our model is firms' equilibrium behavior to segregate their workforce by group identity. Indeed, we show empirically that the Chilean EPSW leads to an increase in gender segregation across firms. One may suspect that such segregation is less likely to occur in other localities that enact EPSW.\footnote{For example, recent research shows that gender-based occupational segregation may be especially likely when the local language has gendered nouns, as firms can target their hiring to workers of specific genders \citep{KSZ2020,kuhn,ccl21}. This may explain the  high baseline level of gender segregation in Chile, where Spanish is the official language. Gendered nouns and targeted hiring may also play a role in the ability of Chilean firms to further segregate once EPSW is enacted. Our search model includes this feature, by allowing workers to direct their search based on the segregation status of firms.} Speaking to this point, however,  group-based segregation across firms has been noted in the US \citep{blau,NBV1996,HN2008,goldin}, and recent research \citep{FK2018} argues that this segregation has increased over time. Therefore, it seems plausible that EPSW could further affect segregation in a wide variety of labor markets.

\section{Static Model}\label{sec:static model}

\subsection{Model Description}
There are two firms, 1 and 2, and a continuum of workers. Each worker is endowed with a type $e=(g,v) \in \{A,B\} \times [0,1]$, where  $g \in \{A,B\}$ is the worker's group identity (say, men and women) and $v\in[0,1]$ is the worker's productivity. 
There is a $\beta \geq 1$
measure of $A-$group workers and $1$ measure of $B-$group workers. $F_A$ and $F_B$ are cumulative distribution functions governing the productivities of workers in groups $A$ and $B$, respectively. $F_A$ and $F_B$ are absolutely continuous and thus admit density functions $f_A$ and $f_B$, respectively. For each $g \in \{A,B\}$, we assume that $0<\underline f_g \le \bar f_g<+\infty$, where $\underline f_g=\inf \{f_g(v)| v \in [0,1]\}$ and $\bar f_g=\sup \{f_g(v)| v \in [0,1]\}$.
 Note that the 
distribution of $A-$group workers may be different from that of $B-$group workers, allowing us to model situations in which firms discriminate against one of the groups of workers (i.e. the output of $B-$group workers are drawn from the same distribution as $A-$group workers, but the firms' willingness to pay for them is lower because firms have a taste-based preference for $A-$group workers). For each $g\in\{A,B\}$ we define 
$
\mathbb{E}_g(v):=\int_0^1 v f_g(v) dv.
$

A \emph{(labor) market} is a tuple $(\beta,F_A,F_B).$ For expositional ease, we study this environment via a cooperative game (all of our model predictions are unchanged if we instead consider a non-cooperative game, see \Cref{noncooperative-and-cooperative}). 
An outcome specifies, for each worker, the firm she works for (or the outside option of staying unemployed) and the wage received (if employed). Each worker only cares about her wage. A firm generates infinitesimal profit $v-w$ if it hires a worker of productivity $v$ and pays her wage $w$. The firm does not have any capacity constraint (i.e. the firm can hire any measure of workers), and its payoff is the integral of profit generated from workers it employs.

Formally, an outcome for firm $i$ is
 $O_i:=\{(f_g^i(v),w_i^g(v))\}_{v \in [0,1], g=A,B}$, where: 
\begin{enumerate}
    \item $f_g^i(v) \in [0,f_g(v)]$ is the density of workers of type $e=(g,v)$ hired by firm $i$,
    \item $w_i^g(v) \in [0,\infty)$ is the wage firm $i$ pays to each worker of type $e=(g,v)$ it hires. If $f_g^i(v)=0$, then we fix $w_i^g(v)=0$.
\end{enumerate}
An outcome is a tuple $O:=(O_1,O_2)$ where $O_i$ is the outcome for firm $i$ such that $f_g^1(v)+f_g^2(v) \le f_g(v)$ for each $v$ and $g$. That is, the (overall) outcome specifies the outcome for both firms such that the total hiring does not exceed the supply of workers (a feasibility requirement). We assume that $f^i_g$ and $w^g_i$ are measurable functions for each $i$ and $g$.

Under an outcome for $i$, $O_i:=\{(f_g^i(v),w_i^g(v))\}_{v \in [0,1], g=A,B}$, firm $i$ receives profit 
$$
\pi^{O_i}_i:=\beta \int_0^1 [v-w_i^A(v)]f^i_A(v)dv + \int_0^1 [v-w_i^B(v)]f^i_B(v)dv.
$$ Given an outcome $O=(O_1,O_2)$ and firm $i$, we denote $\pi^{O}_i:=\pi^{O_i}_i$, and denote by $AW^O_g$  the average wages for group $g \in \{A,B\}$, i.e.\footnote{Note that each unemployed worker from group $g$ contributes a wage of 0 to the calculation of the average wage for group $g$.}
\begin{align*}
AW^O_g:=\stackrel[0]{1}{\int}w^g_1(v) f^1_g(v)dv+\stackrel[0]{1}{\int}w^g_2(v) f^2_g(v)dv.
\end{align*}
\noindent  
We refer to $AW^O_A-AW^O_B$ as the \emph{wage gap} in outcome $O$. Similarly, for  $(AW^O_A,AW^O_B) \neq (0,0)$, we refer to $AW^O_A/AW^O_B$ as the \emph{wage ratio (or log wage gap)} in outcome $O$ (assuming that $AW^O_B \neq 0$, and if $AW^O_A>AW^O_B=0$ then we define the wage ratio in outcome $O$ to be $+\infty$).\footnote{The wage ratio can be interpreted as the log wage gap because the latter, i.e. the gap in the log wages between the two groups, $\ln(AW_A^O)-\ln(AW_B^O)$, is a monotonic transformation of the former by the natural logarithm;  $\ln(AW^O_A/AW^O_B)=\ln(AW_A^O)-\ln(AW_B^O)$.}

We view two outcomes for firm $i$, $O_i:=\{(f_g^i(v),w_i^g(v))\}_{v \in [0,1],  g=A,B}$ and $\tilde O_i:=\{(\tilde f_g^i(v),\tilde w_i^g(v))\}_{v \in [0,1], g=A,B}$ as equivalent if, for each $g \in \{A,B\}$, $f^i_g(v)=\tilde f^i_g(v)$ and $w_i^g(v)=\tilde w_i^g(v)$ for almost all $v$. We view two outcomes $O$ and $\tilde O$ as equivalent if either 1) $O_i$ is equivalent to $\tilde O_i$ for every $i \in \{1,2\}$, or 2) $O_1$ is equivalent to $\tilde O_2$ and $O_2$ is equivalent to $\tilde O_1$.
The first condition captures the usual notion that the outcomes are regarded as equivalent if both the employment patterns and wages are identical between them except for a measure-zero set. The second condition captures the case in which the employment patterns and wages  are identical almost everywhere once the names of the firms are relabeled---recall that firms are homogeneous in the present model.

An outcome is said to be a \emph{core outcome} (without EPSW) if there is no firm and an alternative wage schedule for a subset of workers such that they are made better off being matched to each other, that is,  both the firm and each worker in the hired subset obtain a higher payoff than in the present outcome.  Formally, 
we say that 
an outcome $O:=\{(f_g^i(v),w_i^g(v))\}_{v \in [0,1], i =1,2, g=A,B}$ is \emph{blocked} by firm $j$ via an alternative outcome (for $j$)    $\tilde O_j:=\{(\tilde f_g^j(v),\tilde w_j^g(v))\}_{v \in [0,1],  g=A,B}$ if $\pi^{\tilde O_j}_j > \pi_j^{O_j}$ and, for each $g \in \{A,B\}$ and almost all $v \in [0,1]$, one of the following conditions hold. Note that, because we define $\tilde O_j$ to be an outcome, it must satisfy all restrictions imposed on an outcome in addition to those listed below:
\begin{enumerate}
\item $\tilde w_j^g(v) \ge  w_j^g(v)$ and $\tilde w_j^g(v) >  w_{-j}^g(v)$,
\item $\tilde w_j^g(v) \ge  w_j^g(v)$ and $\tilde f_g^j(v)+f_g^{-j}(v) \le f_g(v)$, 
\item $\tilde w_j^g(v) >  w_{-j}^g(v)$ and $\tilde f_g^j(v)+f_g^{j}(v) \le f_g(v)$, or
\item $\tilde f_g^j(v)+f_g^j(v)+f_g^{-j}(v) \le f_g(v)$.
\end{enumerate}

These cases enumerate all possibilities for the formation of a blocking coalition. Condition 1 states a ``no wage cuts'' requirement; if firm $j$ weakly raises the wages of all workers involved, and strictly raises wages for workers employed by the other firm $-j$, then these workers are all willing to join the blocking coalition. Condition 2 considers the case in which firm $j$ does not need to poach workers from firm $-j$ to construct the blocking outcome, so the only constraint on wages is that existing workers' wages are not reduced. Condition 3 considers the case in which firm $j$ does not need to keep any existing workers to construct the blocking outcome, so the only restriction  on  wages is that the wage paid to poached workers is higher than those paid by $-j$ to the same workers. Condition 4 considers the case in which firm $j$ can hire from unemployed workers to construct the blocking outcome, so there is no additional restriction on the wages of these workers.

We say that an outcome $O$ is a \emph{core outcome} if there exists no outcome that blocks it.

\begin{remark}\rm \label{noncooperative-and-cooperative}
Implicit in the above machinery is the assumption that each firm pays the same wage to all workers of the same type that it employs. We also note that our model is cooperative in nature. Core is the standard solution concept in models of matching with transfers, e.g. \citet{shapley1971assignment} and \citet{kelso1982job}. Indeed, our setting without EPSW can be regarded as a specific case of the latter paper with two homogeneous firms with additive surplus functions (except that we have a continuum of workers, a feature we regard as merely a technical difference).

An alternative approach would be to set up a non-cooperative game and analyze its equilibria. In Appendix \ref{section-noncooperative-games}, we present a non-cooperative game wherein the firms simultaneously make wage offers to workers, and each worker accepts at most one of the offers she receives. The pure-strategy subgame perfect Nash equilibrium outcomes of this game exactly coincide with the set of core outcomes of the cooperative game we describe above under the assumption that firms must pay the same wage to all workers of the same type. We choose  to present the cooperative framework in the main text because its exposition is simpler, and the equivalence mentioned here provides a noncooperative foundation for doing so. We also note that our results extend (in both the cooperative and non-cooperative models) straightforwardly to the case in which firms can pay workers of the same type different wages under a maintained assumption of measurability of the wage schedule.
\end{remark}

\begin{remark}\rm \label{EPC} The definition of block implies two restrictions any core outcome must satisfy. First, Condition 3 of the definition of block immediately implies the following \textbf{Equal Profit Condition}:
\begin{quote}
In any core outcome $O$, $\pi_1^{O}=\pi_2^{O}$.
\end{quote}

This is because otherwise the firm earning strictly lower profit could fire all of its existing workers and hire all of the workers employed by the other firm with an arbitrarily small wage increase. Second, the definition of the core implies the following 
\textbf{Individual Rationality Condition} for firms: 
\begin{quote}
In any core outcome, there is no set $V\subset[0,1]$ of positive Lebesgue measure, a group $g\in\{A,B\}$, and a firm $i\in\{1,2\}$ such that $w_i^g(v)>v$ for all $v\in V.$ 
\end{quote}

Intuitively this is because, if there were, firm $i$ could simply fire all of the workers in question (i.e. set $\tilde f_g^i(v)=0$ for all $v\in V$) and increase its profit. A formal argument for the case without EPSW is given in the proof of \Cref{no_EPL_equilibrium_prop}, and an essentially identical argument extends this observation to the case with EPSW as well.

\end{remark}

\subsection{Results}

In this section, we present theoretical results from our model. Throughout, we fix an arbitrary labor market $(\beta,F_A,F_B)$ and present results within this labor market.

\subsubsection{Core without EPSW}

\begin{proposition}
Without EPSW, there exist a continuum of (non-equivalent) core outcomes. In any core outcome, almost every worker is employed and earns a wage equal to her productivity (formally, for all  $i\in \{1,2\}$, all $g\in\{A,B\}$, and almost all $v\in[0,1]$: $f_g^1(v)+f_g^2(v) = f_g(v)$, and $w_i^g(v)=v$ if $f_g^i(v)>0$). \label{no_EPL_equilibrium_prop}\end{proposition}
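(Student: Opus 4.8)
The plan is to establish the wage characterization first and then construct the continuum of core outcomes. I would begin by combining the two restrictions highlighted in Remark~\ref{EPC}. The Individual Rationality Condition immediately gives $w_i^g(v)\le v$ for almost all $v$ whenever $f_g^i(v)>0$, so the only thing to rule out is that a positive mass of employed workers earns strictly less than their productivity. Suppose, toward a contradiction, that there is a firm $i$, a group $g$, and a set $V\subset[0,1]$ of positive measure on which $w_i^g(v)<v$ (this includes the case of unemployed workers, for whom the ``wage'' is $0<v$ on a positive-measure set, with $-i$ or $i$ free to hire them). I would then show that firm $-i$ (or, in the unemployment case, either firm) can form a blocking coalition: it poaches exactly the workers in $V$ from group $g$, paying them $w_i^g(v)+\varepsilon$ for small $\varepsilon>0$ (respectively, paying unemployed workers any wage in $(0,v)$). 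One must check that the poaching wage can be taken monotone — since $w_i^g$ is monotone on the set where $f_g^i>0$, adding a constant $\varepsilon$ preserves monotonicity, and on a positive-measure subset one can pass to a further subset where the original wage function restricted to $V$ is still monotone and where $v-w_i^g(v)-\varepsilon$ is bounded below by a positive constant, so the deviation strictly increases profit. This uses Condition~3 (or Condition~4) of the definition of block. Hence $w_i^g(v)=v$ a.e.\ on $\{f_g^i>0\}$.

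Next I would show almost every worker is employed, i.e.\ $f_g^1(v)+f_g^2(v)=f_g(v)$ a.e. If not, there is a group $g$ and a positive-measure set $V$ on which a positive mass $f_g(v)-f_g^1(v)-f_g^2(v)$ is unemployed. Then either firm can hire this residual mass at wage, say, $v/2$, generating per-worker profit $v/2>0$ on $V$; this is a block via Condition~4, contradiction. (Here I would be slightly careful that the new hires' wage function is monotone and that it does not violate within-group monotonicity at that firm for its \emph{existing} hires — but since existing hires on $V$ are already paid $v=2\cdot(v/2)>v/2$ by the first part, and we can restrict to a subset of $V$ disjoint in productivity values from the incumbent hires or simply note the incumbents are paid $v$ which dominates $v/2$, monotonicity across the combined set can be arranged; alternatively hire the residual mass everywhere at wage $v-\delta$ for small $\delta$, which is monotone and dominated appropriately.)

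For existence of a continuum of non-equivalent core outcomes, I would exhibit a family indexed by how the two firms split the workforce: fix any measurable assignment of each type $(g,v)$ to firm~1 or firm~2 (or split fractionally), with every worker employed and paid $w_i^g(v)=v$. Each such outcome gives both firms profit $0$ (since $v-w=0$ for every employed worker), so the Equal Profit Condition holds trivially. I then verify no block exists: any blocking firm $j$ must strictly increase its profit, but by the no-wage-cut / poaching conditions, any worker it retains or poaches must be paid at least $\max\{w_j^g(v),w_{-j}^g(v)\}$, which in these outcomes equals $v$ for every employed worker involved, and at least $0$ for unemployed workers (of whom there are none); so firm $j$'s profit on the blocking set is at most $\int (v-v)=0$, with equality impossible when strict improvement is required. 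Picking two assignments that differ on a positive-measure set of types and are not related by swapping firm labels yields non-equivalent core outcomes, and varying the split continuously gives a continuum.

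The main obstacle I anticipate is the monotonicity bookkeeping in the blocking arguments: whenever a firm poaches or hires a new positive-measure set of workers, the resulting wage function at that firm must remain non-decreasing in $v$ on its (possibly enlarged) set of hires. The clean way around this is to always choose the deviating firm to fire \emph{all} its current workers and build a fresh hire set (Condition~3 and Condition~4 permit exactly this), so the only monotonicity constraint is on the newly constructed wage function, which I can take to be $v-\varepsilon$ or $v/2$ — manifestly monotone. That reduces the whole argument to elementary estimates of $\int(v-w)f\,dv$, with no interaction with incumbent wages.
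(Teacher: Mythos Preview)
Your existence and continuum argument is essentially the paper's Lemmas~1--2 and is fine. The gap is in the characterization step, specifically in the ``clean way around'' you propose at the end.

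Having the deviating firm fire \emph{all} of its current workers and rebuild from scratch cannot produce a block in general. Suppose firm $-i$ fires everyone and poaches exactly the underpaid set $V$ from firm $i$ at wage $\tilde w(v)>w_i^g(v)$. The deviation profit is at most $\int_V(v-w_i^g(v))\,f_g^i(v)\,dv$, which is at most $\pi_i^O$ (firm~$i$'s total profit, since all other integrands are nonnegative by Individual Rationality). By the Equal Profit Condition $\pi_i^O=\pi_{-i}^O$, so the deviation profit is $\le \pi_{-i}^O$; with the strict wage bump it is in fact \emph{strictly less}. Replacing $\tilde w$ by $v-\varepsilon$ does not help: that only lowers per-worker profit to $\varepsilon<v-w_i^g(v)$. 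So ``fire all and poach'' never blocks. A concrete outcome your argument does not rule out: $f_A=f_B\equiv 1$, $\beta=1$, both firms hire half of each $(g,v)$, with $w_1^g(v)=\min\{v,\tfrac12\}$ and $w_2^g(v)=v\cdot\mathbf 1\{v>\tfrac12\}$; both wages are monotone, both profits equal $1/16$, and your fire-all deviation yields at most $1/16-\varepsilon/4$.

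The paper confronts exactly this difficulty (it explicitly flags that the Bertrand intuition is ``much more nontrivial'' because of wage monotonicity) and proceeds through six exhaustive cases. Crucially, in several cases (Cases~4--6) the blocking firm \emph{keeps} most of its own workers and modifies wages only on a carefully chosen small interval, using Halmos's density theorem to locate an interval where the underpaid set is nearly full so that the local gain from poaching dominates the local loss from any wage adjustments needed for monotonicity. In Case~6 (one firm pays $v$, the other strictly less, both hire) the blocker fires only its own zero-profit workers on a short interval $[v_i'-\delta,v_i']$ and poaches there---precisely the opposite of ``fire all.'' Your ordering (wages $=v$ first, then full employment) would work \emph{if} you could establish the wage step, but the monotonicity bookkeeping you anticipated as ``the main obstacle'' is real and is not dissolved by the fire-all trick.
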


\Cref{no_EPL_equilibrium_prop} establishes that, while there are multiple core outcomes, they all feature full employment and result in wages to each worker equal to their productivity. Our proof demonstrates that this result obtains via classic ``Bertrand'' reasoning: in any outcome satisfying individual rationality that fails the conditions of \Cref{no_EPL_equilibrium_prop} for (a positive measure set of) workers of type $(g,v)$ one firm can outcompete the other and hire $(g,v)$ workers in a way that increases its profit.

By \Cref{no_EPL_equilibrium_prop}, the wage gap and wage ratio  in any core outcome $O$ without EPSW are, respectively,
\begin{align*}
    AW_A^O-AW_B^O
    =\stackrel[0]{1}{\int}v f_A(v)dv-\stackrel[0]{1}{\int}v f_B(v)dv =\mathbb{E}_A(v)-\mathbb{E}_B(v)
\quad \text{ and } \quad
AW_A^O/AW_B^O=\mathbb{E}_A(v)/\mathbb{E}_B(v).
\end{align*}

\subsubsection{Core with EPSW}\label{group-basedEPL-subsection}
Now we study core outcomes  under EPSW. Informally, this restriction requires that each firm pays the same wages to almost all workers it hires  if it hires a positive measure of workers from both groups. Formally, we modify the definition of outcome  $O_i=\{( f_g^i(v),w_i^g(v))\}_{v \in [0,1], g=A,B}$ for all $i\in\{1,2\}$  to include the following restriction:

\begin{quote}
 There exist
 no  sets $V_g\subset[0,1]$ and $V_{-g}\subset[0,1]$ with positive Lebesgue measure such that:
\begin{enumerate}
    \item $f_g^i(v)>0$ for all $v\in V_g$ and $f_{-g}^i(v)>0$ for all $v\in V_{-g}$, and 
    \item $\underset{v\in V_{-g}}{\inf~}w^{-g}_i(v)>\underset{v\in V_{g}}{\sup~}w^{g}_i(v)$.
\end{enumerate}
\end{quote}

The preceding restriction prevents a firm from employing  sets of workers from both groups with positive measure (point 1) such that all workers in one set receive strictly higher pay than all workers in the other (point 2).\footnote{Note that because the above restriction must hold for every set $V_g$ and $V_{-g}$ of positive measure, we could equivalently state point 2 using the essential infemum and essential supremum of the wages, respectively.} Given its symmetry across groups,  EPSW implies, by transitivity, that if a firm hires a positive measure of workers from both groups, it must pay almost all workers the same wages.

The next result shows  that generically  firms must fully segregate by group in any core outcome under EPSW.\footnote{We consider the space of productivity distributions satisfying our regularity conditions. We endow this space with the weak-$^*$ topology and  say that a property holds generically if it holds in an open and dense subset with respect to the product topology over the product set of distributions.  An example of a non-generic case in which the conclusion of this result fails features $\beta=1$ and $F_A(v)=F_B(v)=v$ for all $v\in[0,1]$. For this parameterization, it is straightforward to verify that there exists a core outcome where firm 1 hires all workers from both groups with $v \in [0,\frac{1}{2}]$ at wage zero while firm 2 hires all other workers at wage $\frac{1}{2}$.} 
Our proof shows that any other market configuration--both firms segregate for the same group, both desegregate, or only one firm desegregates--generically cannot be supported in a core outcome. The intuition for this result can be obtained by analogy to Hotelling's model: a firm can ``cater to'' a group $g$ by segregating towards $g$, which allows this firm to fiercely compete for these workers as it is not bound by EPSW. Both firms segregating toward some group $g$ leads them to ``Bertrand compete'' away all profit, while workers of the other group could instead be employed at low wages to form a block. If only one firm desegregates, it must pay a common wage to all workers, and can thus be outcompeted for workers of the group the other firm segregates toward to form a block. If both firms desegregate, we show that the only core outcome that can potentially exist involves exactly one firm paying a wage of zero to all its workers, but that such an outcome cannot be in the core for ``almost all'' distribution functions $F_A$ and $F_B$.

\begin{proposition} Generically, in any core outcome under EPSW, firms completely segregate. Specifically, one firm hires almost all $A-$group workers,  and the other hires almost all $B-$group workers (formally, for some $i\in\{1,2\}$, $f^i_A(v)=f_A(v)$ for almost all $v\in[0,1]$ and $f^{-i}_B(v)=f_B(v)$ for almost all $v\in[0,1]$). \label{generic_segregation}\end{proposition}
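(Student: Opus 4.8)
The plan is to argue by contradiction from the two necessary conditions recorded in \Cref{EPC} — the Equal Profit Condition (EPC) and the firms' Individual Rationality Condition (IR) — together with two facts specific to the present setting: (i) under group-based EPSW, by the transitivity observation following its definition, any firm that hires a positive measure of workers from \emph{both} groups must pay essentially all of its workers one common wage; and (ii) since the densities are bounded away from zero, each $F_g$ has full support $[0,1]$, so for every $w>0$ there is a positive measure of workers of each group with productivity below $w$. I assume a core outcome that is not fully segregated and show that, unless $(F_A,F_B)$ lies in a closed, nowhere-dense ``knife-edge'' subset of the space of primitives, some firm has a profitable block. Since the complement of such a subset is open and dense, this yields the generic conclusion; full employment then follows by an argument analogous to the one behind \Cref{no_EPL_equilibrium_prop}, giving the displayed statement.

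The core of the argument is a reduction collapsing every non-segregated core outcome to a single canonical configuration. By fact (i), a firm that hires both groups pays a common wage $w^*$, and by IR its entire hired set lies in $[w^*,1]$. Using this together with fact (ii) one rules out, step by step: (a) unemployment of a positive measure of workers — a single-group firm can absorb unemployed workers of its own group at wage $\min\{v,\text{(adjacent incumbent wage)}\}$, which preserves within-group monotonicity and yields strictly positive profit on a positive-measure set, while unemployment of low-productivity workers of both groups drives the relevant common wage to $0$; (b) ``nested'' configurations in which both firms hire both groups at common wages $0<w_1^*\le w_2^*$ — the low-wage firm can poach all of the other firm's workers for $\varepsilon$ more, and combining the failure of the two natural ``segregate and keep only the cheap workers'' deviations with EPC yields $\beta\int_0^{w_1^*} v f_A\,dv+\int_0^{w_1^*} v f_B\,dv\le 0$, impossible for $w_1^*>0$; and (c) configurations in which one firm mixes at some wage $w^*>0$ while the other is single-group — handled by analogous poaching/absorption deviations. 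What remains, up to relabeling firms and groups, is the \emph{threshold configuration}: one firm hires every worker with productivity $v<t$ from both groups at wage $0$, and the other hires every worker with $v\ge t$ from both groups at common wage $t$, with full employment.

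For the threshold configuration, EPC reads $\beta\int_0^t v f_A\,dv+\int_0^t v f_B\,dv=\beta\int_t^1 (v-t) f_A\,dv+\int_t^1 (v-t) f_B\,dv$, which pins down $t=t(F_A,F_B)\in(0,1)$ (each side is monotone in $t$ with the right boundary behavior). Now let the high-wage firm deviate to a single-group firm: it keeps its incumbent group-$A$ workers (those with $v\ge t$) at wage $t$, dismisses its group-$B$ workers, and poaches the low group-$A$ workers (those with $v<t$, currently paid $0$) at wage $\varepsilon$. This is within-group monotone, involves only one group (so EPSW is slack), satisfies the block conditions, and as $\varepsilon\to 0$ raises the firm's profit by $\beta\int_0^t v f_A\,dv-\int_t^1 (v-t) f_B\,dv+o(1)$; the symmetric deviation onto group $B$ raises profit by $\int_0^t v f_B\,dv-\beta\int_t^1 (v-t) f_A\,dv+o(1)$. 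If neither deviation is profitable, adding the two weak inequalities and invoking EPC forces $\beta\int_0^t v f_A\,dv=\int_t^1 (v-t) f_B\,dv$. Hence the threshold configuration can be a core outcome only on the set $\mathcal N=\{(F_A,F_B):\ \beta\int_0^{t} v f_A\,dv=\int_t^1 (v-t) f_B\,dv,\ t=t(F_A,F_B)\}$, which is closed in the weak-$^*$ product topology (the integral functionals and the implicitly-defined $t$ are continuous) and nowhere dense (perturbing $F_B$ on a subinterval of $(t,1)$ moves the right-hand side while, generically, not forcing an offsetting change in $t$; here one uses $t\in(0,1)$ and the two-sided density bounds). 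On the open dense complement of $\mathcal N$ every core outcome under group-based EPSW is fully segregated, which together with full employment is precisely the claim.

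The step I expect to be the main obstacle is the reduction (b)--(c): a non-segregated core outcome can a priori exhibit many patterns — one firm mixing while the other is single-group, partial unemployment, two firms at distinct common wages, a group split across firms — and for each pattern one must construct a \emph{specific} profitable block that simultaneously respects all the structural constraints (no wage cuts for incumbents, within-group monotonicity, and the collapse to a single common wage the instant a firm holds both groups), then verify that EPC and IR leave no residual slack except on a lower-dimensional set of primitives. A secondary obstacle is making the genericity claim fully rigorous, namely checking that the perturbation argument genuinely moves the residual $\beta\int_0^t v f_A\,dv-\int_t^1(v-t)f_B\,dv$ off zero — the point at which the hypotheses $0<\underline f_g\le\bar f_g<\infty$ do real work.
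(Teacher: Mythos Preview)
Your approach is essentially the same as the paper's: reduce any non-segregated core outcome to the ``threshold'' configuration in which one firm hires all workers with $v<t$ at wage $0$ and the other hires all workers with $v\ge t$ at wage $t$; then use the two single-group specialization deviations together with the Equal Profit Condition to pin down the knife-edge $\pi_{1,A}=\pi_{2,B}$ (equivalently, $\pi_{1,A}+\pi_{2,A}=\pi_{1,B}+\pi_{2,B}$), and argue this is non-generic by perturbation. Your choice to have the \emph{high}-wage firm deviate (keep its group-$A$ incumbents at $t$, poach the low-$A$ workers at $\varepsilon$) is a harmless variant of the paper's deviation by the \emph{low}-wage firm; both deliver the same knife-edge after adding the two ``not profitable'' inequalities and invoking EPC.

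Two places where the paper's execution differs from your sketch and where you should expect real work. First, your step (c) --- one firm mixes while the other is single-group --- is not dispatched by a one-line ``analogous poaching'' remark: the paper handles this configuration \emph{before} the threshold analysis, via a four-case argument showing directly that if one firm hires from only one group then both do (reusing the monotone-poaching constructions from the proof of \Cref{no_EPL_equilibrium_prop}); this is precisely the part you flag as the main obstacle, and it is. Second, for density of the complement of $\mathcal N$ the paper does \emph{not} perturb $F_B$ alone (which moves $t$ and feeds back into both sides of your equality); instead it exhibits an explicit, coupled perturbation of $(f_A,f_B)$ that leaves $w_2$ and both firms' total profits fixed while strictly moving $\pi_{1,A}+\pi_{2,A}-\pi_{1,B}-\pi_{2,B}$ --- a cleaner way to avoid the implicit-$t$ feedback you worry about.
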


Following the previous result, we assume throughout that any core outcome $O$ under EPSW exhibits full segregation by group: without loss of generality let firm 1 hire all $A-$group workers ($f^1_A(v)=f_A(v)$ for all $v$) and let firm 2 hire all $B-$group workers ($f^2_B(v)=f_B(v)$ for all $v$). Moreover, 
letting $w_1(\cdot)$ specify firm 1's wages to $A-$group workers and letting $w_2(\cdot)$ specify firm 2's wages to $B-$group workers, Individual Rationality (see \Cref{EPC}) implies that it suffices to consider  $w_i(\cdot):[0,1]\to[0,1]$ for each $i\in\{1,2\}$ in any core outcome.   Note that we can therefore represent the wage gap and the wage ratio in a core outcome $O$ under  EPSW as, respectively,\footnote{The given expression for the wage ratio holds under the assumption that $w_2(v)$ is not equal to zero almost everywhere. Otherwise, by construction, the wage ratio is defined as $\infty.$}
\begin{align*}
\hspace{-8mm} AW^O_A-AW^O_B=\stackrel[0]{1}{\int}w_1(v) f_A(v)dv-\stackrel[0]{1}{\int}w_2(v) f_B(v)dv, \quad  \text{and} \quad 
AW^O_A/AW^O_B=\stackrel[0]{1}{\int}w_1(v) f_A(v)dv/\stackrel[0]{1}{\int}w_2(v) f_B(v)dv.
\end{align*}

The existence of a core outcome holds  generally under EPSW; consider the zero-profit outcome in which firm 1 hires all $A-$group workers and firm 2 hires all $B-$group workers, and all workers are paid wages equal to their productivities. Such an outcome is a core outcome  because no firm is bound by the equal pay constraint of EPSW due to group-level segregation and because adding any new workers requires poaching workers from the competing firm, and such poaching would violate individual rationality. This outcome is similar to core outcomes without EPSW, except that the workforce is now necessarily segregated. 
However, the next proposition shows that there
are also other core outcomes under EPSW; Indeed, there always exist a continuum of (non-equivalent) core outcomes. The proposition also shows that EPSW relatively benefits the majority group of workers, and both firms prefer core outcomes in which wage inequality is maximized.

\begin{proposition} Suppose there is an EPSW restriction.~ \label{two-sided-EPL-equilibrium}
\begin{enumerate}
\item There exist a continuum of non-equivalent core outcomes. \label{continuous-equilibria}
\item Let $\beta > 1$. There exists one core outcome (and its equivalent outcomes) that yields the same wage gap as  the (essentially unique) core outcome without EPSW. In all other core outcomes under EPSW, the wage gap is strictly larger. \label{gap-is-larger-with-EPL}
\item Let $\beta>1$. Consider any two core outcomes. The wage gap  is larger in the first outcome if and only if firm profit is higher in the first outcome.   \label{gap_increases_profit} 
\item Let $\beta \mathbb{E}_A[v]>\mathbb{E}_B[v]$. Then the statements of Parts \ref{gap-is-larger-with-EPL} and \ref{gap_increases_profit} hold (for any $\beta\geq 1$) when replacing ``wage gap'' with ``wage ratio.''\label{ratio_increases_profit}

\end{enumerate}
\end{proposition}

As demonstrated by Part \ref{gap-is-larger-with-EPL} of \Cref{two-sided-EPL-equilibrium}, EPSW moves the wage gap in favor of the majority group (and does so strictly except for one core outcome among a continuum). Moreover, Part \ref{gap_increases_profit} shows that larger wage gaps are associated with higher firm profits under EPSW. An implication of Part \ref{gap_increases_profit} is that firms prefer core outcomes that result in larger wage gaps, suggesting that a core outcome with a larger wage gap may be more likely to occur if firms can coordinate to select an outcome from the core. Part \ref{ratio_increases_profit} 
 replicates the findings of Parts \ref{gap-is-larger-with-EPL} and \ref{gap_increases_profit} when inequality is measured via the wage ratio instead of the wage gap, under the assumption that $\beta \mathbb{E}_A[v]>\mathbb{E}_B[v].$

 We emphasize the paucity of assumptions about relative productivities of the two groups.
Parts \ref{gap-is-larger-with-EPL} and \ref{gap_increases_profit} make no such restrictions, implying our model predicts that EPSW moves the wage gap in favor of the majority group regardless of underlying differences (including discrimination) across groups. The condition that $\beta \mathbb{E}_A[v]>\mathbb{E}_B[v],$ which is sufficient to extend this conclusion to the wage ratio, says that the majority group is not ``much'' less productive (or is not ``much'' more discriminated against) than the minority group. While this is clearly a restriction, evidence from our empirical setting (detail is provided in \Cref{empirical section}) suggests that it is true: across male-majority labor markets, the pre-EPSW log wage gap is larger than one, which implies by \Cref{no_EPL_equilibrium_prop} that $\beta \mathbb{E}_A[v]>\mathbb{E}_B[v]$. Across female-majority labor markets the pre-EPSW log wage gap is 0.994 while $\beta$ is 1.31, implying $\beta \mathbb{E}_A[v]>\mathbb{E}_B[v]$.

\section{Search Model}\label{section:search-model}

In this section, we present a search model and use it to study the evolution of wages and segregation upon the enactment of EPSW. In order to consider the incentives facing agents over time, we consider a non-cooperative dynamic game. A continuum of workers enters a labor market in each period, and each worker searches at, and bargains with, firms in set $I$  where $\vert I \vert=n\geq 1$. Each worker has a type $(g,v)$, where $g\in \{A,B\}$ is her group identity and $v \in [0,1]$ is her productivity. 

Before proceeding, we discuss several important features of our model which match our empirical context. First, we study a dynamic setting in which EPSW is a ``surprise,'' meaning that agents initially optimize their strategies without consideration of EPSW and then unexpectedly must contend with it. Second, not all firms are bound by EPSW when it is enacted; such ``unconstrained'' firms represent those below a binding size threshold, or those which hire predominantly fixed-term labor, for which EPSW in Chile does not bind. Third, workers' choice sets are endogenously determined by firm segregation decisions, as firms segregated for a group $g$ will be unwilling to hire workers of group $g'$ as doing so would expose the firm to the bite of the policy. Moreover, recent research \citep{KSZ2020,kuhn,ccl21} suggests that a firm in a labor market where the written language has gendered nouns---as does Spanish, the de facto official language of Chile---are able to significantly affect the set of workers that apply for employment through use of masculine or feminine nouns in job postings (e.g. ``mec\'anico'' and ``mec\'anica'' are respectively the masculine and feminine Spanish translations of ``mechanic''). We capture both of these features by allowing workers to observe firms' segregation choices before engaging in directed search. Finally, the workers  entering in each period have productivities distributed according to $F_A$ and $F_B$ (satisfying the same regularity conditions as in \Cref{sec:static model}), where informally, the density $f_g(v)$ denotes the proportion of $g-$group workers with productivity $v$. We proceed in the sequel to study the economic forces underlying our agents with this informal understanding of the meaning of our distribution functions. In \Cref{section-noncooperative-games} we show that this interpretation can be made rigorous in the sense that we construct measures of worker types that mimic the underlying distributions $F_A$ and $F_B.$

\subsection{Model Description}
\label{search-model}

\subsubsection{Basic Setup}\label{basic setup}

There is a countably infinite number of time periods, $\{-\underline t, -\underline t+1,\dots, -1,0,1,2,\dots\}$, with $\underline t$  being a positive integer. EPSW is introduced between the end of period $-1$ and the beginning of period $0$. We refer to the set of periods $\{-\underline t, -\underline t+1,\dots, -1\}$ as the ``pre EPSW'' periods and the set of periods $\{0,1,2,\dots\}$ as the ``post EPSW'' periods, respectively. We study a game of perfect information, where each player observes all prior actions taken by all players.

First, we describe the game in the pre EPSW periods $\{-\underline t, -\underline t+1,\dots, -1\}$.
At the beginning of period $-\underline t$, no worker is employed by any firm.  Each period $t \in \{-\underline t, -\underline t+1,\dots, -1\}$ unfolds as follows:

\begin{enumerate}
    \item \label{workers born step} A measure $\beta \geq 1$ of $A-$group workers and a unit measure of $B-$group workers are born. Worker productivities are distributed according to distributions $F_A$ and $F_B$, respectively.   
    \item Each newly born worker directs her search by selecting a subset of firms and a linear order over them. The worker meets with firms along the selected ordering to bargain over wages. Bargaining details will be described in subsequent sections. \label{bargaining stage}
    \item All newly born workers who do not come to a bargaining agreement with any firm exit the market. All other workers exit the market, each with independent probability $d\in(0,1).$
\end{enumerate}

 When EPSW is introduced between the end of period  $-1$ and the beginning of period $0$, there are $n_U \ge 1$ exogenously fixed firms that are ``unconstrained'' by the policy. In each period $t \in \{0,1,\dots\}$, the actions available to unconstrained firms do not change. The remaining $n_C =n-n_U$ ``constrained'' firms must decide how to adhere to the policy: either by desegregating at a common wage for all workers, or by segregating for one group. Because the introduction of an EPSW is a surprise and firms have not yet had an opportunity to make a segregation decision, we take all constrained firms to be desegregated at the beginning of time 0.\footnote{As mentioned earlier, some firms are  unconstrained by EPSW in our empirical setting. Our results---specifically, \Cref{existence-result} and \Cref{near seg prop}---also hold more generally if 
an arbitrary number of firms are initially segregated at the start of period $0$.
In our empirical setting, 34.3\% of firms employ workers from only a single group at the time of announcement of EPSW.}
Formally, for each $t\geq 0$, the following step is added immediately prior to Step \ref{workers born step} of the timing of the game for constrained firms.
 
\begin{enumerate}
 \setcounter{enumi}{-1}
    \item ~
    \begin{enumerate}
  \item Following an exogenously given order, each currently desegregated firm sequentially chooses from the following options: segregate for group $A$, segregate for group $B$, or desegregate at any endogenously selected wage $w\in[0,1]$. The decision to segregate is  irreversible, i.e. once the firm has segregated for a group $g$, then the firm continues to be segregated for group $g$ in all subsequent periods.\footnote{The assumption that the firms' segregation decisions are sequential is made for tractability. Specifically, this assumption gives each firm $i$ perfect information about the (de)segregation decisions of others when $i$ makes its (de)segregation decision in any period $t>0.$ In the proof of \Cref{existence-result}, we show how the presence of perfect information leads to the existence of a pure-strategy subgame perfect Nash equilibrium, following the well-known existence result of \cite{har85perfect}. The assumption that segregation is irreversible is made for simplicity, and similar results obtain with minor modifications to our equilibrium refinement.} The (de)segregation decision potentially involves job separation and wage adjustments:
\begin{itemize}
    \item If a firm is segregated for group $g$, then the firm is immediately unmatched from all previously employed workers with group identity $g' \neq g.$
    \item If a firm is desegregated at wage $w$, then all previously employed workers with productivity $v<w$ and all previously employed workers earning wage $w'>w$ are immediately unmatched from the firm. The wages of all remaining workers are immediately increased to $w$. The firm pays $w$ to all newly hired workers in period $t$.
\end{itemize}
\item Each worker who is separated from their previous employer at this stage exits the market, and with probability $\rho\in[0,1]$ is replaced by a new unemployed worker of the same type.\footnote{$\rho\in\{0,1\}$ corresponds, respectively, to markets without and with replacement, which are commonly considered settings in the existing literature  (for examples of search models with replacement, see \citet{BC1999}, \citet{BR2000}, \citet{chade}, and \citet{CPH}). We allow for the possibility of ``partial'' replacement, $\rho\in(0,1)$, for generality.}
\end{enumerate}
\end{enumerate}

Moreover, in post EPSW periods, a worker with group identity $g$ is not eligible to be matched with any firm that is segregated for group $g'\neq g$. Similarly, a worker with productivity $v$ is not eligible to be matched with any desegregated firm whose currently selected wage is strictly higher than $v$.\footnote{\label{footnote cost of violation}Here, we are directly prohibiting a worker from being matched with firms that are segregated for the opposite group. This assumption is made only for convenience. If we assume that there is a sufficiently large penalty for firms hiring from both groups while paying different wages, then no such matching occurs in equilibrium. In a similar spirit, no worker is matched to a desegregated firm whose wage is higher than her productivity in equilibrium.} Formally, a worker of type $(g,v)$ is unable to include such firms in her bargaining order.

Each worker of type $(g,v)$  
who is employed by a firm  at a going wage of $w$ receives a per-period payoff of $w$,  and the employing firm  receives an infinitesimal per-period payoff of $v-w$ from each such worker if it employs a measurable set of workers and pays a measurable wage schedule, and otherwise the firm receives a per-period payoff of $-1$. Unemployed workers receive a payoff of zero. All agents geometrically discount future payoffs according to discount factor $\delta\in[0,1).$ 

\subsubsection{Bargaining}

 We study a ``Nash-in-Nash'' bargaining solution a la \cite{horn88nash}. Specifically, and as we define shortly, the available surplus between each worker and firm is fractionally split, assuming that the outcome between the worker and each subsequent firm she potentially bargains with follows an analogous fractional split. The worker's ``disagreement point'' is the wage she receives from bargaining with the subsequent firm in her selected order, i.e. if bargaining were to break down with the current firm. Agent strategies, which take as fixed the outcome of any bilateral bargaining as described above, must satisfy subgame perfection in the standard way. In \Cref{microfoundation app}, we discuss hiring capacities within this model (\Cref{remark: capacity constraint}), and we also provide an alternating-offer microfoundation for the fractional split bargaining protocol (\Cref{nash_bargaining}).
 
 One complexity our setting induces is that a worker's employment status and wage is potentially affected by endogenous decisions. As described above, a worker of type $(g,v)\in\{A,B\}\times[0,1]$ can be ``fired'' from her job in three cases: her employing firm segregates for the opposite group $g'\neq g$, her employing firm desegregates at a wage $w>v$, or her employing firm desegregates at some time $t$ at a wage $w$ that is lower than what it was previously paying the worker. Our upcoming framework accounts for these endogenous firm decisions that affect the ``available surplus'' to be split.

 To define the fractional split our bargaining protocol takes, consider a period $t$ and consider a newly born worker of some type $(g,v)\in\{A,B\}\times[0,1]$. Let $D_t^v$ be the set of all desegregated firms that set a wage $w_i(t)\leq v$ in period $t$. Throughout, we adopt the expositional convention that all firms are unconstrained in pre-EPSW periods $t<0$. Therefore,  $D^v_t=\emptyset$ for all $v$ if $t<0.$  Let $T^v_i \in \mathbb N \cup \{\infty\}$ be the number of periods until $i \in D_t^v$ fires  the relevant worker, i.e. period $T^v_i+t$ is defined as the first period in which either $i$ segregates for group $g'\neq g$ or $i$ desegregates at some wage $w_i(T^v_i+t)>v$ or some wage $w_i(T^v_i+t)<w_i(\tau)$ for some $\tau\in\{t,t+1,...,T^v_i+t-1\}$. Therefore, $T_i^{v}$ is an equilibrium object for workers born in periods $t\geq 0,$ and $T_i^{v}=\infty$ for all workers born in periods $t<0$ due to the ``surprise'' feature of EPSW.
Define
\begin{align}\label{V-definition}
V^v:=\sum_{\tau=1}^\infty \delta^{\tau-1} (1-d)^{\tau-1}v, \quad \text{and} \quad 
W^v_i:=\sum_{\tau=1}^{T_i} \delta^{\tau-1}(1-d)^{\tau-1} w_i(t+\tau-1).
\end{align}
\noindent $V^v$ and $W^v_i$ represent, respectively, the total expected surplus generated by the worker if she is never fired and her total expected earnings if she accepts a job at desegregated firm $i$. Let $W^v=\max \{W_i^v | i \in D^v_t\}$ if $D^v_t \neq \emptyset$ and $W^v=0$ otherwise. Note that $V^v \ge W^v$ by definition, i.e. $W^v$ serves as the worker's ``outside option'' that she has access to by working at a desegregated firm.

Now, we will define our bargaining outcome for any bargaining order. Let $(i_1,\dots, i_m)$, $m \ge 0$, denote the worker-selected bargaining order over a subset of  eligible firms, where we denote the bargaining order by $\emptyset$ if $m=0$. Let $V^{(i_1,\dots, i_m)}$ be the surplus accruing to the worker  given this order. We define the bargaining outcome by induction on the number of the firms $m$ who bargain with the worker as follows.
\begin{enumerate}
    \item Suppose that $m=0$,  i.e., the worker elects not to bargain with any firm. Then, the worker is permanently unmatched and permanently receives a wage of zero.
    \item Given the bargaining outcome for an order $(i_2,...,i_{m+1})$, we will define the bargaining outcome for order $(i_1,...,i_{m+1})$ as follows.\footnote{We adopt the convention that $(i_2,...,i_{m+1})=\emptyset$ if $m=0$.}
    \begin{enumerate}
        \item Suppose that firm $i_1$ is desegregated and $W^v_{i_1}<V^{(i_2,\dots,i_{m+1})}$. Then, the bargaining between the worker and firm $i_1$ breaks down, and the outcome for bargaining order $(i_2,\dots,i_{m+1})$ is realized.
        \item Otherwise,  
        \begin{enumerate}
            \item if $i_1$ is desegregated, then the worker is matched to firm $i_1$ at a wage of $w_{i_1}(t)$, with a realized payoff of $W^v_{i_1}$.
            \item if $i_1$ is segregated, then the worker is matched with firm $i_1$, with a realized payoff of 
            \begin{align}\label{inductive surplus}
                V^{(i_1,\dots,i_{m+1})}=\Delta V^{(i_2,\dots,i_{m+1})} + (1-\Delta)V^v ,
            \end{align}
            where $\Delta \in (0,1)$  is an exogenously given parameter that represents the bargaining power of the firm.  The worker earns the unique per-period wage $w\in[0,v]$ that yields the worker expected lifetime payoff $V^{(i_1,\dots,i_{m+1})}.$\footnote{To show the existence and uniqueness of such a wage, note that $V^{(i_1,\dots,i_{m+1})}\in[0,\sum_{\tau=1}^\infty \delta^{\tau-1} (1-d)^{\tau-1}v]$ because $0\leq V^{(i_1,\dots,i_{m+1})}\leq  V^{v}=\sum_{\tau=1}^\infty \delta^{\tau-1} (1-d)^{\tau-1}v,$ where the first inequality follows because both terms in \eqref{inductive surplus} are weakly positive, the second inequality follows because $V^{(i_2,\dots,i_{m+1})}\leq V^v$ by the induction hypothesis (recall that $V^\emptyset=0$), and the equality follows from \eqref{V-definition}. Therefore, the worker's expected lifetime earnings, as a function of the per-period wage $w$, is $\sum_{\tau=1}^\infty \delta^{\tau-1} (1-d)^{\tau-1}w$, which equals 0 for $w=0$, equals $V=\sum_{\tau=1}^\infty \delta^{\tau-1} (1-d)^{\tau-1}v$ for $w=v$ and is strictly increasing and continuous in $w$ over the interval $[0,v].$ Therefore, there is a unique per-period wage $w\in[0,v]$ that corresponds with expected worker lifetime earnings specified in \eqref{inductive surplus}.}
        \end{enumerate}
    \end{enumerate}
\end{enumerate}

 Now, let there be $m$ firms that are either segregated for the worker's group $g$ or unconstrained, and  let $W^v$ be as defined in \eqref{V-definition}.  Then, consider the strategy of the worker such that the worker bargains with the $m$ segregated or unconstrained firms and then bargains with a desegregated firm with wage $W^v$ if $D^v_t \neq \emptyset$. By repeated application of \eqref{inductive surplus} to this bargaining order, the worker receives an expected payoff of
\begin{align}\label{worker-surplus-solution}
V_1^v:=\Delta^{m} W^v + \left (1-\Delta^{m} \right ) V^v.
\end{align}
Note that the payoff $V_1^v$ in \eqref{worker-surplus-solution} is weakly increasing in $m$, and strictly so if and only if $W^v<V^v$.

We note that an optimal strategy for the worker is to choose a bargaining order that first lists all segregated and unconstrained firms (we take there to be $m$ such firms) and then lists a desegregated firm with wage $W^v$ if $D^v_t \neq \emptyset$. To see this, let $i$ be the first desegregated firm in the worker's bargaining order (if such a firm exists) with the property that, conditional on failing to reach a bargaining  agreement with all firms ordered before it, the worker and firm reach an agreement. Let $m' \le m$ be the number of desegregated or unconstrained firms ordered before $i$, and let  $W' \le W^v$ be the wage set by firm $i$ if such firm $i$ exists, and let $m'=m$ and $W'=0$ if no such firm $i$ exists. Then, by the same argument as before, the worker reaches an agreement with the first firm (potentially excepting desegregated firms other than $i$) and obtains expected surplus
\begin{align}
\Delta^{m'} W' + \left (1-\Delta^{m'}\right ) V^v,
\end{align}
where we note that the set of desegregated firms ordered before $i$ does not affect this surplus because they will not reach an agreement with the worker.
Because $m' \le m$ and $W' \le W^v$ by definition of $m$ and $W^v$, this expression is no larger than \eqref{worker-surplus-solution}, showing the optimality of the indicated order. Therefore, in any $t$ and in any equilibrium, we consider bargaining orders that first list all desegregated or unconstrained firms, and then (one of) the desegregated firm(s) offering $W^v$.

We highlight two equilibrium restrictions that rule out pathological strategies. First, implicit behind \eqref{V-definition}  is a ``Markovian'' assumption on firm strategies, that is, we focus on subgame perfect equilibria such that a firm's decisions are not impacted by payoff-irrelevant deviations by workers. Specifically, given any time period $t$, a firm's decision to 
segregate for some group $g$ at any time $t'>t$ or to desegregate at some wage $w$ at any time $t'>t$ is not affected by a zero-measure set of workers who deviate in their selected bargaining order in period $t$.
Second, we focus on equilibria in which worker search is affected by firms' (de)segregation choices, and not by other aspects of the history of play.  In particular, we exclude strategies in which workers coordinate such that they jointly avoid searching first at some firm $i$, as this prevents firm $i$ from hiring workers regardless of its segregation decision. To formally define our refinement, for $t\geq 0$, let $I^g_t$ be the set of unconstrained firms and firms segregated toward group $g$ by the hiring stage of period $t$, and for $t<0$ let $I^g_t=I$. We assume there is a  \emph{search intensity} $r=(r_i^g)_{i \in I, g \in \{A,B\}} \in (0,1)^{2n}$ such that at any time $t$ and any history of play, for any firm $i \in I_t^g$, and for each productivity $v$,  $\frac{r_i^g}{\sum_{i' \in I_t^g}r_{i'}^g}$ share of the entering workforce of type $(g,v)$ first elect to bargain with firm $i$. 

 In the following, we focus on pure-strategy subgame perfect Nash equilibria with the above restrictions on bargaining and search intensity (and the aforementioned ``surprise'' feature of EPSW), and refer to them simply as \emph{equilibria.}

\subsection{Results}\label{search_model_results}

In this section, we present the main results arising from our search model. Our first such result, \Cref{existence-result}, makes three points. First, for any $r\in (0,1)^{2n},$ the set of equilibria with search intensity $r$ is non-empty and in each equilibrium every newly born worker is hired.

Second, all firms eventually segregate in any equilibrium. Moreover, there is a uniform bound $T\geq 0$ such that in any equilibrium, all firms have segregated by time $T$. For any $t\geq 0$, let $n^{t}_g,$ $g\in\{D,A,B\}$ represent the number of firms that are desegregated, segregated toward $A$, and segregated toward $B$, respectively, at the end of time period $t$. Therefore, $n^{t+1}_D=n^t_D(=0),$ $n^{t+1}_A=n^t_A,$ and $n^{t+1}_B=n^t_B$ for any $t\geq T$ in equilibrium. We will denote those constant values by $n_D, n_A,$ and $n_B$, respectively. Note that because every newly born worker is hired, there is full employment for workers born after period $T$ when all firms have segregated. Thus, the only possible source of unemployment is worker displacement in periods $t\in\{0,...,T\}$ as firms make (de)segregation choices. 
These displaced workers will be ``rehired'' at rate $\rho$, and therefore, full employment is achieved for all $t$ if $\rho=1$.

Third, the long-run wage ratio moves in favor of the group toward which more firms segregate in equilibrium. For any $t$, we define the wage ratio (or log wage gap) at time $t$ to be the average wage of $A-$group workers over the average wage of $B-$group workers who are employed at the end of time $t$. In the pre-EPSW periods, in equilibrium, every worker is hired by the first firm she bargains with, and therefore both groups' wages follow \eqref{worker-surplus-solution} with $m=n$ and $W^v=0$ for all $v\in[0,1]$, which implies:

\begin{observation}\label{observation_pre_period}
    The wage ratio is $\mathbb{E}_A[v]/\mathbb{E}_B[v]$ in any period $t<0$. 
\end{observation}

We say that the wage ratio is \emph{more in favor of} group $A$ ($B$) in period $t$ compared to period $t'$ if the wage ratio is higher (lower) in period $t$ than in period $t'$. Part \ref{wage-ratio-result} of \Cref{existence-result} shows that the the key determinant of the wage ratio in the long run--that is, the wage ratio between any pre-EPSW period and any sufficiently large period since the introduction of EPSW--is the relationship between $n_A$ and $n_B$: if $n_A>n_B$ then the wage ratio is more in favor of group $A$ in the long run compared to pre-EPSW, and vice versa, and if $n_A=n_B$, then EPSW has no impact on the long-run wage ratio.

\begin{proposition} \label{existence-result} For any $r\in(0,1)^{2n}$, the set of  equilibria with search intensity $r$ is nonempty. In any equilibrium, every newly born worker is hired. Moreover, there exists $T\geq 0$ such that in any  equilibrium:
\begin{enumerate}
    \item \label{claim-segregation} $n^t_D=0$ for all $t \ge T$, and
    \item \label{wage-ratio-result}  If $n_A > n_B$, then the wage ratio is more in favor of group $A$ in any period $t \ge T$ when compared to the wage ratio in any period $t'<0.$ If $n_B > n_A$, then the wage ratio is more in favor of group $B$ in any period $t \ge T$ when compared to the wage ratio in any period $t'<0.$ If $n_A=n_B$ then the wage ratio in period $t$  converges  to the wage ratio in any period $t'<0$ as $t \to \infty$.
\end{enumerate}
\end{proposition}

Part \ref{wage-ratio-result} of \Cref{existence-result} reveals that the number of firms segregating for each group is the key determinant of EPSW's effect on the long-run wage ratio.  
While it may seem intuitive for $n_A\geq n_B$ to hold if  there are more $A-$group workers than $B-$group workers, this condition does not necessarily hold in equilibrium. Three forces can counteract this intuition. First, the minority group may be more productive than the majority: if $\beta \mathbb{E}_A[v]<\mathbb{E}_B[v]$ then the majority group is ``less valuable'' to firms than the minority group, potentially incentivizing more firms to segregate for group $B$. Second, worker search intensity may be skewed: if too many $A-$group workers search for some firm $i$ (i.e. $r_j^A$ is small compared to $\beta$ for all $j\neq i$) while $B-$group workers search more equitably across firms, any firm $j\neq i$ that segregates in favor of group $A$ hires only a small fraction of workers if firm $i$ also segregates toward group $A$, or is unconstrained. Thus, skewed search intensity could incentivize more firms to segregate for group $B$. Third, firms may face high reallocation frictions: if $d$ is low, then firms retain a large fraction of their workforce across periods implying that they have ``more to lose'' in competition with other firms. Therefore,  a firm may segregate toward group $B$ in equilibrium.
 
Therefore, a policy-relevant consideration is under what conditions $n_A>n_B$ holds. Indeed, we note a near-ubiquitous empirical pattern in \Cref{mapping theory section}---more than 94\% of labor markets feature more firms segregating toward the gender that makes up a majority of the workforce post-EPSW---suggesting that there are realistic cases under which equilibria feature more segregation toward the majority group.
\Cref{high d result} below considers equilibria with \emph{equitable search} defined as an equilibrium with search intensity $r_i^g=1$ for every $i\in I$ and $g\in\{A,B\}$, and makes two points. First, given small reallocation frictions, i.e. large $d$, a near equal profit condition arises whereby all firms receive similar profit for sufficiently large $t$, approximating the equal profit condition that arises in any equilibrium of our static game. To define what we mean by ``near equal profit''  let $x_A,x_B \in [0,n_C]$ be real numbers such that 
    \begin{align}
    \frac{\Delta^{x_A+n_U}}{x_A+n_U} \cdot \beta \mathbb{E}_A[v] & =    \frac{\Delta^{x_B+n_U}}{x_B+n_U}\cdot \mathbb{E}_B[v], \label{search-equal-profit-condition} \\
    x_A+x_B & = n_C\label{accounting-equality} \end{align}
if such numbers exist.  If  the left-hand side of \eqref{search-equal-profit-condition} is larger than the right-hand side for every $x_A\in[0,n_C],$ then let $x_A=n_C$. If the left-hand side of \eqref{search-equal-profit-condition} is smaller than the right-hand side for every $x_A\in[0,n_C],$ then let $x_B=n_C.$ If $n_A=x_A$, then \eqref{worker-surplus-solution} implies that all firms will earn equal profit from workers born in periods $t\geq T$, where $T$ is the time period after which all firms have segregated in equilibrium. If $x_A$ is not an integer, then it is impossible to have exactly $n_A=x_A$ but if $n_A$ is ``close to'' $x_A$, then all firms will earn approximately equal profit from workers born in periods $t\geq T$.  \Cref{high d result} shows  that for sufficiently large $d$ (i.e. for any sufficiently large values of $d$ that depend on other parameters but not on the selection of a specific equilibrium)  in any equilibrium with equitable search, $n_A$ is ``close to'' $x_A$ (Part \ref{near-equal-profit-claim}).\footnote{We note that $x_A$ is not a function of $d$.} If  $\beta \mathbb{E}_A[v]>\mathbb{E}_B[v]$ also holds, then $A-$group workers are collectively more valuable than $B-$group workers, implying by \eqref{search-equal-profit-condition} that $x_A>x_B,$ and we show that $n_A\geq n_B$ must hold (Part \ref{segregation-pattern-claim}).

\begin{proposition}\label{high d result} There exists $d^*<1$ such that for all $d>d^*$, and in any equilibrium with equitable search:
\begin{enumerate}
    \item $n_A \in [x_A-1,x_A+1]$ and $n_B \in [x_B-1,x_B+1]$.\label{near-equal-profit-claim}
    \item Let $\beta \mathbb{E}_A[v] > \mathbb{E}_B[v].$ Then $n_A \ge n_B$.\label{segregation-pattern-claim}
\end{enumerate}        
         \end{proposition}

\begin{remark}\rm
As previously discussed in \Cref{sec:static model}, we have empirical evidence to support  that $\beta \mathbb{E}_A[v]>\mathbb{E}_B[v]$ holds. Of course, it is likely not the case in reality that workers search across firms with the exact same intensity.  Without the equitable search assumption, the conclusion of Part \ref{segregation-pattern-claim} of \Cref{high d result} holds under more stringent conditions. Specifically,  an essentially identical argument as that presented in the Proof of \Cref{high d result} shows that: For any $r\in(0,1)^{2n}$, there exist $d^*<1$ and $\beta^*>0$ such that for all $d>d^*$ and all $\beta>\beta^*$, any equilibrium with search intensity $r$ yields $n_A>n_B.$
\end{remark}

Another empirically motivated question is to what extent firms that are ``nearly''  segregated upon EPSW enactment will quickly switch to segregation. In our model, the fraction of the employed workforce at each firm at $t=0$ is driven by search intensities. For distinct $g,g'\in\{A,B\}$, define $R^i_g:=r_i^{g}/r_i^{g'}$ to be the ratio of search intensities. Then a firm will be ``nearly'' segregated toward group $g'$ at the beginning of period $0$ if $R^i_g$ is sufficiently low. The following remark presents conditions under which ``nearly'' segregated firms elect to segregate sooner than all other firms. When the discount rate is sufficiently low, firms become myopic, meaning that the cost of remaining desegregated--separating with some existing workers in both groups, increasing the pay of almost all other employed workers, and being unable to hire new workers--exceeds the costs of segregating--separating with a small measure of employed workers from the ``wrong'' group $g$. The proof is straightforward, and therefore omitted.

\begin{proposition}\label{near seg prop}
        There exist $\delta^*\in(0,1)$ and $R^*\in(0,\infty)$ such that if the discount factor is $\delta<\delta^*$ then in any equilibrium with search intensity $r$, any unconstrained firm $i$ with $R^i_g<R^*$ for some $g\in\{A,B\}$ will segregate toward group $g'\neq g$ at $t=0$.
\end{proposition}

\section{Empirical Analysis}\label{empirical section}

\subsection{Institutional Background}

Chile is an OECD country with nearly 20 million inhabitants. The Chilean labor market is relatively concentrated in the formal employment sector; the informal labor market share around the time of the policy was 25\%,  the lowest in Latin America \citep{gasparini}.  Only 10\% of the (formal) workforce is
unionized, and only union members are covered by collective bargaining 
agreements, implying that Chilean firms plausibly have a high degree of wage-setting power. Workers can be fired without cause and without notice at the cost of one month's wages.

The gender wage gap in Chile is similar to that in the United States. Chilean female workers earn 18--23\% less than their male counterparts \citep{chilewagegap}. Female labor force participation was roughly 30\% in 2009, which is lower than in many OECD and other Latin American countries \citep{femalelaborforceparticipation}.

\subsection{EPSW implementation}

In June 2009, Law 20.348 was signed as an amendment to Chile's labor code with the General Secretary of the Chilean Senate declaring, ``The main objective of the initiative is to establish the right to equal remuneration between men and women for the provision of services of similar value.'' We refer throughout to June 2009 as the time of announcement.
The law took effect in November 2009, which we refer to as the time of enactment. 
An important part of the discussion and debate surrounding the law was providing a definition of ``similar work.''
The law specifies that a firm cannot pay a man and a woman different wages for ``arbitrary reasons''; pay differences across genders are allowable only if workers fall into different coarse categories based on skills, qualifications, suitability, responsibility, or productivity.  Firms that do not comply are subject to sizable monetary fines per offense, as we discuss below. The law also establishes a 10\% discount for any other labor fines a firm is subject to if it pays men and women the same wages for ``similar jobs and responsibilities.'' We therefore classify the law for our purposes as (a gender-based) EPSW.\footnote{Guideline 1187/018 published in April 2010 by the Directorate of Labor clarifies that 1) the law does not bind within gender group, and 2) that a firm paying even a single man more than a single woman (or vice versa) despite both performing similar work is in violation of the law.
}

The law has different consequences for firms of different sizes, based on the number of a firm's workers with long-term employment contracts.\footnote{The vast majority of workers in Chile have either long-term contracts (no end date is specified ex ante) or fixed-term contracts (an end date is specified ex ante, although such contracts are automatically transitioned into long-term contracts if the worker continues to be employed beyond the contract end date). EPSW  protections for workers of different contract types within the same firm are identical. Nearly half of firm-months in our dataset contain no workers with fixed-term contracts.} Firms with 10 or more long-term workers 
are required to explicitly have a grievance procedure for gender-based pay discrimination. Workers in firms above this threshold who allege the firm has violated Law 20.348 must receive a sufficient response from the firm within 30 days. If no such response is received, the worker can file a complaint at the Labor Inspection Office or can directly raise the issue with a labor court. Financial penalties also differ per infraction by firm size. Firms with 10--49 long-term workers found to violate the law are subject to a fine of 69-1,384 USD per worker-month of violation, 
while firms with fewer than 10 long-term workers are not subject to a financial penalty.
\footnote{\cite{cruzrau2022} further discuss how the law imposed mandatory transparency guidelines on worker roles in the firm, and additional fines for violations, for firms with at least 200 long-term workers. They show that the disclosure policy reduced the gender wage gap through a bargaining channel. Our analysis avoids firms treated by this additional policy, due to the potential confounding equilibrium effects of transparency policies \citep[for further discussion on potential equilibrium effects, see][]{CPH}.}

Initial evidence suggests that the law was both widely known to workers, and enforced. In a 2013 governmental survey,\footnote{See \url{https://www.evaluaciondelaley.cl/wp-content/uploads/2019/07/ley_20348_igualdad_remuneraciones.pdf}.} 11\% of respondents stated that they know someone who had complained using the law. Through a public-records request, we found that 9,577 complaints were filed by workers alleging violations of Law 20.348, 9,723 inspections were carried out by the government, and that 489 individual firms were punished. The average fine amount was 2,466 USD per firm (each worker-month of unfair pay is a separate violation of the law).

\subsection{Data}
 We study the effects of EPSW using matched worker-firm administrative data from the Chilean unemployment insurance system from January 2005 to December 2013 \citep{datacitation}. We observe a random sample of firms, stratified by size, totaling roughly 4\% of all firms. In our data, an observation is a worker-firm-month. For each observation, the data include worker pay\footnote{Worker monthly pay at each firm is top coded in our data. The threshold for top coding varies over time; in June 2009, the top-code threshold was roughly \$3,550. The share of observations in our data that are top coded in June 2009 is 0.07\%.} and  demographic information including gender, education level, contract status, age, and marital status; additionally, we observe the firm's geographic location and industry code. We observe the entire employment history of each worker ever employed at a sampled firm. We discuss further details of our dataset in  \Cref{filtering}.

Firms in our sample are typically small, with a median of 7 concurrent workers. However, there are outliers. 
Following \cite{BSTW2018,gulyas,duchini,boheim} we only consider firms of similar sizes at the time of policy announcement in order to limit size-based wage dynamics. In our main specifications, we consider firms with at least 6 and no more than 13 total workers at announcement, which account for 41\% of all active firms. 

In Table \ref{tab:stats_samples}, we present descriptive statistics for (column I) the set of firms prior to the size restriction, (column II) the set of firms following the size restriction,  (column III) the set of firms with that will form our control group, and (column IV) the set of firms following the size restriction that are active over our entire period of analysis.

\begin{table}[ht] \centering
\addtolength{\leftskip} {-2cm}
    \addtolength{\rightskip}{-2cm}
    \caption{Descriptive Statistics}
\label{tab:stats_samples}
   \begin{tabular}{@{}lcccc@{}}
    \hline \hline
     & \multicolumn{1}{p{2cm}}{\centering (I)}
        & \multicolumn{1}{p{2cm}}{\centering (II)}
        & \multicolumn{1}{p{2cm}}{\centering (III)}
        & \multicolumn{1}{p{2cm}}{\centering (IV)}  \\
        & \multicolumn{1}{p{2cm}}{\centering All firms}
        & \multicolumn{1}{p{2cm}}{\centering Estimation Sample}
        & \multicolumn{1}{p{2cm}}{\centering Estimation Sample, Control}
        & \multicolumn{1}{p{2cm}}{\centering Estimation Sample, Balanced}  \\ \hline
        \textit{Panel (A): Employees}                         &       &       &       &             \\
        $\quad$Average age in June 2009                     & 32.418   & 32.180   & 32.186   & 32.123          \\
        $\quad$Share with tertiary education                & 0.124   & 0.124   & 0.120   & 0.128          \\
        $\quad$Share male                                   & 0.693   & 0.734   & 0.736   & 0.732          \\
        $\quad$Share married                                & 0.341   & 0.337   & 0.336   & 0.336          \\
        $\quad$Share with residence in Santiago Region      & 0.369   & 0.389   & 0.380   & 0.391          \\
        $\quad$Share in female majority industry-county in June 2009 & 0.174   & 0.154   & 0.155   & 0.155           \\
        $\quad$Number of workers                            & 955,602  & 473,991  & 435,485  & 311,046             \\ \hline
        \textit{Panel (B): Firms}                             &       &       &       &              \\
        $\quad$Average size in June 2009                    & 12.207   & 8.685   & 8.303   & 8.754           \\
        $\quad$Share in Santiago Region                     & 0.484   & 0.553   & 0.539   & 0.567          \\
        $\quad$Share in Agriculture                         & 0.101   & 0.091   & 0.095   & 0.100           \\
        $\quad$Share in Manufacturing                       & 0.088   & 0.132   & 0.126   & 0.165           \\
        $\quad$Share in female majority industry-county in June 2009 & 0.137   & 0.149   & 0.153   & 0.162           \\
        $\quad$Number of firms                              & 25,436  & 6,551  & 5,791  & 3,418             \\
    \hline \hline
    \end{tabular}

    \begin{minipage}{1.3\linewidth} \footnotesize
        \vspace{2mm}
    Notes: This table displays summary statistics for the different samples used in the paper. The unit of our panels is the worker-firm-month. In Panel (A), we display figures about the workers present in our data. In Panel (B), we display figures about the firms in our data. In column I, we display figures corresponding to all firms. In column II, we display figures for the data in column I restricting to (workers of) firms that employed between 6 and 13 workers in June 2009. In column III, we display figures for the data in column I restricting to (workers of) firms that employed between 6 and 9 workers in June 2009. In column IV, we display figures for the data in column II after further restricting to (workers of) firms that are in our data in every month between January 2005 and December 2013. Column II is the sample we use in the majority of our analysis.
    \end{minipage}
\end{table}

\subsection{Identification Strategy: Mapping Theory into Empirics }\label{mapping theory section}
Based on our theoretical findings, we investigate the effect of EPSW on gender segregation within firms and on the differential pay of men and women. We make several conservative choices, which we describe presently, in an attempt to provide a plausible, causal lower bound on the effect of EPSW on our outcomes of interest.

To obtain the causal effect of EPSW on our outcomes of interest, we consider an event-study analysis based on firm size \emph{at the time of policy announcement}; firms are considered ``treated'' if they were subject to EPSW at announcement (i.e. the firm in question employed at least 10 long-term workers in June 2009) and ``control'' otherwise.\footnote{Following our analysis in \Cref{section:search-model}, the effects of EPSW on a firm are predicted to 
be smaller in magnitude for untreated firms; they are unconstrained by EPSW, obviating the benefits of segregation, and thus also avoiding changes in wages for  workers who were hired prior to EPSW. We discuss this point further in \Cref{app_treatment_designation}.} This yields an ``intent-to-treat'' framework as we do not adjust our designation of firm treatment status based on firm choices (e.g. to fire workers to escape the bite of EPSW).

There are two standard assumptions for our approach to yield causal estimates of EPSW. First, we assume the policy is not anticipated prior to its announcement. Since we use the time of policy announcement to designate treatment status, 
``no anticipation'' is plausible in our context. However, the choice to use the time of announcement to determine treatment status serves to attenuate 
our estimates because some firms in our ``treatment'' group will not be bound by EPSW and some firms in our ``control'' group will be in future periods. Therefore, this decision likely contributes to our analysis understating the effect of EPSW on increased segregation rates and wage gaps. We present and discuss supportive evidence for our designation of treatment status in \Cref{app_treatment_designation}. \Cref{app_treatment_designation} also presents ``placebo'' tests which do not find statistically or economically meaningful effects at alternative firm size thresholds, supporting that the observed effects of EPSW around the size threshold specified in the law are plausibly causal. Second, we assume that outcomes of treated versus control units follow parallel trends in the absence of EPSW. As discussed above, treatment status is plausibly as good as random, and we show balance between treatment and control firms across several observable dimensions in \Cref{tab:stats_samples} which makes the parallel trends assumption plausible in our context. In our analysis below, we discuss a partial validation of this assumption by showing a lack of pre-trends between our control and treatment groups.

Our theoretical analyses predict an increase in segregation for firms that are treated by EPSW. Importantly, our theoretical predictions hold within sets of ``similar'' workers, which leaves open the possibility that a firm could be segregated within, but not across, roles. For example, a firm could be segregated for the purposes of EPSW if all of its custodians are men and all of its lawyers are women. We do not observe which workers perform ``similar'' work within a firm. Therefore, we define our segregation measure as complete firm-level gender segregation, i.e. only workers of one gender are employed by the firm. This is intuitively the strongest measure of segregation, implying the gender segregation of every set of similar workers within the firm, but it potentially misses firms that are gender segregated by, but not across, roles. Therefore, the causal effect of EPSW on segregation that we present is plausibly a lower bound. In \Cref{alt_comparison} we discuss how wage effects in  non-fully-segregated firms are consistent with the presence of segregation by role.

We also predict EPSW causes a divergence in wages across groups, depending on features of the local labor market (LLM) in question. An important question is, therefore, how we define LLMs. In our dataset, there are 21 industry codes and 16 geographic regions, which are further disaggregated into 321 counties. We define the LLM a firm operates in by the firm's geographic county and industry code pair. We augment our upcoming analysis by allowing wages to trend differently for workers in different LLMs, capturing aggregate changes in productivity for workers in different sectors of the economy.\footnote{As we discuss in the following pages and in  \Cref{alt samples}, our findings are qualitatively similar under alternative time trends, corresponding to more aggregated designations of labor markets.} 

\Cref{two-sided-EPL-equilibrium} finds that EPSW increases the wage gap in favor of the majority group, and that it increases the log wage gap in favor of the   ``more productive'' group (recall that group $A$ is collectively more productive if $\beta \mathbb{E}_A[v]>\mathbb{E}_B[v]$). We identify the majority gender in a LLM at the time of policy announcement, and we find evidence that the condition $\beta \mathbb{E}_A[v]>\mathbb{E}_B[v]$ holds across LLMs. Specifically, and as we discuss at the end of \Cref{sec:static model}, our upcoming regression analysis predicts the (residual) log wage gap pre-policy---which equals $\mathbb{E}_A[v]/\mathbb{E}_B[v]$, see Observation \ref{observation_pre_period}---is greater than one across male-majority LLMs, and is 0.994 across female-majority LLMs. The desired condition is satisfied in male-majority LLMs, as  $\beta>1$ by virtue of there being more male than female workers, and it is also satisfied in female-majority LLMs where $\beta=1.31$.  In our search model, \Cref{existence-result} finds that EPSW increases the log wage gap in favor of the group for which more firms have segregated. After the introduction of EPSW, 94\% of LLM-by-month units feature more firms segregating toward the majority group in the LLM. Taken together, these figures imply that the conditions of our two theoretical models are met and are consistent with one another: they predict the wage gap moves in favor of the majority group of workers within a LLM after the onset of EPSW.

\subsection{Effect of EPSW on Segregation}

To study the effect of EPSW on gender segregation within the firm, we consider a panel in which an observation is a firm-month. We let $j$ index a firm and let $t$ index a month. We construct a full segregation indicator, $full_{jt}$, that equals 1 in time $t$ if all workers employed by firm $j$ are of the same gender at time $t$, and 0 otherwise. We also construct an indicator for whether the firm employs at least 10 long-term workers in June 2009 (policy announcement date), which we call $above10_j$. Finally, we construct a post-treatment indicator $post_t$ that is 1 if the time $t$ of the observation is from June 2009 or later, and zero otherwise. We estimate difference-in-difference models of the following form:
\begin{align}
full_{jt} = \alpha_j + \alpha_{k(j)t} + \beta^{seg} (above10_j \times post_t) + X_{jt}\Lambda + \epsilon_{jt} \label{emp_eq1}
\end{align}
\noindent where $X_{jt}$ is a vector of controls indicating the share of workers (strictly) younger than the median age in the industry-region, the share of workers with tertiary education, the share of workers that have long-term contracts, and the total wages paid by the firm. $\alpha_j$ is a fixed effect for firm $j$, and  $\alpha_{k(j)t}$ are time-varying fixed effects.  These time-varying fixed effects compare the firm's segregation outcome in a given month to the average level of segregation among firms within the same ``comparison group'': in our baseline specification, time-varying fixed effects are at the level of 1) the firm's year of exit from the data panel by the firm's industry by the firm's region, and separately, 2) the firm's year of exit from the panel by the firm's county.\footnote{We include an interaction with the firm's exit year from the panel to control for potential cohort effects leading firms facing the similar labor market conditions to shut down.} 
We provide results under alternative time trends (i.e. different ``comparison groups'') in \Cref{alt_comparison}, and in \Cref{alt samples} we discuss alternative specifications and data samples.  
Our coefficient of interest is $\beta^{seg}$, and we interpret it as the effect of the policy on the share of gender-segregated firms.

To understand more about the dynamic effects of EPSW, we consider a difference-in-differences model period by period, where we define each period to be a half year (January-June, July-December). We label the period immediately prior to policy announcement as period -1 and omit it as the reference period. Therefore, the set of non-reference periods included in our analysis is $\mathcal{T}:=\{-8,...,-2,0,...,9\}$. By construction, each period (indexed by $\tau$) is composed of six months (indexed by $t$).
\begin{align}
full_{jt} = \alpha_j + \alpha_{k(j)t} + \sum_{\tau\in \mathcal{T} }^{} \beta^{seg}_\tau D_{jt} + X_{jt}\Lambda + \epsilon_{jt} \label{emp_eq2}
\end{align}
\noindent where $D_{jt}$ is an indicator that equals 1 in time period  $t$ if  firm $j$ employs at least 10 long-term workers at policy announcement, and zero otherwise. $\beta^{seg}_\tau$ is the average difference in segregation between treated and control firms in period $\tau$ (relative to period -1).

Our identifying assumption is that parallel trends hold between treated and control firms. 
That is, $\mathbb{E}[\epsilon_{jt}\cdot D_{jt}]=0$ for all $t$. This strategy builds in a partial falsification test, in that we expect coefficient estimates of $\beta^{seg}_\tau$ to be zero for all $\tau<0$.

Panel (A) of Table \ref{emp_table_seg_robust} presents estimates on the effect of EPSW on segregation. Column I presents our baseline results from  \eqref{emp_eq1}. We find a 4.41 percentage point increase in segregation following EPSW, from a baseline of 34.3\% of firms that were fully segregated at  EPSW announcement. Columns II-VI present results on segregation from alternative empirical specifications and alternative sample selections.  Across all specifications, our point estimate of $\beta^{seg}$ is positive, indicating that EPSW led to an increase in segregation; in all but one specification, this increase is statistically significant at conventional levels. 
While we cannot reject the null hypothesis that the coefficient of interest in the ``doughnut hole'' sample  is zero at conventional levels, the p-value for the test of this null  equals $0.175$; also, under the assumption that the coefficient of interest across our baseline and doughnut hole samples are independently distributed, we fail to reject the null hypothesis that the effect of EPSW on the doughnut hole sample is equal to the effect on our baseline sample at conventional significance levels.  Additionally, we present the mean of the pre-treatment estimates (i.e. for $\tau<0$) from \eqref{emp_eq2} to support our parallel trends assumption. Across specifications, these estimates  are small and statistically insignificant.  

Panel (A) of Figure \ref{emp_fig1} displays the estimated coefficients of interest from \eqref{emp_eq2}. 
Prior to EPSW, the coefficient of interest is statistically indistinguishable from 0 in all periods $\tau<0$. In $\tau=0$ segregation rises by 1.75 percentage points (SE = 0.0079) in the treated group compared to the control group and rises to 4.96 percentage points (SE=0.0185) by $\tau=9$. 

To better understand how EPSW affects firm incentives to segregate, we consider its effect on the share of firms that are mostly-but-not-fully segregated. Our model makes strong predictions that firms bound by EPSW will have incentives to fully segregate in equilibrium, but not that there are particular incentives to partially segregate. 
Indeed, \Cref{near seg prop} suggests that treated firms are unlikely to be mostly-but-not-fully segregated post EPSW. That is, firms that would have otherwise had only a small number of workers of the ``wrong'' gender that prevent full segregation may be particularly likely to end their relationship with these workers to achieve full segregation. Therefore, the economic forces present in our model likely indicate a decrease in the share of treated firms that are almost-but-not-fully segregated after EPSW.

\begin{table}[ht]
\centering
\addtolength{\leftskip} {-2cm}
    \addtolength{\rightskip}{-2cm}\caption{Effect of EPSW on Segregation}\label{emp_table_seg_robust}
{
\def\sym#1{\ifmmode^{#1}\else\(^{#1}\)\fi}
\begin{tabular}{l*{7}{c}}
\hline\hline
               &\multicolumn{1}{c}{(I)}&\multicolumn{1}{c}{(II)}&\multicolumn{1}{c}{(III)}&\multicolumn{1}{c}{(IV)}&\multicolumn{1}{c}{(V)}&\multicolumn{1}{c}{(VI)}&\\
            &\multicolumn{1}{c}{Baseline}&\multicolumn{1}{c}{No firm}&\multicolumn{1}{c}{No}&\multicolumn{1}{c}{Doughnut}&\multicolumn{1}{c}{Firm}&\multicolumn{1}{c}{Balanced} \\
            &\multicolumn{1}{c}{}&\multicolumn{1}{c}{FEs}&\multicolumn{1}{c}{controls}&\multicolumn{1}{c}{hole}&\multicolumn{1}{c}{growth}&\multicolumn{1}{c}{sample} \\

\hline
\textit{Panel (A): Dep. var. ``Full Segregation''}\\
\quad ($\hat \beta^{seg}$) Post $\times$ Treated &     $0.0441$  & $0.0438$  & $0.0385$ & $0.0233$        & $0.0394$  & $0.0349$   \\
                      & $(0.0133)$      & $(0.0136)$      & $(0.0134)$     & $(0.0171)$      & $(0.0140)$      & $(0.0161)$      \\\\
\quad Mean Pre-Treatment    &    $-0.0014$           &  $-0.0040$            & $0.0030$    &   $-0.0055$                 &    $-0.0024$        &      $-0.0110$              \\
 &  $(0.0118)$              &   $(0.0113)$     &  $(0.0118)$           &     $(0.0158)$                        & $(0.0127)$            &    $(0.0130)$            \\
\hline
\textit{Panel (B): Dep. var. ``Near Segregation''}\\
\quad  ($\hat \beta^{nearseg}$) Post $\times$ Treated                                                           & $-0.0402$ & $-0.0371$ & $-0.0350$ & $-0.0300$   & $-0.0344$  & $-0.0344$  \\
                      & $(0.0139)$      & $(0.0139)$      & $(0.0140)$     & $(0.0176)$      & $(0.0146)$      & $(0.0165)$      \\\\
                      \quad Mean Pre-Treatment    &   $0.0047$            &      $0.0048$        &   $0.0003$  &    $0.0190$                &    $0.0051$       &       $0.0106$             \\
 &      $(0.0132)$          &  $(0.0123)$      &   $(0.0130)$           &      $(0.0175)$                  &    $(0.0138)$          &    $(0.0152)$             \\
\hline
Number of Firms                 &     5312           &    5312             &   5312           &   4660          &   4578         &     3311        \\
Number of Observations          &     $515361$          &    $515363$  & $515361$         &    $452638$                   &   $457649$         &      $357588$              \\
\hline\hline
\textit{Fixed effects}                     &       &   &   &   &  &          \\
\quad Firm                              & Yes & No  & Yes  & Yes & Yes & Yes  \\
\quad Month$\times$exit year$\times$county              & Yes  & Yes & Yes & Yes & Yes & Yes \\
\quad Month$\times$exit year$\times$region$\times$industry              & Yes  & Yes & Yes & Yes & Yes & Yes \\
\quad Firm-month level controls            & Yes  & Yes  & No & Yes & Yes  & Yes \\
\hline\hline
\end{tabular}
}
 \begin{minipage}{1.31\textwidth} \footnotesize
        \vspace{2mm}
       Notes:  The unit of analysis is the firm-month. In panel (A) the dependent variable is a binary variable that equals 1 if and only if all workers at the firm in question are of a single gender in a given month. In panel (B) the dependent variable is a binary variable that equals 1 if and only if the share of the majority gender of workers at firm $j$ at time $t$ is an element of $[.8,1).$ Across panels, column I presents estimated coefficients  $\hat \beta^{seg}$ and $\hat \beta^{nearseg},$ respectively, for our baseline difference-in-differences regression presented in \eqref{emp_eq1}. Firm-month controls included are: the share of workers younger than the median age in the industry-region, the share of workers with tertiary education, the share of workers that have long-term contracts,  and the total wages paid by the firm. The mean pre-treatment effect is the mean of $\hat \beta^{seg}_\tau$ ($\hat \beta^{nearseg}_\tau$) for $\tau<0$ calculated from \eqref{emp_eq2}. Columns II-VI present the analogous information for alternative sample selections and empirical specifications as described in \Cref{alt samples}: column II removes firm fixed effects, column III removes the vector of controls, column IV drops firms with 9 or 10 workers at announcement from our baseline sample, column V interacts time-varying fixed effects with quartiles of firms growth in the 6 months prior to policy announcement, and column VI drops firms that are not present in every month of our panel. Throughout, standard errors in parentheses are two-way clustered at the firm and month levels.
    \end{minipage}
\end{table}

We reanalyze our firm-based analysis with a different dependent variable: ``near'' segregation. Specifically, we define a firm $j$ to be nearly segregated at time $t$ if the share of workers in the majority gender of its workforce is in the interval $[0.8,1).$ Note that $j$ is classified as ``nearly'' segregated at time $t$ only if it is not fully segregated at time $t$. We select 0.8 as the lower end of the interval for the definition of this outcome variable due to our size restrictions; firms in our sample typically are nearly segregated only in time periods in which they employ either 1 or 2 workers of the non-majority gender; however, our findings are robust to other selections of the lower end of the range. We re-estimate  \eqref{emp_eq1} and \eqref{emp_eq2} with near segregation as the outcome variable and present these results in Panel (B) of Table  \ref{emp_table_seg_robust} and Panel (B) of Figure \ref{emp_fig1.5}, respectively. We refer to the associated coefficients of interest as $\beta^{near seg}$ and $\beta^{near seg}_\tau$, respectively.

The findings presented in Panel (B) of Table \ref{emp_table_seg_robust} support our hypothesis on the effects of EPSW on near segregation. Each column presents a specification corresponding to the same column in Panel (A). Our baseline specification in column I reveals that EPSW lowers the share of nearly segregated firms 4.02 percentage points following EPSW. 
Across all columns, these estimates are similar in magnitude (and opposite in sign) to the increase in the share of fully segregated, treated firms following EPSW (see Panel (A)). All estimates are statistically significant at conventional levels. This ``missing mass'' of firms with near-but-not-full segregation suggests that firms that face relatively lower costs of fully segregating are the most likely to do so within our window of analysis. Panel (B) of Figure \ref{emp_fig1.5} shows the dynamic path of near segregation following EPSW. The pre-treatment coefficients are not statistically different from 0 in any period $\tau< 0$.

\begin{figure}[ht]\caption{Dynamic Path of EPSW's Effect on Segregation \label{emp_fig1} \label{emp_fig1.5}}    
\makebox[\textwidth][c]{
\vspace{3mm}
\centering
\begin{minipage}[c][1\totalheight][t]{0.7\textwidth}\center{\scriptsize{(A): Full Segregation}}
\begin{center}
\includegraphics[width=0.9\textwidth]{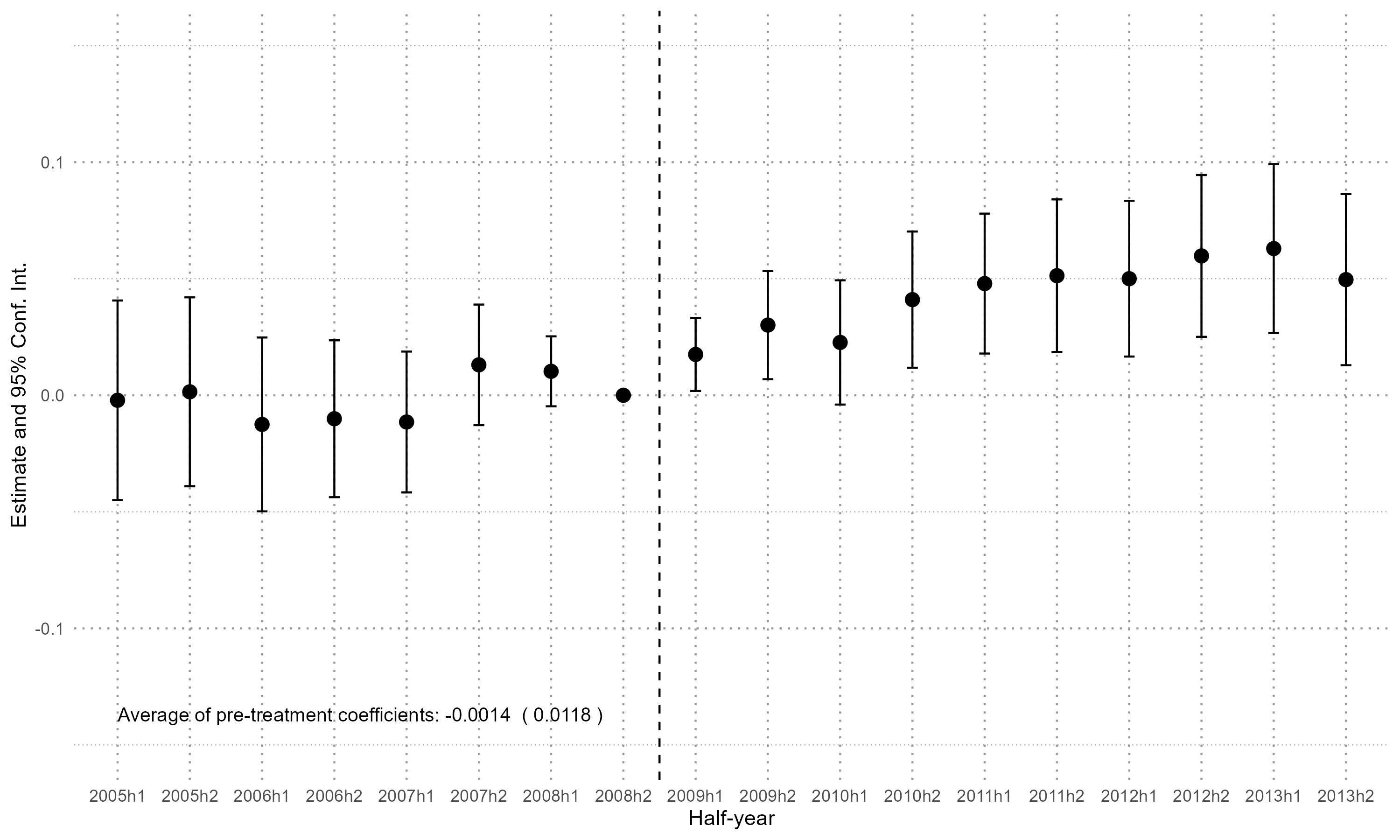}
\end{center}
\end{minipage}\hspace{-1cm}
\begin{minipage}[c][1\totalheight][t]{0.7\textwidth}\center{\scriptsize{(B): Near Segregation}}
\begin{center}
\includegraphics[width=0.9\textwidth]{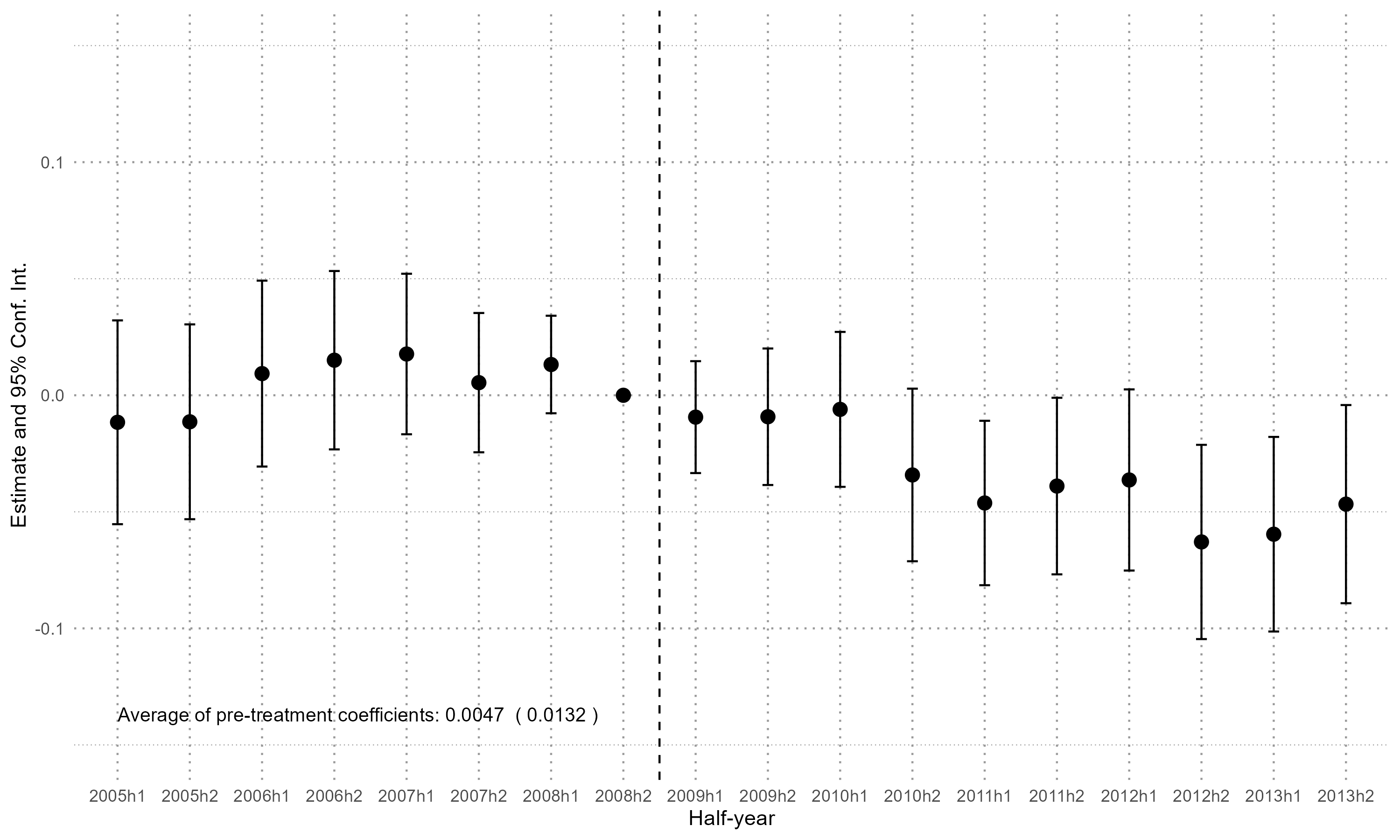}
\end{center}
\end{minipage}}
    \begin{minipage}{.999\textwidth} \footnotesize
        \vspace{2mm}
       Notes: This figure displays the estimated coefficients $\hat \beta_\tau$ for the difference-in-differences regression described in \eqref{emp_eq2}. In Panel (A) the outcome variable is ``full'' segregation. In Panel (B) the outcome variable is ``near'' segregation, i.e. the dependent variable equals 1 if and only if the share of the majority gender of workers at firm $j$ at time $t$ is an element of $[.8,1).$ In both panels, bars depict 95\% confidence intervals.
    \end{minipage}
\end{figure}

\Cref{near seg prop} predicts that firms that are initially nearly segregated are the most likely to segregate within our window of analysis. While near-segregation at announcement is not randomly assigned, we can test to see whether this predicted association is present in our data. To do so, we add an additional independent variable (and its relevant interactions) to \eqref{emp_eq1} which captures whether a firm is nearly segregated at announcement. Specifically, we construct an indicator, $NSA_{j}$, that equals 1 if the share of workers in the majority gender of firm $j$'s workforce \emph{at announcement} is in the interval $[0.8,1),$ and zero otherwise.  We estimate difference models of the following form:
\begin{align}
    full_{jt} = &~  \alpha_j + \alpha_{k(j)t} + \gamma_1 (NSA_j \times post_t) \notag\\
    & + \beta^{seg} (above10_j \times post_t) + \beta^{NSA} (above10_j \times post_t \times NSA_j) + X_{jt}\Lambda + \epsilon_{jt} \label{emp_eqNSA} 
\end{align}
\noindent Our coefficients of interest are $\beta^{seg}$ and $\beta^{NSA}$. We interpret $\beta^{seg}$ as the association between the policy and the share of gender-segregated firms among those that are not nearly segregated at announcement, and $\beta^{seg} + \beta^{NSA}$ as the association between the policy and the share of gender-segregated firms among those that are nearly segregated at announcement.

Table \ref{emp_table_seg_by_near_seg} presents estimates on the association between EPSW and segregation by near segregation status at announcement. Column I presents our baseline results from  \eqref{emp_eqNSA}. We find a 2.78 percentage point increase in segregation following EPSW for firms that are not nearly segregated at announcement, and a 8.71 percentage point increase in segregation for firms that are nearly segregated at announcement. Columns II-VI present results from alternative empirical specifications and alternative sample selections. The same pattern holds across all specifications: the estimated coefficient for firms that are nearly segregated at announcement is roughly three times larger than the estimated coefficient for firms that are not nearly segregated at announcement, suggesting that firms that are nearly segregated prior to EPSW are the most likely to segregate post EPSW.

\begin{table}[ht]
\centering
\addtolength{\leftskip} {-2cm}
    \addtolength{\rightskip}{-2cm}\caption{Effect of EPSW on Segregation, by Nearly Segregated at Announcement}\label{emp_table_seg_by_near_seg}
{
\def\sym#1{\ifmmode^{#1}\else\(^{#1}\)\fi}
\begin{tabular}{l*{7}{c}}
\hline\hline
               &\multicolumn{1}{c}{(I)}&\multicolumn{1}{c}{(II)}&\multicolumn{1}{c}{(III)}&\multicolumn{1}{c}{(IV)}&\multicolumn{1}{c}{(V)}&\multicolumn{1}{c}{(VI)}&\\
            &\multicolumn{1}{c}{Baseline}&\multicolumn{1}{c}{No firm}&\multicolumn{1}{c}{No}&\multicolumn{1}{c}{Doughnut}&\multicolumn{1}{c}{Firm}&\multicolumn{1}{c}{Balanced} \\
            &\multicolumn{1}{c}{}&\multicolumn{1}{c}{FEs}&\multicolumn{1}{c}{controls}&\multicolumn{1}{c}{hole}&\multicolumn{1}{c}{growth}&\multicolumn{1}{c}{sample} \\

\hline
($\hat \beta^{seg}$) Post $\times$ Treated                                                           & $0.0278$      & $0.0330$      & $0.0212$      & $0.0186$      & $0.0268$      & $0.0203$      \\
                                                                               & $(0.0144)$    & $(0.0148)$    & $(0.0145)$    & $(0.0181)$    & $(0.0152)$    & $(0.0175)$      \\
                      ($\hat \beta^{seg}+\hat\beta^{NSA}$) Effect for Firms                                                   & $0.0871$      & $0.0810$      & $0.0836$      & $0.0418$      & $0.0711$      & $0.0782$      \\
           \quad Nearly Segregated at Announcement             & $(0.0267)$    & $(0.0269)$    & $(0.0271)$    & $(0.0355)$    & $(0.0278)$    & $(0.0320)$    \\
 \hline
Number of Firms                 &     5312           &    5312             &   5312           &   4660          &   4578          &     3311        \\
Number of Observations          &     $515361$          &    $515363$  & $515361$         &    $452638$                   &   $457649$         &      $357588$              \\
\hline\hline
\textit{Fixed effects}                     &       &   &   &   &  &          \\
\quad Firm                              & Yes & No  & Yes  & Yes & Yes & Yes  \\
\quad Month$\times$exit year$\times$county              & Yes  & Yes & Yes & Yes & Yes & Yes \\
\quad Month$\times$exit year$\times$region$\times$industry              & Yes  & Yes & Yes & Yes & Yes & Yes \\
\quad Firm-month level controls            & Yes  & Yes  & No & Yes & Yes  & Yes \\
\hline\hline
\end{tabular}
}
 \begin{minipage}{1.23\textwidth} \footnotesize
        \vspace{2mm}
       Notes: The unit of analysis is the firm-month. The dependent variable is a binary variable that equals 1 if and only if all workers at the firm in question are of a single gender in a given month. Column I presents estimated coefficients of interest for our baseline difference regression presented in \eqref{emp_eqNSA}. Firm-month controls included are: the share of workers younger than the median age in the industry-region, the share of workers with tertiary education, the share of workers that have long-term contracts,  and the total wages paid by the firm.  Columns II-VI present the analogous information for alternative sample selections and empirical specifications as described in \Cref{alt samples}: column II removes firm fixed effects, column III removes the vector of controls, column IV drops firms with 9 or 10 workers at announcement from our baseline sample, column V interacts time-varying fixed effects with quartiles of firms growth in the 6 months prior to policy announcement, and column VI drops firms that are not present in every month of our panel. Throughout, standard errors in parentheses are two-way clustered at the firm and month levels. 
    \end{minipage}
\end{table}

\subsection{Effect of EPSW on the Gender Wage Gap}\label{wage_gap_section}

Guided by our theory, we are interested in studying the effect of EPSW on the wage gap between male and female workers. Our theory predicts a relative benefit to men in male majority LLMs and a relative benefit to women in female majority LLMs. We define a LLM as being female majority if the share of female workers across all firms in a particular industry-county pair is greater than 0.5 in June 2009, and otherwise, we define it as a male majority LLM.

To estimate the effects of EPSW on the gender wage gap across LLMs, we consider a panel in which an observation is a worker-firm-month. In order to remove potential confounds from workers who are simultaneously enrolled in formal schooling and those beyond Chile's official retirement age, we only consider workers aged 22-65. We let $i$ index a worker, $j$ index a firm, and $t$ index a month. Let $w_{ijt}$ be the earnings of worker $i$ at firm $j$ in month $t$. We construct an indicator, $male_{i}$, that equals 1 if the worker is a male, and 0 otherwise. We construct indicator $femalemaj_{j}$ that equals 1 if firm $j$ is in a female majority LLM, and 0 otherwise. We estimate difference models of the following form: 
\begin{align}
    \ln w_{ijt} = & ~\alpha_i + \alpha_j + \alpha_{k(ij)t} + \gamma_{1} (above10_j \times post_t) + \psi_{1} (above10_j \times post_t \times femalemaj_{j}) \notag\\
    & + \gamma_2 (male_i \times post_t) + \psi_2 (male_i \times post_t \times femalemaj_{j}) \notag\\
    & + \gamma_3 (above10_j \times male_i) + \psi_3 (above10_j \times male_i \times femalemaj_{j}) \notag\\
    & + \beta^{Mgap}(above10_j \times male_i \times post_t) + \beta^{Fgap}(above10_j \times male_i \times post_t \times femalemaj_{j}) \notag \\
    &+ X_{ijt}\Lambda + \epsilon_{ijt} \label{emp_eq5}
\end{align}

\noindent where $\alpha_i$ is a fixed effect for worker $i$ and  $X_{ijt}$ is a vector of firm-month level controls and worker-firm-month level controls. The firm-month level controls are the share of workers younger than the median age at the industry-region, the share of workers with tertiary education, and the share of workers that have long-term contracts. The worker-firm-month level controls are the number of months at the firm and an indicator for whether the worker's  earnings reach the top-coding threshold. 
$\alpha_{k(ij)t}$ are time fixed effects for workers in set $k(ij)$, where $k(ij)$ is a comparison group of workers to $i$ employed at a comparison group of firms to $j$. Worker comparison groups are defined by equivalence across three binary ``human capital'' dimensions at time $t$ at firm $j$: an indicator for the highest level of educational attainment (high school, tertiary, or neither), an indicator for long-term versus fixed-term contract, and an indicator for the worker's age in decades (i.e. 20-29 years old, 30-39 years old, etc). Firm comparison groups are defined as in \eqref{emp_eq1}. 
Because workers can switch from one LLM to another, the inclusion of these fixed effects controls for composition changes in the distribution of observable characteristics within a LLM caused by EPSW. We provide results controlling for alternative time trends at more aggregated levels in \Cref{alt_comparison}; our estimates are qualitatively and quantitatively similar which suggests that worker substitution across LLMs is not driving our findings.  In \Cref{alt samples} we discuss alternative specifications and data samples.

 Our coefficients of interest are $\beta^{Mgap}$ and $\beta^{Fgap}$. We interpret $\beta^{Mgap}$ as the effect of the policy on the  (percentage) wage gap between male and female workers in male majority LLMs. We interpret $\beta^{Fgap} + \beta^{Mgap}$ as the effect of the policy on the (percentage) wage gap between male and female workers in female LLMs.

To understand more about the dynamic effects of EPSW, we consider a difference model period by period, where we define each period to be a half year (January-June, July-December). We label the period immediately prior to policy announcement as period -1 and omit it as the reference period. Therefore, the set of non-reference periods included in our analysis is $\mathcal{T}:=\{-8,...,-2,0,...,9\}$ By construction, each period (indexed by $\tau$) is composed of six months (indexed by $t$). Let $period_\tau$ be an indicator that equals 1 in period $\tau$, and zero otherwise. 
 \begin{align}
    \ln w_{ijt} = & ~\alpha_i + \alpha_j + \alpha_{k(ij)t} +  \sum_{\tau\in\mathcal{T}}^{} \gamma_{1\tau} (above10_j \times period_\tau) + \sum_{\tau\in\mathcal{T}}^{} \psi_{1\tau} (above10_j \times femalemaj_{j} \times period_\tau)  \notag\\
     & + \sum_{\tau\in\mathcal{T}}^{} \gamma_{2\tau} (male_i \times period_\tau)  + \sum_{\tau\in\mathcal{T}}^{} \psi_{2\tau} (male_i \times femalemaj_{j} \times period_\tau) \notag\\
     & + \gamma_3 (above10_j \times male_i) + \psi_3 (above10_j \times male_i \times femalemaj_{j}) \notag\\
     & + \sum_{\tau\in\mathcal{T}}^{} \beta^{Mgap}_{\tau} D^M_{ijt}  +  \sum_{\tau\in\mathcal{T}}^{} \beta^{Fgap}_{\tau} D^F_{ijt}  \notag\\
     & + X_{ijt}\Lambda + \epsilon_{ijt} \label{emp_eq6}
 \end{align}

 \noindent where $D^M_{ijt}$ is an indicator that equals 1 in time $t$ if firm $j$ employs at least 10 long-term workers at the time of policy announcement, $j$'s LLM is coded as male majority, and $i$ is male, and zero otherwise. $\beta^{Mgap}_\tau$ is the average difference in log wages between men and women in treated versus control firms  in period $\tau$ (relative to period -1)  in male majority LLMs. Similarly, $D^F_{ijt}$ is an indicator that equals 1 in time  $t$ if firm $j$ employs at least 10 long-term workers at the time of policy announcement, $j$'s LLM is coded as female majority, and $i$ is male, and zero otherwise. $\beta^{Fgap}_\tau$ is the average difference in  log wages between men and women in treated versus control firms in period $\tau$ (relative to period -1) in female majority LLMs \emph{relative to male majority LLMs}. Therefore, $\beta^{Mgap}_\tau+\beta^{Fgap}_\tau$ is the average difference in log wages between men and women in treated versus control firms in period $\tau$ (relative to period -1) in female majority LLMs.

Our identifying assumption is that parallel trends hold between treated and control firms, that is, $\mathbb{E}[\epsilon_{ijt}\cdot D^M_{ijt}]=0$ and $\mathbb{E}[\epsilon_{ijt}\cdot (D^M_{ijt}+D^F_{ijt})]=0$ for all $t$ \citep{olden}. This strategy builds in a partial falsification test, in that we expect coefficient estimates of $\beta^{Mgap}_\tau$, and $\beta^{Mgap}_\tau+\beta^{Fgap}_\tau$ to be zero for $\tau<0$.

Table \ref{emp_tableYYYY} presents our estimates on the effect of EPSW on the gender wage gap. Column I presents our baseline results from   \eqref{emp_eq5}. We find that EPSW increases the gender wage gap (in favor of men) by 4.27 percentage points in male majority LLMs, but decreases the gender wage gap (in favor of women) by 6.24 percentage points in female majority LLMs. For reference, men typically outearn women prior to EPSW: the wage gap (controlling for human capital factors, as in  \eqref{emp_eq5}) is 8.58\% (SE=0.0060) in male-majority LLMs and 0.65\% (SE=0.0119) in female-majority LLMs. Columns II-VII present results on the gender wage gap from alternative empirical specifications and alternative sample selections. We discuss these specifications further in \Cref{alt samples}. Across all specifications, our findings of an increase in the gender wage gap in male majority LLMs, and a decrease in the gender wage gap in female majority LLMs, are statistically significant at conventional levels.
Additionally, we present the mean of the pre-treatment estimates (i.e. for
$\tau<0$) from \eqref{emp_eq6} to support our parallel trends assumption. Across specifications, these estimates
are small and statistically insignificant.

\begin{table}[ht]\centering
\addtolength{\leftskip} {-2.5cm}
    \addtolength{\rightskip}{-2.5cm}
    \caption{Effect of EPSW on Gender Wage Gap, by Majority Worker Group}\label{emp_tableYYYY}
{
\def\sym#1{\ifmmode^{#1}\else\(^{#1}\)\fi}
\centerline{\begin{tabular}{l*{6}{c}}
\hline\hline
              &\multicolumn{1}{c}{(I)}&\multicolumn{1}{c}{(II)}&\multicolumn{1}{c}{(III)}&\multicolumn{1}{c}{(IV)}&\multicolumn{1}{c}{(V)}&\multicolumn{1}{c}{(VI)}\\
            &\multicolumn{1}{c}{Baseline}&\multicolumn{1}{c}{No firm}&\multicolumn{1}{c}{No}&\multicolumn{1}{c}{Doughnut}&\multicolumn{1}{c}{Firm}&\multicolumn{1}{c}{Balanced} \\
            &\multicolumn{1}{c}{}&\multicolumn{1}{c}{FEs}&\multicolumn{1}{c}{controls}&\multicolumn{1}{c}{hole}&\multicolumn{1}{c}{growth}&\multicolumn{1}{c}{sample}\\
\hline
($\hat \beta^{Mgap}$) Treated $\times$ Male $\times$ Post            & $0.0427$       & $0.0364$       & $0.0432$       & $0.0409$       & $0.0439$       & $0.0407$     \\
                                             & $(0.0116)$     & $(0.0114)$     & $(0.0116)$     & $(0.0158)$     & $(0.0117)$     & $(0.0144)$\\

($\hat \beta^{Mgap}+\hat \beta^{Fgap}$) Effect in Female                    &   $-0.0624$      & $-0.0475$      & $-0.0575$      & $-0.0840$      & $-0.0622$      & $-0.0511$              \\
\quad Majority LLM  &     $(0.0197)$     & $(0.0206)$     & $(0.0196)$     & $(0.0251)$     & $(0.0230)$     & $(0.0234)$           \\\\

Mean Pre-Treatment                           &   $-0.0155$          &   $-0.0134$         &$-0.0136$  &   $-0.0082$                   &   $-0.0159$        &    $-0.0101$        \\
\quad (Male Majority LLM)                        &  $(0.0163)$                &     $(0.0155)$      & $(0.0166)$  &    $(0.0229)$                &   $(0.0158)$           &    $(0.0201)$           \\

Mean Pre-Treatment                    &    $-0.0012$          &  $-0.0167$          & $-0.0072$  &    $0.0343$                 &  $0.0102$        &  $0.0335$            \\
\quad (Female Majority LLM)                & $(0.0269)$             &  $(0.0332)$           & $(0.0268)$  &   $(0.0319)$               &     $(0.0324)$        & $(0.0284)$                 \\

\hline
Number of Firms                                        &   $ 6424$            &        $ 6424$        & $ 6424$ &  $ 5692$                    &       $ 6153$          &       $ 3351$               \\
Number of Observations                                 &    $5551100$        &     $5551100$       & $5551100$  & $4800898$                &     $5108499$      &     $3563150$           \\
\hline\hline
\textit{Fixed effects}                       &       &   &   &   &  &        \\
\quad Firm                                & Yes & No & Yes & Yes & Yes & Yes  \\
\quad Worker                                 & Yes& Yes & Yes & Yes & Yes & Yes \\  

\quad Month$\times$exit year$\times$county$\times$hum. cap.              & Yes  & Yes & Yes & Yes & Yes & Yes \\
\quad Month$\times$exit year$\times$region$\times$industry$\times$hum. cap.              & Yes  & Yes & Yes & Yes & Yes & Yes \\

\quad Firm-month level controls              & Yes & Yes & No & Yes & Yes & Yes\\
\quad Worker-firm-month level controls            & Yes & Yes & No & Yes & Yes & Yes  \\
\hline\hline
\end{tabular}}
}
 \begin{minipage}{1.38\textwidth} \footnotesize
        \vspace{2mm}
       Notes:   The unit of analysis is the worker-firm-month and the dependent variable is the natural logarithm of the worker's wage at the firm in a given month.
       Column I presents estimated coefficients  $\hat \beta^{Mgap}$ and $\hat \beta^{Fgap}$  for our baseline regression specification presented in \eqref{emp_eq5}. 
       ``Human capital'' fixed effects are comprised of an indicator for the highest level of educational attainment (high school, tertiary, or neither), an indicator for long-term versus fixed-term contract, and an indicator for the worker's age in decades (i.e. 20-29 years old, 30-39 years old, etc).
       Firm-month controls included are:  the share of workers younger than the median age in the industry-region, the share of workers with tertiary education, and the share of workers that have long-term contracts. Worker-firm-month level controls included are the number of months at the firm,  
 and an indicator for reaching the earnings truncation threshold. 
       The mean pre-treatment effects are the mean of $\hat \beta^{Mgap}_\tau$ and $\hat \beta^{Mgap}_\tau+\hat \beta^{Fgap}_\tau$, respectively, for $\tau<0$ calculated from \eqref{emp_eq6}.
     Columns II-VI present the analogous information for alternative sample selections and empirical specifications as described in \Cref{alt samples}: column II removes firm fixed effects, column III removes the vector of controls, column IV drops firms with 9 or 10 workers at announcement from our baseline sample, column V interacts time-varying fixed effects with quartiles of firms growth in the 6 months prior to policy announcement, and column VI drops firms that are not present in every month of our panel. Throughout, standard errors in parentheses are two-way clustered at the firm and month levels. 
    \end{minipage}
\end{table}

Moreover, our empirical findings match our theoretical predictions that the change in the wage gap caused by EPSW is primarily driven by a reduction in average wages of the minority group of workers in a LLM. Specifically, women's average wages in male-majority labor markets fall post EPSW by 3.29 percentage points (SE= 0.0111) in treated versus control firms, and men's average wages in female-majority labor markets fall post EPSW by 4.45 percentage points (SE=0.0178) in treated versus control firms.\footnote{These values are estimates, respectively, of $\gamma_1$ and $\gamma_1+\psi_1+\beta^{Mgap}+\beta^{Fgap}$ in \eqref{emp_eq5}.} There are no statistically significant changes in the average wage of the majority group of workers within a LLM.

Figure \ref{emp_figXX} displays the estimated coefficients of interest from \eqref{emp_eq6}. Panel (A) presents estimates for male majority LLMs. The coefficient of interest is statistically indistinguishable from 0 in all periods $\tau<0$. In $\tau=0$ the wage gap rises (in favor of men) by 1.38 percentage points (SE = 0.0123) in the treated group compared to the control group, and rises by 3.76 percentage points (SE = 0.0187) by $\tau=9$. Panel (B) presents estimates for female majority LLMs. The coefficient of interest is statistically indistinguishable from 0 in all periods $\tau<0$. In $\tau=0$, the wage gap falls (in favor of women) by 2.35 percentage points (SE=0.0210) in the treated group compared to the control group,  and falls by 10.30 percentage points (SE = 0.0313) by $\tau=9$.

\begin{figure}[ht]\caption{Dynamic Path of EPSW's Effect on Gender Wage Gap, by Majority Worker Group \label{emp_figXX}}
\makebox[\textwidth][c]{
\vspace{3mm}
\centering

\begin{minipage}[c][1\totalheight][t]{0.7\textwidth}\center{\scriptsize{(A): Male Majority Local Labor Markets}}
\begin{center}
\includegraphics[width=0.9\textwidth]{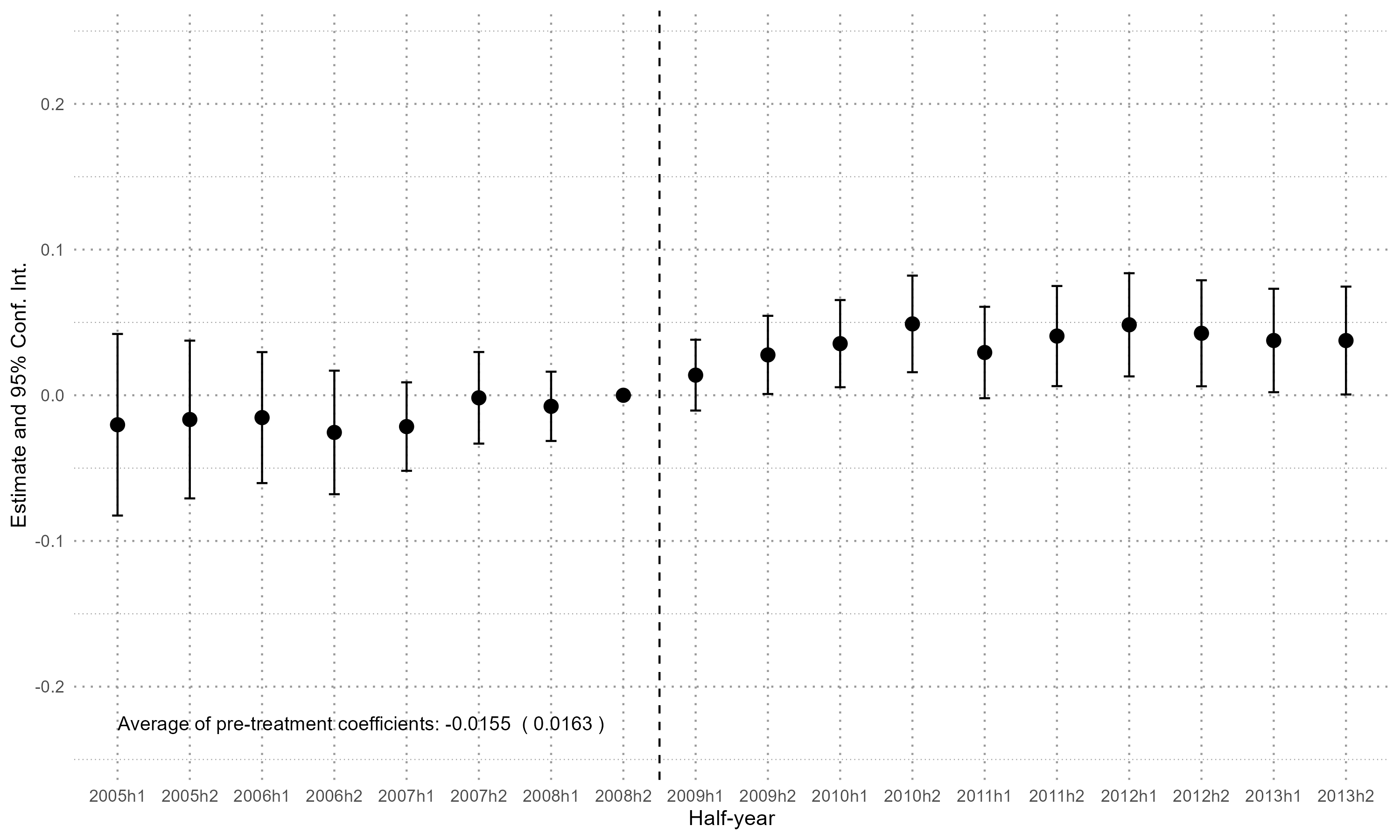}
\end{center}
\end{minipage}\hspace{-1cm}
\begin{minipage}[c][1\totalheight][t]{0.7\textwidth}\center{\scriptsize{(B): Female Majority Local Labor Markets}}
\begin{center}
\includegraphics[width=0.9\textwidth]{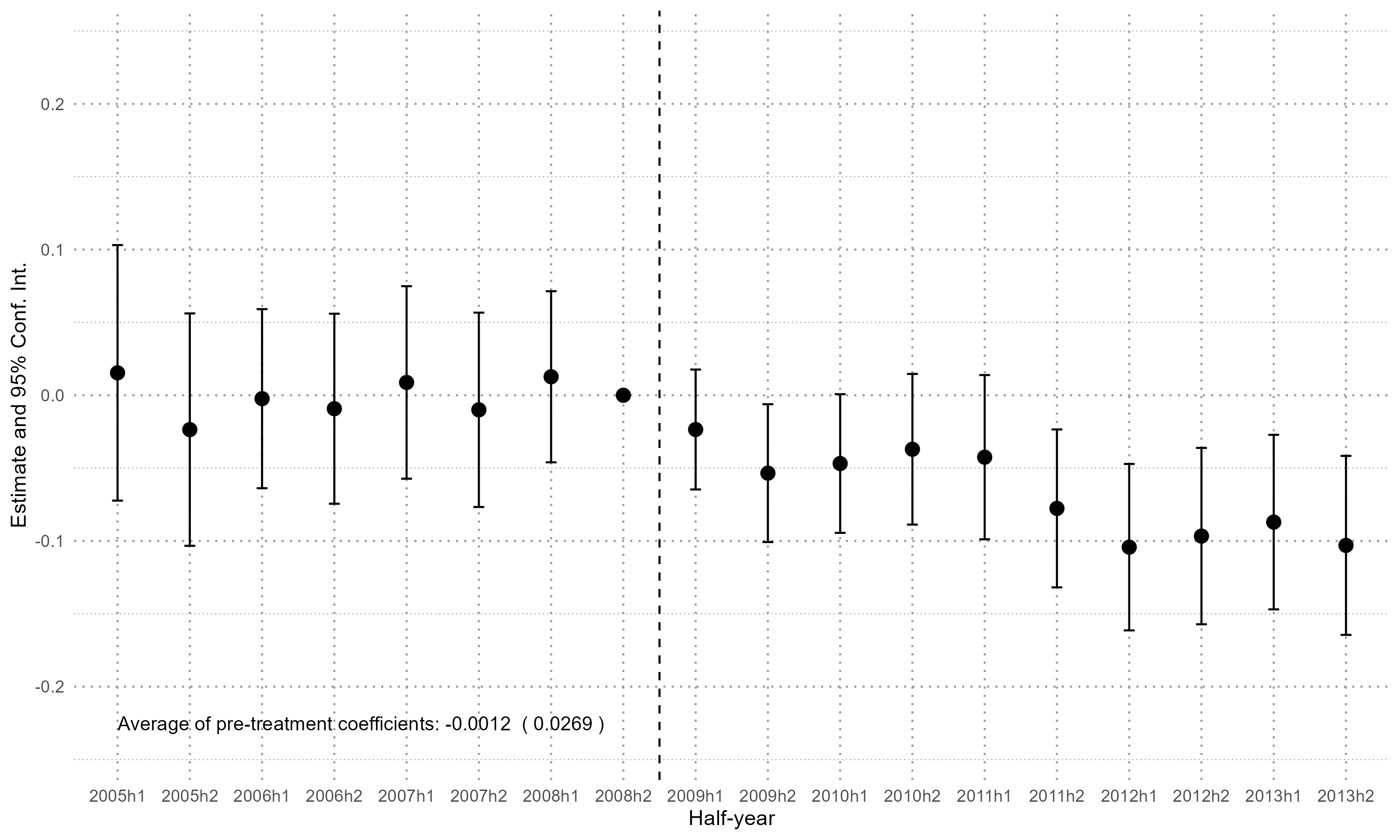}
\end{center}
\end{minipage}\hspace{1cm}}
 \begin{minipage}{.999\textwidth} \footnotesize
        \vspace{2mm}
       Notes: Panel (A) displays estimated coefficients $\hat \beta^{Mgap}_\tau$ for the regression described in \eqref{emp_eq6}. Panel (B) displays estimated coefficients $\hat \beta^{Mgap}_\tau+\hat \beta^{Fgap}_\tau$ for the  regression described in \eqref{emp_eq6}. 
       Bars depict 95\% confidence intervals. 
    \end{minipage}
\end{figure}

One potentially interesting policy-related question is the overall effect of EPSW on the gender wage gap across all LLMs, given that the majority of workers are employed in male-majority LLMs.  We find that EPSW increases the overall gender wage gap (in favor of men) by 2.74 percentage points (SE=0.0102).

\section{Conclusion}
We find that the equilibrium effects of Equal Pay for Similar Work (EPSW) policies dominate their direct effects. Therefore, imposing these policies may lead to unintended outcomes. Our modeling demonstrates that EPSW targeted specifically to equalize pay across protected classes of workers leads to firms segregating their workforce in equilibrium to avoid the bite of the policy. Although discriminatory forces may lead to pay gaps across groups without EPSW, segregation caused by EPSW results in the minority group of workers in a labor market receiving even lower relative wages. Our empirical evaluation of Chile's 2009 EPSW---which prohibited unequal pay for similar work across genders---supports these predictions. The policy caused gender segregation within firm to rise, and the gender wage gap to rise. Importantly, the rise in the wage gap only occurred in male-majority local labor markets; in local labor markets with majority female workers, the wage gap closed. Both of these findings are as predicted by our theoretical analysis.

Many important questions remain in understanding the equilibrium impacts of EPSW, and equity-related labor market policies in general. One difficulty is understanding the role of complementary policies; a benefit of studying the Chilean labor market is the relative dearth of alternative anti-discrimination policies prior to and contemporaneously with the enactment of EPSW. For this reason, further theoretical study of labor-market policies may be particularly fruitful. One particular avenue for further research is understanding firm incentives--especially those of large firms, for which segregation may be especially costly--to heterogenize on-the-job responsibilities and duties to evade EPSW.

\singlespacing
\bibliographystyle{plainnat} 

\bibliography{eplbib}
\onehalfspacing
\newpage
\appendix
\part*{\hspace{-4.9mm}\LARGE{Supplemental Appendix: ``Equal Pay for \emph{Similar} Work''}}

\noindent ~~~~~\large{\textbf{Diego Gentile Passaro, Fuhito Kojima, and Bobak Pakzad-Hurson}}

\normalsize
\bigskip

\noindent \Cref{proofs_app} presents proofs omitted from the main text. \Cref{microfoundation app}  presents microfoundations and discussions regarding our theoretical analyses. \Cref{emperical app} presents additional empirical results and descriptions.

\setcounter{equation}{0}
\renewcommand{\theequation}{A\arabic{equation}}

\setcounter{proposition}{0}
\renewcommand{\theproposition}{A\arabic{proposition}}

\setcounter{lemma}{0}
\renewcommand{\thelemma}{A\arabic{lemma}}

\setcounter{remark}{0}
\renewcommand{\theremark}{A\arabic{remark}}
\section{Proofs}\label{proofs_app}

\subsection*{Proof of \Cref{no_EPL_equilibrium_prop}}

Consider any outcome $O=\{(f_g^i(v),w_i^g(v))\}_{v \in [0,1], i =1,2, g=A,B}$ in which the following hold for almost all $v\in[0,1]$ and all $g\in\{A,B\}$:
\begin{enumerate}
    \item $f_g^1(v)+f_g^2(v) = f_g(v)$, and 
    \item for all $i\in \{1,2\}$, $w_i^g(v)=v$ if $f_g^i(v)>0$.
\end{enumerate}
We establish the desired result through three lemmas regarding these two enumerated conditions. \Cref{noepsw_lemma1} shows that any outcome satisfying these two enumerated conditions is a core outcome, \Cref{noespw_lemma2} shows that there are uncountably many non-equivalent outcomes satisfying these two conditions (implying that there are uncountably many core outcomes), and \Cref{noespw_lemma3} shows that there are no core outcomes that fail either of the two enumerated conditions.\footnote{\cite{bertrand} study Bertrand competition under a wage monotonicity constraint. Their analysis and proof differ due to the discrepancy in setting.}

\begin{lemma}\label{noepsw_lemma1}
Any outcome $O$ satisfying the two enumerated conditions in the Proof of \Cref{no_EPL_equilibrium_prop} is a core outcome.
\end{lemma}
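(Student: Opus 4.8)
The plan is to show that the zero-profit, full-employment outcome $O$ admits no blocking firm. The opening observation is that under $O$ every employed worker is paid exactly her own productivity, so the per-worker profit $v-w_i^g(v)$ vanishes wherever $f_g^i(v)>0$; hence $\pi_1^O=\pi_2^O=0$. It therefore suffices to prove that no firm $j$ can find a blocking outcome $\tilde O_j$ with $\pi_j^{\tilde O_j}>0$.

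Suppose such a $\tilde O_j$ existed. I would argue pointwise, fixing a group $g$ and a productivity $v$ in the full-measure set on which simultaneously (i) $f_g^1(v)+f_g^2(v)=f_g(v)$, (ii) $w_i^g(v)=v$ whenever $f_g^i(v)>0$, and (iii) one of the four block conditions holds. Since $f_g(v)\ge\underline f_g>0$, at least one of $f_g^j(v),f_g^{-j}(v)$ is strictly positive, so at least one of $w_j^g(v),w_{-j}^g(v)$ equals $v$. The key claim is: if $\tilde f_g^j(v)>0$ then $\tilde w_j^g(v)\ge v$. I would verify this by running through the four block conditions. Condition 4 forces $\tilde f_g^j(v)\le f_g(v)-f_g^j(v)-f_g^{-j}(v)=0$, contradicting $\tilde f_g^j(v)>0$, so it is vacuous a.e. Under Condition 2, $\tilde f_g^j(v)\le f_g(v)-f_g^{-j}(v)=f_g^j(v)$, so $f_g^j(v)>0$, giving $w_j^g(v)=v$ and hence $\tilde w_j^g(v)\ge w_j^g(v)=v$. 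Under Condition 3, symmetrically $\tilde f_g^j(v)\le f_g^{-j}(v)$, so $f_g^{-j}(v)>0$, $w_{-j}^g(v)=v$, and $\tilde w_j^g(v)>w_{-j}^g(v)=v$. Under Condition 1, whichever of $f_g^j(v),f_g^{-j}(v)$ is positive yields either $\tilde w_j^g(v)\ge w_j^g(v)=v$ or $\tilde w_j^g(v)>w_{-j}^g(v)=v$; in every case $\tilde w_j^g(v)\ge v$.

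Putting this together, for a.e.\ $v$ and each $g$ the integrand $[v-\tilde w_j^g(v)]\tilde f_g^j(v)$ is $\le 0$: it is $0$ when $\tilde f_g^j(v)=0$, and $\le 0$ when $\tilde f_g^j(v)>0$ because then $\tilde w_j^g(v)\ge v$. Integrating, $\pi_j^{\tilde O_j}=\beta\int_0^1[v-\tilde w_j^A(v)]\tilde f_A^j(v)\,dv+\int_0^1[v-\tilde w_j^B(v)]\tilde f_B^j(v)\,dv\le 0=\pi_j^{O_j}$, contradicting $\pi_j^{\tilde O_j}>\pi_j^{O_j}$. Hence no firm can block $O$, so $O$ is a core outcome.

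I do not anticipate a genuine obstacle here. The only care needed is bookkeeping with the almost-everywhere qualifiers — intersecting the full-measure sets coming from conditions (i) and (ii) with the full-measure set from the definition of block — together with the structural observation that feasibility in $O$ (namely $f_g^j+f_g^{-j}=f_g$) makes Condition 4 vacuous and, in the remaining three cases, forces at least one incumbent wage at $v$ to equal $v$, which is exactly what pins down $\tilde w_j^g(v)\ge v$. Note that the wage-monotonicity restriction plays no role in this direction of the argument.
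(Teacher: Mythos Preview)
Your proof is correct and follows essentially the same approach as the paper: both arguments note that $\pi_j^{O_j}=0$ and then show, by a pointwise case analysis over the four block conditions, that the integrand $[v-\tilde w_j^g(v)]\tilde f_g^j(v)$ is nonpositive almost everywhere. The only cosmetic difference is organizational---you first assume $\tilde f_g^j(v)>0$ and then run through the conditions, whereas the paper treats each condition and splits on whether $\tilde f_g^j(v)=0$ within it---and your observation that wage monotonicity is irrelevant for this direction is accurate.
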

\begin{proof}[Proof of \Cref{noepsw_lemma1}]
 Suppose not for the sake of contradiction. Then there are a firm $j$ and a distinct outcome (for firm $j$)  $\tilde O_j:=\{(\tilde f_g^j(v),\tilde w_j^g(v))\}_{v \in [0,1], g=A,B}$ that blocks $O.$ In order for $\tilde O_j$ to block $O$ it must be that $\pi^{\tilde O_j}_j > \pi_j^{O_j}$. However, it is also the case that
\begin{align*}
     \pi^{O_j}_j&= \beta \int_0^1 [v-w_j^A(v)]f^j_A(v)dv +  \int_0^1 [v-w_i^B(v)]f^j_B(v)dv \\
& = 0\\
& \geq \beta \int_0^1 [v-\tilde w_j^A(v)]\tilde f^j_A(v)dv +  \int_0^1 [v-\tilde w_i^B(v)]\tilde f^j_B(v)dv \\
& =  \pi^{\tilde {O}_j}_j
    \end{align*}
\noindent The second equality follows because, by the construction of $O$, either $f^j_g(v)=0$ or $w_j^g(v)=v$ for almost all $v$ and all $g$, therefore, the integrand is almost always equal to zero. The inequality follows because of the following exhaustive cases for almost all $v$ and all $g$, corresponding, respectively, to Conditions 1-4 of the definition of block:

\begin{itemize}
\item Suppose $\tilde w_j^g(v) \ge  w_j^g(v)$ and $\tilde w_j^g(v) >  w_{-j}^g(v)$, then it must be that $\tilde w_j^g(v) \ge v$ since $\max\{w_j^g(v),w_{-j}^g(v)\}=v$, which makes the integrand weakly negative,

\item Suppose $\tilde w_j^g(v) \ge  w_j^g(v)$ and $\tilde f_g^j(v)+f_g^{-j}(v) \le f_g(v)$. If $\tilde f_g^j(v)=0$ then the integrand is weakly negative. If $\tilde f_g^j(v)>0$ then it must be that $f_g^{-j}(v) < f_g(v)$, and by the construction of $O$ that  $f_g^j(v)+f_g^{-j}(v) = f_g(v)$, it must be that $f_g^j(v)>0.$ Therefore, it must be that $ w_j^g(v)=v$, and the requirement that $\tilde w_j^g(v) \ge  w_j^g(v)$ makes the integrand weakly negative.

\item Suppose $\tilde w_j^g(v) >  w_{-j}^g(v)$ and $\tilde f_g^j(v)+f_g^{j}(v) \le f_g(v)$. If $\tilde f_g^j(v)=0$ then the integrand is weakly negative. If $\tilde f_g^j(v)>0$ then it must be that $f_g^{j}(v) < f_g(v)$, and by the construction of $O$ that  $f_g^j(v)+f_g^{-j}(v) = f_g(v)$, it must be that $f_g^{-j}(v)>0.$ Therefore, it must be that $w_{-j}^g(v)=v$, and the requirement that $\tilde w_j^g(v) >  w_{-j}^g(v)$ makes the integrand strictly negative.

\item Suppose $\tilde f_g^j(v)+f_g^j(v)+f_g^{-j}(v) \le f_g(v)$ then it must be that $\tilde f_g^j(v)=0$ since by the construction of $O$ it is the case that $f_g^j(v)+f_g^{-j}(v) = f_g(v)$. Therefore, the integrand is weakly negative.

\end{itemize}
\noindent These cases reveal a contradiction with the premise that $\pi^{\tilde O_j}_j > \pi_j^{O_j}$. Therefore, $O$ is a core outcome. 
\end{proof}

\begin{lemma}\label{noespw_lemma2}
There exist a continuum of (non-equivalent) core outcomes satisfying the two enumerated conditions in the Proof of \Cref{no_EPL_equilibrium_prop}.
\end{lemma}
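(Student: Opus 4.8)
The plan is to build on \Cref{noepsw_lemma1}: since every outcome satisfying the two enumerated conditions is already a core outcome, it suffices to exhibit a continuum of pairwise non-equivalent outcomes satisfying those conditions. To that end I would parametrize outcomes by a productivity threshold. For each $c\in(0,1)$, define the outcome $O^c$ in which firm $1$ hires every worker of either group whose productivity is at most $c$ and firm $2$ hires every worker of either group whose productivity exceeds $c$, each worker being paid a wage equal to her productivity; formally, $f_g^{1,c}(v)=f_g(v)\,\mathbf{1}\{v\le c\}$ and $f_g^{2,c}(v)=f_g(v)\,\mathbf{1}\{v>c\}$ for $g\in\{A,B\}$, with $w_i^g(v)=v$ wherever the relevant density is positive and $w_i^g(v)=0$ otherwise.

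First I would verify that each $O^c$ is a legitimate outcome: feasibility holds (indeed with equality, $f_g^{1,c}(v)+f_g^{2,c}(v)=f_g(v)$); the hiring densities are measurable, being products of $f_g$ with indicators of intervals; and the wage schedules are weakly monotone within each firm and each group, since on the set where a firm hires, its wage coincides with the identity map $v\mapsto v$. By construction $O^c$ satisfies enumerated conditions $1$ and $2$, so \Cref{noepsw_lemma1} immediately yields that $O^c$ is a core outcome for every $c\in(0,1)$.

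Next I would show that $c\mapsto O^c$ is injective modulo the equivalence relation on outcomes, which delivers the desired continuum. Fix $c\ne c'$ in $(0,1)$, say $c<c'$. The key input is the strict positivity of the density lower bounds $\underline f_A,\underline f_B>0$, which guarantees that every subinterval of $[0,1]$ of positive length carries positive mass of each group. To rule out the identity-matching case of equivalence, note that $f_g^{1,c}$ and $f_g^{1,c'}$ disagree on $(c,c']$, a set of positive Lebesgue measure on which $f_g>0$; hence $O_1^c$ is not equivalent to $O_1^{c'}$. To rule out the swap (relabeling) case, note that $f_g^{1,c}$ is positive on $(0,c)$ while $f_g^{2,c'}$ vanishes on $(0,c)\subseteq[0,c']$; hence $O_1^c$ is not equivalent to $O_2^{c'}$ either. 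Therefore $O^c$ and $O^{c'}$ are non-equivalent, so $\{O^c\}_{c\in(0,1)}$ is a continuum of non-equivalent core outcomes satisfying the two enumerated conditions, which is the claim.

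I expect the only mildly delicate point to be the non-equivalence argument, and within it the handling of the relabeling case of the definition of equivalence; this is exactly where the regularity assumption $\underline f_g>0$ is used, to ensure that the sets on which the two allocations differ actually have positive Lebesgue measure. Everything else—feasibility, measurability, monotonicity of $v\mapsto v$, and the appeal to \Cref{noepsw_lemma1}—is routine.
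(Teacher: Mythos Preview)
Your proof is correct and takes essentially the same approach as the paper: both parametrize outcomes by a productivity threshold $c$ (the paper writes $v^*$), assigning workers below the threshold to firm~1 and those above to firm~2 at wage $v$. Your treatment is in fact more careful than the paper's, which presents the construction in one sentence and leaves the non-equivalence check (including the relabeling case) implicit.
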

\begin{proof}[Proof of \Cref{noespw_lemma2}]

For any $v^*\in[0,1]$ there is an outcome $O$ such that $f_g^1(v)=f_g(v)$ and $w_1^g(v)=v$ for all $g$ and $v\leq v^*$, and $f_g^2(v)=f_g(v)$ and $w_2^g(v)=v$ for all $g$ and $v> v^*$. 
\end{proof}

 It remains to show that any core outcome must be such that all workers are hired and all workers receive a wage equal to their productivity.
\begin{lemma}\label{noespw_lemma3}
There exist no core outcomes which do not satisfy the two enumerated conditions in the Proof of \Cref{no_EPL_equilibrium_prop}.
\end{lemma}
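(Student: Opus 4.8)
The plan is to prove the contrapositive: every core outcome $O$ satisfies the two enumerated conditions, which — given the Individual Rationality condition of \Cref{EPC} — amounts to showing that $O$ leaves no positive-measure ``slack,'' i.e.\ no positive measure of unemployed workers and no positive measure of workers employed at a wage strictly below their productivity. Assuming for contradiction that such slack exists, I would exhibit a firm and a \emph{monotone} blocking outcome, contradicting the core property.

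First I would record two facts. Individual Rationality ($w_i^g(v)\le v$ for a.e.\ $v$ with $f_g^i(v)>0$) follows because a firm can discard any positive-measure set of loss-making workers: the restricted wage function is still monotone, Condition 4 of the block applies to each discarded type (feasibility gives $0+f_g^i(v)+f_g^{-i}(v)\le f_g(v)$) and Condition 2 to the retained types, and profit strictly rises. The Equal Profit Condition $\pi_1^O=\pi_2^O$ is exactly as argued in \Cref{EPC}. Since IR forces $v-w_i^g(v)\ge 0$ on the support, the two enumerated conditions are equivalent to (i) full employment and (ii) $\pi_1^O=\pi_2^O=0$. So it suffices to derive a contradiction from either (a) positive-measure unemployment, or (b) $\pi_1^O=\pi_2^O>0$, the latter (via IR) forcing a positive mass of underpaid workers at \emph{each} firm.

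The blocking construction is a monotone ``Bertrand undercut from above.'' For a group $g$ set $\bar w^g(v):=\max\{w_1^g(v),w_2^g(v)\}$ (unfilled types contributing $0$); this is the wage a rival must beat to capture all of type $(g,v)$, and by IR $\bar w^g(v)\le v$. I would have the deviating firm $j$ fire its current workforce and re-hire, in each group, all of (or at least the residual supply of) workers at wage $\tilde w^g(v):=\sup_{v'\le v}\bar w^g(v')$, the least nondecreasing majorant of $\bar w^g$; this is monotone and, since $\bar w^g(v')\le v'$, still satisfies $\tilde w^g(v)\le v$. One then checks type by type that each acquisition is licensed by one of Conditions 1--4 — re-hiring residual supply at a slightly raised wage via Conditions 1/2, poaching a thin slice of underpaid rival workers via Condition 3 — and that the resulting profit strictly exceeds $\pi_j^O=\pi_{-j}^O$: in case (b) the extra comes from poaching, at wages bounded away from $v$, a positive mass of workers whom the rival underpays by more than $O$ already extracts; in case (a) it comes from hiring a positive mass of previously-unemployed workers at wages below $v$.

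The hard part is the interaction of the monotonicity constraint with this construction. The function $\bar w^g$ need not itself be nondecreasing — in particular, unemployment lets it ``dip'' toward $0$ — so one cannot simply offer $\bar w^g+\delta$; the strict inequalities demanded by Conditions 1 and 3 are delicate precisely at types where $\tilde w^g$ coincides with the rival's wage; and one must verify that the fire-and-rehire route (rather than naively appending poached workers to the incumbent's schedule, which can destroy monotonicity) never costs $j$ more than it gains. I therefore expect the proof to split into the unemployment case — handled by slotting the previously-unemployed workers into the \emph{top} of firm $j$'s productivity range, where a monotone wage is easiest to maintain — and the no-unemployment case, where $\bar w^g$ is better behaved and the envelope-undercut goes through more directly; in both, the flexibility of Conditions 1--4 to simultaneously poach, retain, and discard is what makes a monotone block available.
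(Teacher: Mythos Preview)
Your high-level decomposition is correct: once Individual Rationality and the Equal Profit Condition are in hand, it suffices to block any outcome with either (a) positive-measure unemployment or (b) strictly positive common profit. The gap is in the block itself. The envelope $\tilde w^g(v)=\sup_{v'\le v}\max\{w_1^g(v'),w_2^g(v')\}$ can equal $v$ on the \emph{entire} domain even though slack is present, and then your fire-and-rehire deviation yields zero profit---strictly less than $\pi_1^O=\pi_2^O>0$---so it is not a block. Concretely, take $\beta=1$, uniform productivities, and let firm~1 hire half of each $(A,v)$ at wage $v$ while firm~2 hires the other half at wage $v/2$, with the roles reversed for group $B$. Both wage schedules are monotone, full employment holds, both firms earn $1/8$, and condition~2 fails everywhere; yet $\bar w^g\equiv v$, so your envelope deviation earns $0$, and adding any $\epsilon>0$ makes it negative. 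The difficulty is not merely that ``strict inequalities are delicate'': the global envelope is the wrong object in this configuration, and your acknowledgment that one must ``poach a thin slice'' does not explain how to do so while simultaneously preserving monotonicity and net profit.

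The paper's proof does not attempt a global undercut. It runs a six-case analysis and, in the hard cases (its Cases~3, 4, and~6), constructs a \emph{local} block on a short interval $I=[\underline v,\bar v]$: the deviating firm fires its own workers on $I$ (who, being paid $v$ there, contribute zero profit) and either hires the unemployed or poaches the rival's underpaid workers, paying a suitably chosen constant wage on $I$ so that monotonicity with the unchanged schedule outside $I$ is automatic. Locating a usable $I$ requires a Lebesgue-density argument (Halmos's theorem that any positive-measure set occupies an arbitrarily large fraction of some interval) or, in Case~6, a compactness/covering argument; and showing that the gain on $I\cap V'$ exceeds the loss on $I\setminus V'$ requires the delicate estimates the paper carries out. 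Your plan contains none of these ingredients, and the example above shows that without them the argument does not close.
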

\begin{proof}[Proof of \Cref{noespw_lemma3}]
 Suppose for contradiction that there is a core outcome $O=\{(f_g^i(v),w_i^g(v))\}_{v \in [0,1], i =1,2, g=A,B}$ such that there exists a set with positive measure $V$ where the following fails for some $g\in\{A,B\}$ and all $v\in V$: 
\begin{enumerate}
    \item $f_g^1(v)+f_g^2(v) = f_g(v)$, and 
    \item for all $i\in \{1,2\}$, $w_i^g(v)=v$ if $f_g^i(v)>0$.
\end{enumerate}

Throughout, we assume that the above conditions are violated for $g=B$, and we construct a blocking outcome $\tilde O_j=\{(\tilde f_g^j(v),\tilde w_j^g(v))\}_{v \in [0,1],g=A,B}$, where $(\tilde f_A^j(v),\tilde w_j^A(v))=( f_A^j(v), w_j^A(v))$ (or $\tilde O_{-j}=\{(\tilde f_g^{-j}(v),\tilde w_{-j}^g(v))\}_{v \in [0,1],g=A,B}$ where $(\tilde f_A^{-j}(v),\tilde w_{-j}^A(v))=( f_A^{-j}(v), w_{-j}^A(v))$), i.e. we do not change the outcome for $A-$group workers for either firm. The argument in which the violations occur for $g=A$ is analogous, where terms related to the firm's profit must be multiplied by $\beta$.

We show a contradiction by considering six exhaustive cases: By countable additivity of measure, the set of productivities that fails one of the above two enumerated points  has positive measure if and only if at least one of the sets in the following six cases has a positive measure. 

First, suppose there exist a firm $j$ and a subset of productivities $V \subset [0,1]$ with positive measure such that $w_j^g(v)>v$ for all $v \in V$. We proceed by showing that firm $j$ can fire workers of type $(g,v)$, $v\in V$ and increase its profit. Let $\tilde O_j$ be defined as follows: 
\begin{align*}
\tilde f_g^j(v) & := \begin{cases} 
f_g^j(v) & \text{if }  v\notin V,  \\
0 & \text{if }  v\in V.
\end{cases}
~~~~~~ \tilde w_j^g(v) := \begin{cases} 
w_j^g(v) & \text{if }  v\notin V,  \\
0 & \text{if }  v\in V.
\end{cases}
\end{align*}
\noindent $\tilde O_j$ blocks $O$ as $j$'s profit increases and Condition 4 of the definition of block is satisfied for all $v\in V$ (i.e. the workers in $V$  are fired) and for all $v\in [0,1]\setminus V$ Condition 2 of the definition of block is satisfied (i.e. there is no change in the hiring or wages of workers with $v\in[0,1]\setminus V$).  

Therefore, we proceed with the assumption that for each firm $j$, $w_j^g(v)\leq v$ for almost all $v$. 

Second, suppose there exist a firm $j$ and a subset of productivities $V$ with positive measure such that  $f_g^1(v)+f_g^2(v) < f_g(v)$ and $w_j^g(v)< v$ for all $v\in V$. We proceed by showing that firm $j$ can hire unemployed workers of type $(g,v)$, $v\in V$ at the going wage and increase its profit. Let $\tilde O_j$ be defined as follows: 
 \begin{align*}
\tilde f_g^j(v) :=  \begin{cases} 
f_g(v)-f_g^{-j}(v) & \text{if }  v\in V,\\
f_g^j(v) & \text{if }  v\notin V.
\end{cases}
~~~~~~~~~~~\tilde w_j^g(v) :=   
w_{j}^g(v)  \text{ for all $v$.}
\end{align*}
\noindent $\tilde O_j$ blocks $O$ as firm $j$'s profit increases  as some previously unemployed workers are hired at a wage strictly less than their productivity while all existing workers at $j$ continue to be employed at the same wage as before, and Condition 2 of the definition of a block is satisfied for all $v\in[0,1]$ (i.e. no worker receives a wage cut and no workers are poached from firm $-j$).

Third, suppose there exists a firm $j$ and a subset of productivities $V$ with positive measure such that 
 $f_g^1(v)+f_g^2(v) < f_g(v)$ and $w_j^g(v)= v$ for all $v\in V$. We proceed by showing that firm $j$ can fire its current workers of type $(g,v)$, $v\in V$ and hire the unemployed workers of the same type at a lower wage to increase its profit. Let $\tilde O_j$ be defined as follows:
  \begin{align*}
\tilde f_g^j(v) :=  \begin{cases} 
f_g(v)-f_g^{1}(v)-f_g^{2}(v) & \text{if } v\in V, \\
f^j_g(v) & \text{otherwise.}
\end{cases}
~~~~~~~~~~~\tilde w_j^g(v) :=  \begin{cases} 
0 & \text{if } v\in V, \\
w_g^j(v) & \text{otherwise.}
\end{cases}
\end{align*}
\noindent $\tilde O_j$ blocks $O$. To see that $\tilde O_j$ blocks $O$, note that Condition 4 of the definition of block is satisfied for all $v\in V$ by construction, and Condition 2 of the definition of block is satisfied for all $v\in [0,1]\setminus V$ (i.e. there are no changes in employment or wages for workers with $v\in [0,1]\setminus V$). Therefore, it remains only to show that $\tilde O_j$ yields firm $j$ a strictly higher profit than does $O_j$, and this fact follows because $j$ earns no profit in $O_j$ from worker types $(g,v)$, $v\in V$ because $w_j^g(v)=v$, and because $j$ earns strictly positive profit in $\tilde O_j$ from worker types $(g,v)$, $v\in V$ because $\tilde f_g^j(v)=
f_g(v)-f_g^{1}(v)-f_g^{2}(v)>0$ (where the inequality follows from the ongoing assumption of this case) and $\tilde w_j^g(v)=0<v$ for all $v\in V\setminus\{0\}.$

The previous two cases exhaust the possibility of a core outcome in which  $f_g^1(v)+f_g^2(v) < f_g(v)$ for any $g$ and a subset of productivities with positive measure. Therefore, we proceed with the assumption that $f_g^1(v)+f_g^2(v) = f_g(v)$ for almost all $v$.

Fourth, suppose that there exist $j$ and a set $V$ of productivities with positive measure such that $w_j^g(v)< v$ and $f_g^j(v)= f_g(v)$ for all $v\in V$. We proceed by showing that firm $-j$ can poach workers of type $(g,v)$, $v\in V$ from firm $j$ at marginally higher wages and increase its profit. Let $\tilde O_{-j}$ be defined as follows:
  \begin{align*}
\tilde f_g^{-j}(v) :=  \begin{cases} 
f_g(v) & \text{if } v\in V, \\
f^{-j}_g(v) & \text{otherwise.}
\end{cases}
~~~~~~~~~~~\tilde w_{-j}^g(v) :=  \begin{cases} 
(w^g_j(v)+v)/2 & \text{if } v\in V, \\
w^g_{-j}(v) & \text{otherwise.}
\end{cases}
\end{align*}
\noindent $\tilde O_{-j}$ blocks $O$. To see that $\tilde O_{-j}$ blocks $O$, note that Condition 1 of the definition of block is satisfied for all $v\in V$ by construction, and Condition 2 of the definition of block is satisfied for all $v\in [0,1]\setminus V$ (i.e. there are no changes in employment or wages for workers with $v\in [0,1]\setminus V$). Therefore, it remains only to show that $\tilde O_{-j}$ yields firm $-j$ a strictly higher profit than does $O_{-j}$, and this fact follows because $-j$ earns no profit in $O_{-j}$ from worker types $(g,v)$, $v\in V$ because $f_g^{-j}(v)=0$, and because $-j$ earns strictly positive profit in $\tilde O_{-j}$ from worker types $(g,v)$, $v\in V$ which follows from $w_{-j}^g(v)=(w_j^g(v)+v)/2<v$ and $f_g^{-j}(v)>0$  for all $v\in V\setminus\{0\}.$

Fifth, suppose that there exist $j$ and $V \subset [0,1]$ with positive measure such that $0\le w_{-j}^g(v)\leq w_j^g(v)<v$ and $f_g^j(v)\in(0, f_g(v))$ for all $v \in V$. Then, there exists $\varepsilon>0$ and $V' \subset V$ with positive measure such that  $0\le w_{-j}^g(v)\leq w_j^g(v)<v-\varepsilon$ and $f_g^j(v)\in(0, f_g(v)-\varepsilon)$ for all $v \in V'.$ We proceed by showing that firm $j$ can retain workers of type $(g,v)$, $v\in V'$, and poach all workers of the same type from firm $-j$ by marginally increasing its wages and increase its profit.  For a constant $\varepsilon'>0$, consider $\tilde O_{j}$ where for all $v$:
\begin{align*}
\tilde f_g^{j}(v) & := \begin{cases} 
f_g(v) & \text{ if }  v \in V', \\
f_g^j(v) & \text{otherwise.}
\end{cases}
~~~~ \tilde w_{j}^g(v)  := \begin{cases} 
0 & \text{if } \tilde f_g^j(v)=0,  \\
w_j^g(v)+\varepsilon' & \text{otherwise.}
\end{cases}
\end{align*}
$\tilde O_{j}$ blocks $O_{j}$ for the following reasons: Condition 1 of the definition of block is satisfied for all $v\in V'$ since $w_{-j}^g(v)\leq w_j^g(\bar v)< \tilde w_{j}^g(v)$ for all $v\in V'$ by construction, and Condition 2 of the definition of the block is satisfied for all $v\notin V'$. To see that firm $j$'s profit increases, first note that firm $j$ benefits from hiring additional workers from $V'$, which results in an additional profit of at least $(\varepsilon-\varepsilon')\varepsilon \mu(V').$ Meanwhile, the firm may lose from paying more for existing workers, but the associated loss is bounded from above by $\varepsilon' \beta.$ Therefore, for any sufficiently small $\varepsilon',$ firm $j$'s profit increases, as desired.

Cases 4 and 5 exhaust the possibility of a core outcome in which there exists a set $V$ of positive Lebesgue measure such that $\max\{w^g_1(v),w^g_2(v)\}<v$ for almost all $v\in V$. Therefore, we proceed with the assumption that for almost any $v\in[0,1]$ there exists a firm $j$ such that $w^g_j(v)=v$.

Sixth, suppose there exist a set $V$ of positive Lebesgue measure and a firm $j$ such that $0\leq w^g_{-j}(v)<w^g_j(v)=v$ and $f_g^{-j}(v)\in(0, f_g(v))$ for all $v\in V.$ We proceed by showing that firm $j$ can fire all workers of type $(g,v)$, $v\in V$ and poach all workers of the same type from firm $-j$ by paying  higher wages than $-j$ does and increase its profit. Let $\tilde O_j$ be defined as follows:
  \begin{align*}
\tilde f_g^j(v) :=  \begin{cases} 
f_g^{-j}(v) & \text{if } v\in V, \\
f^j_g(v) & \text{otherwise.}
\end{cases}
~~~~~~~~~~~\tilde w_j^g(v) :=  \begin{cases} 
(w^g_{-j}(v)+v)/2 & \text{if } v\in V, \\
w_g^j(v) & \text{otherwise.}
\end{cases}
\end{align*}
\noindent $\tilde O_j$ blocks $O$. To see that $\tilde O_j$ blocks $O$, note that Condition 3 of the definition of block is satisfied for all $v\in V$ by construction, and Condition 2 of the definition of block is satisfied for all $v\in [0,1]\setminus V$ (i.e. there are no changes in employment or wages for workers with $v\in [0,1]\setminus V$). Therefore, it remains only to show that $\tilde O_j$ yields firm $j$ a strictly higher profit than does $O_j$, and this fact follows because $j$ earns no profit in $O_j$ from worker types $(g,v)$, $v\in V$, because $w_j^g(v)=v$, and because $j$ earns strictly positive profit in $\tilde O_j$ from worker types $(g,v)$, $v\in V$ because $\tilde f^j_g(v)=f^{-j}_g(v)>0$ (where last inequality follows from the ongoing assumption of this case) and $\tilde w_j^g(v)<v$ for all $v\in V\setminus\{0\}.$

As these six cases are exhaustive and none of them admits a core outcome, we have completed the argument that any core outcome must be such that all workers are hired and all workers receive a wage equal to their productivity. 
\end{proof}

\subsection*{Proof of \Cref{generic_segregation}}
We show the desired conclusions via multiple steps. First, we show that if firm (without loss of generality) 1 hires almost no $B-$group workers, then it hires almost all $A-$group workers and firm 2 hires almost all $B-$group workers. Therefore, the only possible core outcomes that do not feature segregation involve each firm hiring from both groups at a common wage. Second, we show that if there exists a core outcome  featuring common wages, it must be that the two firms pay different wages. Third, we show that generically there exists no core outcome in which each firm pays common wages and hires from both groups.  Throughout, it suffices to assume $w_i^g(v)\leq v$ for all $v\in[0,1]$, all $i\in\{1,2\}$, and all $g\in\{A,B\}$ by the Individual Rationality Condition.\\

\noindent \textbf{Step 1:} 
We show that in any core outcome $O$ if $f_B^1(v)=0$ for almost all $v$, then it must be that $f_A^1(v)=f_A(v)$ for almost all $v$ and $f_B^2(v)=f_B(v)$ for almost all $v$. We show this in several exhaustive cases.

First, we will show that if $f_B^1(v)=0$ for almost all $v$, then it must be that there exists a set $V\subset[0,1]$ of positive Lebesgue measure such that  $f_A^1(v)>0$ for all $v\in V$. Suppose not toward a contradiction so that  $f_B^1(v)=0$ for almost all $v$ and $f_A^1(v)=0$ for almost all $v$. Then,  $\pi_1^O=0$. By the Equal Profit Condition it must be that $\pi^O_2=0$. 
There are two possibilities to consider. First, it may be that there exist a group $g\in\{A,B\}$ and a set $V$ with positive Lebesgue measure such that $f_g^2(v)<f_g(v)$ for all $v\in V$. Without loss of generality let $g=A$. Then firm 1 can block outcome $O$ via $\tilde O_1$ where
\begin{align*}
\tilde f_A^1(v) & := 
 f_A(v)-f_A^2(v)-f_A^{1}(v) \text{ for all } v  ~~~~~~~~~~  \tilde w_1^A(v)  := 0 \text{ for all } v, 
 \\ \tilde f_B^1(v) & := 0 \text{ for all } v ~~~~~~~~~~~~~~~~~~~~~~~~~~~~~~~~~~~
\tilde w_1^B(v)  := 0 \text{ for all } v.
\end{align*}
which yields a positive profit, contradiction. Second, it may be that $f_A^2(v)=f_A(v)$ and $f_B^2(v)=f_B(v)$ for almost all $v$. By EPSW, it must be that firm 2 pays a common wage to almost all workers: there exists $w_2\geq 0$ such that $w_2=w_2^A(v)=w_2^B(v)$ for almost all $v$. Moreover, by the ongoing assumption that $w_i^g(v)\leq v$ for all $v\in[0,1]$, all $i\in\{1,2\}$, and all $g\in\{A,B\}$, it must be the case that $w_2=0$. But then for any $w^*\in(0,1)$ firm 1 can block $O$ via $\tilde O_1$ where for each $g\in\{A,B\}$:
    \begin{align*}
\tilde f_g^1(v) & := \begin{cases} 
 f_g(v)-f_g^1(v) & \text{ if } v\geq w^*,  \\
 0 & \text{ otherwise.}
\end{cases}~~~~ \tilde w_1^g(v)  := \begin{cases} 
0 & \text{if } \tilde f_g^1(v)=0, \\
w^* &  \text{otherwise.}
\end{cases}
\end{align*}
\noindent which yields positive profit, contradiction.

Second, we  will prove that if $f_B^1(v)=0$ for almost all $v$ and there exists a positive Lebesgue measure set $V$ such that  $f_A^1(v)>0$ for all $v\in V$, then there does not exist any pair of sets $V_A$ and $V_B$ with positive Lebesgue measure such that $f_A^2(v)>0$ for all $v\in V_A$ and $f_B^2(v)>0$ for all $v\in V_B$. To show this,  assume for contradiction that there exist  sets $V_A$ and $V_B$ with positive Lebesgue measure such that $f_A^2(v)>0$ for all $v\in V_A$ and $f_B^2(v)>0$ for all $v\in V_B$.  Then due to the  EPSW, it must be that firm 2 pays a common wage $w_2$ to almost all workers it hires, i.e. $w^A_2(v)=w^B_2(v)=w_2$ for almost all $v$ such that $f^2_A(v)>0$ and $f^2_B(v)>0,$ respectively. By our ongoing assumption that  $w_2^g(v)\leq v$ for all $v\in[0,1]$ and all $g\in\{A,B\}$, it must be the case that $f^2_A(v)=0$ for all $v\leq w_2$. This implies that firm 2 is earning a positive profit from the $A-$group workers it hires as $w_2<v$ for almost all $v\in V_A$. By an argument similar to those made in the proof of Case 4 of \Cref{noespw_lemma3}, firm 1 can block $O$ by ``poaching'' some subset of these workers and increase its profits.

Third, we will establish that if $f_B^1(v)=0$ for almost all $v$ and there exists a  set $V$ with positive Lebesgue measure such that $f_A^1(v)>0$ for all $v\in V$ then $f_A^2(v)=0$ for almost all $v$. To show this, by the previous case, it suffices to consider for contradiction that $f_B^2(v)=0$ for almost all $v$ and that there exists a positive Lebesgue measure set $V_A$ such that $f_A^2(v)>0$ for  all $v\in V_A$. The conclusion of \Cref{no_EPL_equilibrium_prop} applies, and the two firms ``Bertrand'' compete away profits from $A-$group workers such that they both earn zero profits, i.e. $w_1^A(v)=v$ for almost all $v$ such that $f^1_A(v)>0$ and $w_2^A(v)=v$ for almost all $v$ such that $f^2_A(v)>0$. But then an arbitrary firm $j$ can block via outcome $\tilde O_j$:
\begin{align*}\tilde f_A^j(v) & := 0 \text{ for all } v ~~~~~~~~~~~~~~~~~~~~~~~~~~~\tilde w_j^A(v)  := 0  \text{ for all } v\\
\tilde f_B^j(v) & := f_B(v)-f_B^{-j}(v) \text{ for all } v ~~~~~~~~~\tilde w_j^B(v)  := 0  \text{ for all } v
\end{align*}
The ongoing assumption that $f_B^{1}(v)=0$ for almost all $v$ and $f_B^{2}(v)=0$ for almost all $v$ demonstrates that $\tilde O_j$ indeed blocks $O$.

Fourth, we will show that if $f_B^1(v)=0$ for almost all $v$, there exists a  set $V$ with positive Lebesgue measure such that $f_A^1(v)>0$ for all $v\in V$, and $f_A^2(v)=0$ for almost all $v$ then it must be that $f_A^1(v)=f_A(v)$ for almost all $v$. This claim is shown via the second and third cases in the proof of \Cref{noespw_lemma3}.\\

\noindent \textbf{Step 2:} Step 1 implies the only remaining possible core outcomes in which firms do not completely segregate involves each firm hiring both groups of workers, i.e. there exist  sets $V^1_A, V^2_A, V^1_B, V^2_B$ with positive Lebesgue measure such that $f^1_A(v)>0$ for all $v\in V^1_A,$ $f^2_A(v)>0$ for all $v\in V^2_A,$ $f^1_B(v)>0$ for all $v\in V^1_B,$ and $f^2_B(v)>0$ for all $v\in V^2_B$. Suppose for contradiction that there exists such a non-segregation core outcome $O$. We have argued that in any such core outcome, each firm $i\in\{1,2\}$ must pay a common wage $w_i$ to almost every worker it hires. Without loss of generality, let $w_1\leq w_2$. We first claim that, if $w_1< w_2$:

\begin{enumerate}

\item For almost all $v<w_1$, all workers of  productivity $v$ are unemployed: for almost all $v<w_1$, $f^1_A(v)=f^1_B(v)=f^2_A(v)=f^2_B(v)=0$,
\item for almost all $v\in (w_2,1]$, all workers of productivity $v$ are hired by firm 2, i.e. $f^2_A(v)=f_A(v)$ and 
$f^2_B(v)=f_B(v)$, and
\item for almost all $v\in [w_1,w_2]$, all workers of productivity $v$ are hired by firm 1, i.e. $f^1_A(v)=f_A(v)$ and $f^1_B(v)=f_B(v)$.
\end{enumerate}

Point 1 follows from our previous individual rationality argument that no firm hires a positive measure of workers at wage higher than productivity in any core outcome. Point 2 is demonstrated with the following argument. Individual rationality implies  firm 2 will hire almost no workers with productivity $v\in[w_1,w_2]$. If there exists a  set $V\subset[w_2,1]$ with  positive measure such that  $f^2_A(v)<f_A(v)$ or $f^2_B(v)<f_B(v)$ for all $v\in V$, then firm 2 can block outcome $O$ via $\tilde O_2$ such that for all $g\in\{A,B\}$:
    \begin{align*}
\tilde f_g^2(v) & := \begin{cases} 
 f_g(v) & \text{ if } v\in (w_2,1],  \\
 0 & \text{ otherwise.}
\end{cases}~~~~~ \tilde w_2^g(v) := \begin{cases} 
0 & \text{if } \tilde f_g^2(v)=0, \\
w_2 &  \text{otherwise.}
\end{cases}
\end{align*}
\noindent which clearly blocks $O$ as firm 2's profit increases (from hiring additional workers at a positive marginal profit) and Condition 1 of the definition of block is satisfied for all $v \in (w_2,1]$. Point 3 follows from a similar argument as the proof of point 2.

We now claim that in any core outcome $O$  it must be that $w_1<w_2$ (recall that we have  assumed $w_1\leq w_2$). To see this, suppose for contradiction that $w_1=w_2$. If $w_1=w_2<1$,  then at least one firm $i\in\{1,2\}$ receives total profit $\pi^O_i<\int_{w_1}^{1}(v-w_1)[\beta f_A(v)+f_B(v)]dv$. But because $F_A$ and $F_B$ are atomless, for sufficiently small $\epsilon>0$, firm $i$ would receive $\int_{w_1+\epsilon}^{1}(v-w_1-\epsilon)[\beta f_A(v)+f_B(v)]dv>\pi^O_i$ by instead setting wage $w'=w_1+\epsilon$ and hiring all workers with productivity strictly greater than $w_1+\epsilon$. This is a contradiction to the assumption that $O$ is a core outcome. If $w_1=w_2=1$, then $\pi^O_1=\pi^O_2=0$ and so either firm could block $O$ by  setting any wage $w'\in(0,1)$ and hiring all workers with productivities $v\in[w',1)$, which yields profit $\int_{w'}^{1}(v-w')[\beta f_A(v)+f_B(v)]dv>0$. This contradicts the assumption that  $O$ is a core outcome. As these two cases are exhaustive, it must be the case that $w_1<w_2$ in any core outcome in which $w_1\leq w_2$.\\

\noindent \textbf{Step 3:} We now show that generically, any outcome $O$ in which  $f_A^1(v)=f_B^1(v)=f_A^2(v)=f_B^2(v)=0$ for almost all $v<w_1$, $f_A^1(v)=f_A(v)$ and $f_B^1(v)=f_B(v)$ for almost all $v\in[w_1,w_2]$, and $f_A^2(v)=f_A(v)$ and $f_B^2(v)=f_B(v)$ for almost all $v>w_2$ is not a core outcome.  We consider two exhaustive cases.

First, suppose that  $0<w_1<w_2$. Let $\pi_{i,g}$ 
represent the profit  firm $i\in\{1,2\}$ receives from all workers in group $g\in\{A,B\}$ in outcome $O$, or more formally,
$$
\pi_{i,A}:=\beta \int_{0}^{1}(v-w_i(v))f_A^i(v)dv, ~~~~~~~
\pi_{i,B}:= \int_{0}^{1}(v-w_i(v))f_B^i(v)dv.
$$
It must be the case that $\pi_{1,A}+\pi_{2,A}\geq\pi_{1,B}+\pi_{2,B}$, or vice versa. Without loss of generality, assume $\pi_{1,A}+\pi_{2,A}\geq\pi_{1,B}+\pi_{2,B}$. We claim that firm 1 can block outcome $O$ via outcome $\tilde O_1$ where for some small $\epsilon'>0$:
    \begin{align*}
\tilde f_A^1(v) & := 
 f_A(v) ~~~~~ \text{ for all } v\in [0,1]
~~~~~~\tilde w_1^A(v)  := \begin{cases} 
0 & \text{if } v<w_1, \\
w_1 &  \text{if } v\in[w_1,w_2], \\
w_2+\epsilon' &  \text{if } v>w_2.
\end{cases}\\
\tilde f_B^1(v) & := 
 0 ~~~~~~~~~~~ \text{ for all } v\in [0,1]~~~
~~~ \tilde w^B_1(v)  := 
 0 ~~~~~~~~~~~ \text{ for all } v\in [0,1]
\end{align*}
 For sufficiently small $\epsilon'$,  $\pi^{\tilde O_1}_{1}>\pi_{1,A}+\pi_{2,A}\geq \pi_1$, where the last inequality comes from the assumption that $\pi_{1,A}+\pi_{2,A}\geq\pi_{1,B}+\pi_{2,B}$ and the Equal Profit Condition $\pi_1=\pi_2=\frac{1}{2}(\pi_{1,A}+\pi_{2,A}+\pi_{1,B}+\pi_{2,B})$. For almost all $A-$group workers, Condition 1 of the definition of block is satisfied, and for almost all $B-$group workers, Condition 4 of the definition of block is satisfied. Therefore, $\tilde O_1$ blocks $O$.

Second, suppose that $0=w_1<w_2$. The argument is analogous to case 1 if $\pi_{1,A}+\pi_{2,A}>\pi_{1,B}+\pi_{2,B}$ or $\pi_{1,A}+\pi_{2,A}<\pi_{1,B}+\pi_{2,B}$.  
Below, we argue that the condition $\pi_{1,A}+\pi_{2,A}=\pi_{1,B}+\pi_{2,B}$ is non-generic on the space of distributions:

First, we will show the openness of the set of distributions $F_A,F_B$ such that $\pi_{1,A}+\pi_{2,A}>\pi_{1,B}+\pi_{2,B}$ or $\pi_{1,A}+\pi_{2,A}<\pi_{1,B}+\pi_{2,B}$. To do so, assume that the former inequality holds at $F_A,F_B$ (the other case is analogous), where the core outcome wages of firms 1 and 2 are $0$ and $w_2$, respectively. Let $\epsilon>0$ be small enough that \begin{align}
\pi_{1,A}+\pi_{2,A}>\pi_{1,B}+\pi_{2,B}+\epsilon.\label{equation:epsilon-inequality}
\end{align} 
By the Portmanteu Theorem,\footnote{See Theorem 2.8.1 (c) of \cite{Ash}. 
 } for any $\tilde w_2 \in [0,1]$ it follows that 
\begin{align}
\beta \int_0^{\tilde w_2} v f'_A(v)dv \to \beta \int_0^{\tilde w_2} v f_A(v)dv, \notag 
\end{align}
as $(F'_A,F'_B)$ with densities $f'_A,f_B'$ converges in weak$^*$ topology to $(F_A,F_B).$ 
Thus, there is a neighborhood of $(F_A,F_B)$ such that, for any $(F'_A,F'_B)$ in that neighborhood, we have
\begin{align}
\beta \int^{\tilde w_2}_0 v f'_A(v)dv \in \left ( \beta \int^{\tilde w_2}_0 v f_A(v)dv-\frac{\epsilon}{8},  \beta \int^{\tilde w_2}_0 v f_A(v)dv+\frac{\epsilon}{8}\right ).\label{inequality2}
\end{align}
 Next, again by Portmanteu Theorem, it follows that
for any $\gamma< w_2$ (where  $w_2$ is the wage paid by firm 2 when the distributions of worker productivities are given by $(F_A,F_B)$),
$$
\beta \int_0^{w_2-\gamma} v f'_A(v)dv+\int_0^{w_2-\gamma} v f'_B(v)dv \to \beta \int_0^{w_2-\gamma} v f_A(v)dv+\int_0^{w_2-\gamma} v f_B(v)dv 
$$
and
$$
\hspace{-17mm}\beta \int_{w_2-\gamma}^1 (v-(w_2-\gamma)) f'_A(v)dv+\int_{w_2-\gamma}^1 (v-(w_2-\gamma)) f'_B(v)dv \to \beta \int_{w_2-\gamma}^1 (v-(w_2-\gamma)) f_A(v)dv+\int_{w_2-\gamma}^1 (v-(w_2-\gamma)) f_B(v)dv,
$$
as $(F'_A,F'_B)$  converges in weak$^*$ topology to $(F_A,F_B)$. 
Because we know that
$$
\beta \int_0^{w_2-\gamma} v f_A(v)dv+\int_0^{w_2-\gamma} v f_B(v)dv
<\beta \int_{w_2-\gamma}^1 (v-(w_2-\gamma)) f_A(v)dv+\int_{w_2-\gamma}^1 (v-(w_2-\gamma)) f_B(v)dv$$
 from the construction of outcome $O$ at $(F_A,F_B)$, it follows that
there exists a neighborhood of $(F_A,F_B)$ such that, for any pair of distributions $(F'_A,F_B')$  in that neighborhood, we have 
$$
\beta \int_0^{w_2-\gamma} v f'_A(v)dv+\int_0^{w_2-\gamma} v f'_B(v)dv <\beta \int_{w_2-\gamma}^1 (v-(w_2-\gamma)) f'_A(v)dv+\int_{w_2-\gamma}^1 (v-(w_2-\gamma)) f'_B(v)dv,
$$
and, with an analogous argument, that 
$$
\beta \int_0^{w_2+\gamma} v f'_A(v)dv+\int_0^{w_2+\gamma} v f'_B(v)dv >\beta \int_{w_2+\gamma}^1 (v-(w_2+\gamma)) f'_A(v)dv+\int_{w_2+\gamma}^1 (v-(w_2+\gamma)) f'_B(v)dv.
$$
These inequalities imply that if firm 2 pays $w'_2$ to all workers it employs and firm 1 pays $w'1=0$ to all workers it employs, both given distributions $(F'_A,F'_B)$, such that the Equal Profit Condition is satisfied, then $w'_2 \in (w_2-\gamma,w_2+\gamma)$.
Moreover, by monotonicity of the expression $
\beta \int_0^{w_2'} v f'_A(v)dv$ in $w_2'$, it follows that 
$
\beta \int_0^{w_2'} v f'_A(v)dv  \in \left ( \beta \int_0^{w_2-\gamma} v f'_A(v)dv, \beta \int_0^{w_2+\gamma} v f'_A(v)dv \right )$ for any distribution $(F'_A,F'_B)$ in the  neighborhood. Take a sufficiently small $\gamma>0$ such that   
\begin{align}
    \beta \int_0^{w_2-\gamma} v f_A(v)dv>\beta \int_0^{w_2} v f_A(v)dv-\frac{\epsilon}{8} \text{ and } \beta \int_0^{w_2+\gamma} v f_A(v)dv <\beta \int_0^{w_2} v f_A(v)dv+\frac{\epsilon}{8}.\label{lemmna-inequality}
\end{align} 
Moreover, we apply \eqref{inequality2} to $\tilde w_2=w_2-\gamma$ and $\tilde w_2=w_2+\gamma$ to obtain 
\begin{align}
\beta \int^{w_2-\gamma}_0 v f'_A(v)dv \in \left ( \beta \int^{w_2-\gamma}_0 v f_A(v)dv-\frac{\epsilon}{8},  \beta \int^{w_2-\gamma}_0 v f_A(v)dv+\frac{\epsilon}{8}\right ),\notag \\    
\beta \int^{w_2+\gamma}_0 v f'_A(v)dv \in \left ( \beta \int^{w_2+\gamma}_0 v f_A(v)dv-\frac{\epsilon}{8},  \beta \int^{w_2+\gamma}_0 v f_A(v)dv+\frac{\epsilon}{8}\right ) \label{inequality3}. \end{align}
Thus, by \eqref{lemmna-inequality} and \eqref{inequality3} it follows that $\pi'_{1,A} \in (\pi_{1,A}-\frac{\epsilon}{4},\pi_{1,A}+\frac{\epsilon}{4})$. 
Similarly we can establish that $\pi'_{i,g}$ for $i \in \{1,2\},g \in \{A,B\}$, each satisfies $\pi'_{i,g} \in (\pi_{i,g}-\frac{\epsilon}{4},\pi_{i,g}+\frac{\epsilon}{4})$.
This and \eqref{equation:epsilon-inequality} imply $\pi'_{1,A}+\pi'_{2,A}>\pi'_{1,B}+\pi'_{2,B},$
completing the proof of openness.

Second, we will show the denseness of the set of distributions $F_A,F_B$ such that $\pi_{1,A}+\pi_{2,A}>\pi_{1,B}+\pi_{2,B}$ or $\pi_{1,A}+\pi_{2,A}<\pi_{1,B}+\pi_{2,B}$. To do so, suppose for a given $F_A,~F_B$ satisfying our regularity conditions there is a core outcome that does not feature segregation, i.e. $\pi_{1,A}+\pi_{2,A}=\pi_{1,B}+\pi_{2,B}.$ We show that for any $\gamma_1>0$ there exist distributions $F'_A,~F'_B$ satisfying our regularity conditions such that $\vert F'_A(v)-F_A(v)\vert \leq \gamma_1$ and $\vert F'_B(v)-F_B(v)\vert \leq \gamma_1$ for all $v\in[0,1]$ and for which there does not exist a non-segregation core outcome.\footnote{Note that our pointwise notion of ``closeness'' corresponds to that of the weak-* topology on the set of distributions satisfying our regularity conditions.} We show this by modifying the distributions such that the same wages by firms 1 and 2, $0$ and $w_2$, respectively, yield equal profit, but show that aggregate profit derived from $A-$group workers no longer equals that derived from $B-$group workers.

Take any $\gamma\in(0,\min\{\underline f_A,\underline f_B,\gamma_1\})$ and any $w^*\in(0,w_2)$ such that $w_2-w^*<1-w_2$. Define $F'_A$ and $F'_B$ via their respective densities: 
\begin{align*}
\hspace{-7mm}f'_A(v) & := \begin{cases} 
f_A(v) & \text{ if }  v <w^*,\\
f_A(v)-\frac{\gamma}{\beta} & \text{ if }  v \in[w^*,w_2),\\
f_A(v)+\frac{\gamma}{\beta} & \text{ if }  v \in[w_2,w_2+(w_2-w^*)),\\
f_A(v) & \text{ otherwise}
\end{cases},~~~~~~~~~
f'_B(v):= \begin{cases} 
f_B(v) & \text{ if }  v <w^*,\\
f_B(v)+\gamma & \text{ if }  v \in[w^*,w_2),\\
f_B(v)-\gamma & \text{ if }  v \in[w_2,w_2+(w_2-w^*)),\\
f_B(v) & \text{ otherwise}
\end{cases}.
\end{align*}

Noting there exists a $\beta$ measure of $A-$group workers, the construction of $F'_A$ and $F'_B$ imply 

\begin{align*}
\hspace{-10mm}\pi_{2,A}+\pi_{2,B}= \beta \int_{w_2}^1 [v-w_2]f_A(v)dv+\int_{w_2}^1 [v-w_2]f_B(v)dv 
    = \beta \int_{w_2}^1 [v-w_2]f'_A(v)dv+\int_{w_2}^1 [v-w_2]f'_B(v)dv
    = \pi'_{2,A}+\pi'_{2,B},
\end{align*}

\begin{align*}
    \pi_{1,A}+\pi_{1,B}= \beta \int_{0}^{w_2} vf_A(v)dv+\int_{0}^{w_2} vf_B(v)dv 
    = \beta \int_{0}^{w_2} vf'_A(v)dv+\int_{0}^{w_2} vf'_B(v)dv= \pi'_{1,A}+\pi'_{1,B},
\end{align*}

\noindent where $\pi'_{i,g}$ represents firm $i$'s profit from $g-$group workers it hires given distribution $F'_g$. The above two equations imply that $\pi'_i:=\pi'_{i,A}+\pi'_{i,B}=\pi^{O_i}_i \textrm{ for } i\in\{1,2\}.$
It is also the case that 
\begin{align*}
\pi_{1,A}+\pi_{2,A}-(\pi'_{1,A}+\pi'_{2,A})= \beta \int_{w_2}^{w_2+(w_2-w^*)} [v-w_2]\frac{\gamma}{\beta}dv+\beta\int_{w^*}^{w_2} v\frac{\gamma}{\beta}dv
= \gamma(w_2-w^*)w_2 >0,
\end{align*}where the final inequality comes from the assumptions that $w_2>w^*>0$. Similarly it can be shown that  
$$\pi_{1,B}+\pi_{2,B}-(\pi'_{1,B}+\pi'_{2,B})<0.$$
The  two preceding equations imply that 
$\pi'_{1,A}+\pi'_{2,A}<\pi_{1,A}+\pi_{2,A},$ and 
$\pi_{1,B}+\pi_{2,B}<\pi'_{1,B}+\pi'_{2,B}.$
The initial condition that $\pi_{1,A}+\pi_{2,A}=\pi_{1,B}+\pi_{2,B}$ further implies that 
$\pi'_{1,A}+\pi'_{2,A}<\pi'_{1,B}+\pi'_{2,B}$ which completes the claim. \qed

\subsection*{Proof of \Cref{two-sided-EPL-equilibrium}}

\noindent \textbf{Proof of Part \ref{continuous-equilibria}:} We first constructively show that there exist a continuum of core outcomes $O(\gamma,\gamma')$ with group-based segregation, parameterized by the following class of wage functions for $\gamma, \gamma' \in [0,1]$:
\begin{align*}
 w_1(v) := \begin{cases} 
v & \text{if } v<\gamma, \\
\gamma &  \text{otherwise.}
\end{cases}~~~~~  w_2(v) := \begin{cases} 
0 & \text{if }  v<\gamma', \\
v &  \text{otherwise.}
\end{cases}
\end{align*}
\noindent with firms 1 and 2 hiring all $A-$ and $B-$group workers, respectively.
We show that these wage functions permit a core outcome if $\gamma \ge \gamma'$ and 
\begin{align}
\pi^{O(\gamma,\gamma')}_1=\beta \stackrel[\gamma]{1}{\int}\left(v-\gamma \right) f_A(v)dv=\stackrel[0]{\gamma'}{\int} v f_B(v)dv=\pi^{O(\gamma,\gamma')}_2. \label{equal-profit-condition}
\end{align}

To see that the above wage functions permit a core outcome $O(\gamma,\gamma')$, first note that there is no firm $i$ and outcome $\tilde O_i$ with group-based segregation that blocks $O(\gamma,\gamma')$. Suppose without loss of generality that in outcome $\tilde{O}_i$, firm $i$ hires a positive measure of $A-$group workers only with associated wage function $\tilde w_i(\cdot)$. By the definition of block, for almost every $v$, $\tilde f^i_A(v)>0$ only if $\tilde w_i(v)\geq w_1(v)$, implying that $\pi^{\tilde O_i}_i\leq \pi^{O(\gamma,\gamma')}_1=\pi^{O(\gamma,\gamma')}_2$. Therefore, $\tilde {O}_i$ is not a block.

Thus, consider a firm $i$ and a potential blocking outcome $\tilde O_i$ in which $i$ employs positive measures of both $A-$ and $B-$group workers. Under EPSW, such a  firm pays a common wage $w$ to almost all workers. There are two exhaustive cases:
\begin{itemize}
\item Consider $w \le \gamma$. By the definition of block, firm $i$ does not benefit from hiring $A-$group workers, as almost all $A-$group workers with $v\geq w$ are paid $w_1(v)=\min\{v,\gamma\} \ge w$ in outcome $O(\gamma,\gamma').$ Therefore, $\pi^{\tilde O_i}_i$ is upper bounded by the case in which firm $i$ hires only $B-$group workers. It has been shown in the preceding paragraph, however, that there exists no such block.

\item Consider $w > \gamma$. By the definition of block, firm $i$ does not benefit from hiring $B-$group workers, as almost all $B-$group workers with $v\geq w$ are paid $w_2(v)=v$ in outcome $O(\gamma,\gamma').$ Therefore, $\pi^{\tilde O_i}_i$ is upper bounded by the case in which firm $i$ hires only $A-$group workers. It has been shown in the preceding paragraph, however, that there exists no such block.

\end{itemize}

The preceding arguments demonstrate that any outcome $O(\gamma,\gamma')$ with $\gamma\geq \gamma'$ and $\pi^{O(\gamma,\gamma')}_1=\pi^{O(\gamma,\gamma')}_2$ is a core outcome. To see that there is a continuum of pairs $\gamma,\gamma'$ that satisfy these conditions, note that if $\gamma=\gamma'=1$, then the right-hand side of \eqref{equal-profit-condition} is strictly positive while the left-hand side is zero, and hence the former is strictly larger than the latter. Because of continuity of the left-hand side in $\gamma$, the left-hand side is strictly smaller than the right-hand side for any $\gamma$ that is sufficiently close to $1$. Now, noting continuity of the right-hand side in $\gamma'$ and the fact that it is equal to zero for $\gamma'=0$, by intermediate value theorem, it follows that there exists $\gamma'$ such that  \eqref{equal-profit-condition} holds with equality. This concludes the proof of part 1. \qed\\

\noindent \textbf{Proof of Part \ref{gap_increases_profit}:}  Recall that $AW^O_A$ and $AW^O_B$ denote the average wages for  $A-$ and $B-$group workers, respectively.
In addition, denote $TS_A:=\beta\mathbb{E}_A(v)$ and $TS_B:=\mathbb{E}_B(v)$ to be the total surpluses created by $A-$ and $B-$group workers (together with the firms hiring them), respectively.
Then, note that $\pi^O_1=TS_A - \beta AW^O_A$ and $\pi^O_2=TS_B - AW^O_B$. The Equal Profit Condition tells us
\begin{align}
 & TS_A - \beta AW^O_A  = TS_B - AW^O_B  \notag \\
\Longleftrightarrow &
\label{eq:gap_widens}
TS_A - TS_B + (1-\beta)AW^O_A= AW^O_A - AW^O_B.
\end{align}

\noindent \emph{Proof of the ``only if'' part:}
Suppose that $AW^O_A - AW^O_B \le AW^{O'}_A - AW^{O'}_B$ for core outcomes $O$ and $O'$. Then \eqref{eq:gap_widens} implies that $TS_A - TS_B+ (1-\beta)AW^O_A \le TS_A - TS_B+ (1-\beta)AW^{O'}_A$. Note that $TS_A - TS_B$ is a constant. Since $\beta>1$, this implies that $AW^O_A \ge AW^{O'}_A$. Since firm 1 hires almost all $A-$group workers in both $O$ and $O'$, this implies that $\pi^O_1 \le \pi^{O'}_1$. Then  $\pi^O_2 = \pi^O_1 \le \pi^{O'}_1 = \pi^{O'}_2$, where the equalities follow from the Equal Profit Condition. \qed\\

\noindent \emph{Proof of the ``if'' part:}
Suppose $\pi^{O}_1 \le \pi^{O'}_1$ for two core outcomes $O$ and $O'$ (which implies $\pi^O_2 = \pi^O_1 \le \pi^{O'}_1 = \pi^{O'}_2$ by the Equal Profit Condition). Since firm 1 hires almost all $A-$group workers in both $O$ and $O'$, this implies that $AW^{O}_A \ge AW^{O'}_A$. Since $\beta>1$, this
implies that $TS_A - TS_B+ (1-\beta)AW^O_A \le TS_A - TS_B+ (1-\beta)AW^{O'}_A$. Thus, by \eqref{eq:gap_widens},  $ AW^{O}_A - AW^{O}_B \le AW^{O'}_A - AW^{O'}_B.$ \qed \\

\noindent \textbf{Proof of Part \ref{gap-is-larger-with-EPL}:} Consider the core outcome in which $w_1(v)=w_2(v)=v$ for all $v \in [0,1]$ and all $A-$ and $B-$group workers are hired by firms 1 and 2, respectively. The wage gap in this core outcome is the same as in any core outcome without EPSW, and both firms' profits are zero. It is straightforward to see that in any non-equivalent core outcome firm profits are strictly positive. Thus, applying the conclusion of Part \ref{gap_increases_profit} completes the claim. \qed\\

\noindent \textbf{Proof of Part \ref{ratio_increases_profit}:}  We first show that if $\beta \mathbb{E}_A[v]>\mathbb{E}_B[v]$ then the analogue of Part \ref{gap_increases_profit} holds. 
Let $\pi^O:=\pi^O_1=\pi^O_2$ be the profit of each firm in core outcome $O$, where the equality follows from the Equal Profit Condition. Then, 
$AW^O_A=(\beta \mathbb{E}_A[v]-\pi^O)/\beta$ and $AW^O_B=\mathbb{E}_B[v]-\pi^O$ because every worker is hired in any core outcome. So, the wage ratio is
$AW^O_A/AW^O_B=(\beta \mathbb{E}_A[v]-\pi^O)/[\beta(\mathbb{E}_B[v]-\pi^O)]$ if $AW^O_B \neq 0$.  By individual rationality, it must be the case that $\pi^O\geq 0$ and by the Equal Profit Condition and the assumption that $\beta \mathbb{E}_A[v]>\mathbb{E}_B[v]$, it must be the case that $\pi^O\leq \mathbb{E}_B[v].$ Note that if $\pi^O=0$ (which is equivalent to $AW_B^O=\mathbb{E}_B[v]$) then $AW^O_A/AW^O_B=\mathbb{E}_A[v]/\mathbb{E}_B[v]$, and if $\pi^O=\mathbb{E}_B[v]$ (which is equivalent to $AW_B^O=0$) then by definition the wage ratio is $\infty$. Therefore, it remains only to show that the wage ratio is strictly increasing in $\pi^O$ over the interval $(0,\mathbb{E}_B[v])$. To see that this is true, consider two core outcomes $O$ and $O'$ such that $\pi^O>\pi^{O'}$. $AW^O_A/AW^O_B>AW^{O'}_A/AW^{O'}_B$ if and only if $(\beta \mathbb{E}_A[v]-\pi^O)/(\mathbb{E}_B[v]-\pi^O)>(\beta \mathbb{E}_A[v] - \pi^{O'})/(\mathbb{E}_B[v]- \pi^{O'})$. Cross multiplying reveals this latter condition is equivalent to $(\mathbb{E}_B[v]-\pi^{O'})(\beta \mathbb{E}_A[v]-\pi^O)>(\beta \mathbb{E}_A[v]-\pi^{O'})(\mathbb{E}_B[v]-\pi^O),$ which is in turn equivalent to $\beta \mathbb{E}_A[v]>\mathbb{E}_B[v]$, which is true by assumption, thus demonstrating the desired claim. 

 To see that the analogue of Part \ref{gap-is-larger-with-EPL} holds, consider the core outcome $O$ in which $w^O_1(v)=w^O_2(v)=v$ for all $v \in [0,1]$ and all $A-$ and $B-$group workers are hired by firms 1 and 2, respectively. Then $\pi^O=0$, and we have established above that the wage ratio in this core outcome is the same as in any core outcome without EPSW. Now, it is straightforward to see that  in any non-equivalent core outcome firm profits are strictly positive. Therefore, the fact that  the wage ratio is strictly increasing in the profit in any core outcome completes the claim. \qed

\subsection*{Proof of \Cref{existence-result}}

\textbf{Proof of Equilibrium Existence:}
Fix any search intensity $r$ satisfying the previously imposed regularity conditions. In order to show the existence of an equilibrium comporting with $r$, we consider another game which ``automates'' the search stage of our original game and is payoff equivalent for firms to our original game. Let  $\mathcal{G}$ denote the original game, where we suppress model primitives. Let $\mathcal{G}'_r$ be a game of perfect information defined as follows:
\begin{itemize}
    \item The set of players in $\mathcal{G}'_r$ is the set of firms in $\mathcal{G}$, i.e.  workers are not players in $\mathcal{G}'_r$. The same set of firms are ``unconstrained'' in both games.
    \item The set of time periods in $\mathcal{G}'_r$ is $\{-\underline t,\underline t+1,...,-1,0,1,2,...\}$.
    \item Firms have no actions available to them in $\mathcal{G}'_r$ in periods $t< 0.$ In each period $t\geq 0$, unconstrained firms have no actions available to them, and each of the other firms can select from the following options: segregate for group $A$, segregate for group $B$, or desegregate at any endogenously selected wage $w\in[0,1]$. As in $\mathcal{G}$, the decision to segregate is irreversible, i.e. once a firm has segregated for a group $g$, then that firm continues to be segregated for group $g$ in all subsequent periods. Moreover, the segregation decision is made sequentially, following the same order as in $\mathcal{G}$.
    \item Given any history of play up to and including some arbitrary time period $t$ in $\mathcal{G}'_r$, period-$t$ payoffs for each firm equal the period-$t$ payoff each firm receives in $\mathcal{G}$ given the same history of (de)segregation decisions and assuming search, bargaining, and exogenous departure rates follow according to $r$, \eqref{worker-surplus-solution}, and $d$, respectively.
    \item All players in $\mathcal{G}'_r$ discount future payoffs according to the same geometric rate $\delta$.
\end{itemize}
Let $s'$ be a subgame perfect equilibrium of $\mathcal{G}'_r$. We claim that there is a corresponding equilibrium $s$ in $\mathcal{G}$ in which the firms make the same (de)segregation decision following a common history of (de)segregation decisions in $s$ and $s'$, and, both on and off path,  workers search according to intensity $r$. Clearly such a strategy profile $s$ satisfies our regularity conditions, and so it remains only to show that there is no decision node at any $t\geq 0$ at which any player can improve her payoff by deviating from $s$. To see this, note that by construction no firm in $\mathcal{G}$ has an incentive to deviate at any (de)segregation decision node, because no firm in $\mathcal{G}'_r$ has any incentive to deviate at any decision node by the equilibrium hypothesis. Also, no worker in $\mathcal{G}$ has an incentive to deviate her search behavior, as the selected bargaining path has been shown to be optimal for each worker.  By the one-shot deviation principle, $s$ then satisfies subgame perfection in $\mathcal{G}$, as desired.

It therefore remains only to show that there exists a pure-strategy subgame perfect equilibrium $s'$ in game $\mathcal{G}'_r$, which follows from \cite{har85perfect}: $\mathcal{G}'_r$ is a game of perfect information and clearly satisfies all the regularity conditions required in \cite{har85perfect}, and therefore admits at least one pure-strategy subgame perfect equilibrium. \qed\\

\noindent\textbf{Proof of Part \ref{claim-segregation}:} To show Part \ref{claim-segregation},  suppose for contradiction that, for any $t\geq 0$, there exists a pure-strategy equilibrium in which there exists a firm $i$ that is desegregated at time $t$.  Let $\pi_i=\frac{r_i^A}{\sum_{i' \in I}r_{i'}^A}\Delta^n\beta\mathbb{E}_A[v]>0$ be the  per-period profit firm $i$ receives from hiring $A$-group workers at wages determined by \eqref{worker-surplus-solution} born in the current period when all other constrained firms are also segregated for $A.$\footnote{\label{derivation footnote} Noting that $W^v$ as denoted in \eqref{worker-surplus-solution} is equal to zero because all firms are assumed to be segregated for group $A$, the (infinitesimal) discounted expected lifetime profit that a worker of type $(v,g)$ accrues to firm $i$ is
 $\Delta^{ n}  V^v,$
and therefore the (unique, as shown in the discussion following \eqref{worker-surplus-solution}) per-period wage for such a worker is $w=\left(1-\Delta^{ n}\right )  v $
because
$\sum_{\tau=1}^{\infty} \delta^{\tau-1} (1-d)^{\tau-1} w =\left (1-\Delta^{ n}\right )\sum_{\tau=1}^\infty \delta^{\tau-1} (1-d)^{\tau-1}v= \left (1-\Delta^{ n}\right )  V^v,$ where the first equality follows by substituting in the specified value for $w$ and the second inequality follows from \eqref{V-definition}. Note that the final term is the residual surplus accruing to the worker from \eqref{worker-surplus-solution}.
 Therefore, $\pi_i=\frac{r_i^A}{\sum_{i' \in I}r_{i'}^A}\Delta^n\beta\mathbb{E}_A[v]>0.$
 }
 Also, let $\hat V$ be any finite upper bound on the present discounted value for a firm when it hires all workers at wage zero in each period, starting at the present period indefinitely into the future.
 Let $\underline T$ be a period such that firm $i$'s profit from existing workers at period $\underline T$ is $\varepsilon \in (0,\pi_i);$ note that because desegregated firms hire no new workers on path and because workers exit the market independently with probability $d>0$ in each period, such a period $\underline T$ must exist.  Let $T_i^1>\underline T$ be such that 
  \begin{align}
     \sum_{t=\underline T}^{\infty} \delta^{t-\underline T} (1-d)^{t-\underline T} \pi_i > \delta^{T^1_i-\underline T} \hat V+\sum_{t=\underline T}^{T^1_i-1} \delta^{t-\underline T} (1-d)^{t-\underline T} \varepsilon .\label{rough-inequality}
 \end{align}
 
Because $\pi_i>\varepsilon$ and $\hat V$ is finite, such $T_i^1$ exists. The left-hand side of \eqref{rough-inequality} is a lower bound of the continuation value at period $\underline T$ of segregating for $A$ at period $\underline T$, while the right-hand side is an upper bound of the continuation value at period $\underline T$ from playing any strategy in which the firm desegregates at least until period $T_i^1.$ 
Thus, \eqref{rough-inequality} implies firm $i$ has a profitable deviation, leading to a contradiction with the hypothesis that there exists an equilibrium in which $i$ remains desegregated after time $T_i^1$. Noting that there are finitely many firms, our argument further establishes the existence of a uniform bound $T^1$ such that all firms are segregated by the end of time $T^1$ in any equilibrium, as desired. \qed 
\\

 \noindent \textbf{Proof of Part \ref{wage-ratio-result} and that every newly born worker is hired in equilibrium:} 
We first show that every newly born worker is hired in equilibrium. To see this, note that \eqref{worker-surplus-solution} implies that every newly born worker immediately reaches a bargaining agreement with the first firm that she bargains with in any time $t$.
 
Next, we show the claim in Part 2 holds. Let $T^1$ be the time period identified in the proof of Part \ref{claim-segregation} such that all firms are segregated by the end of time $T^1$ in any equilibrium. For any time $t\geq T^1$,  the ``outside option'' $W^v=0$ for all newly born workers with any productivity $v$ because there are no desegregated firms.  Then \eqref{worker-surplus-solution} implies that the  surplus  generated to a worker of type $(g,v)$ who is born at any $t\geq T^1$ is given by 
\begin{align}
\left (1-\Delta^{ n_g+n_U} \right ) V, 
\end{align}
and the worker's wage at each period (see \Cref{derivation footnote}) is 
\begin{align}
\left (1-\Delta^{n_g+n_U} \right ) v.
\end{align}
Thus, the average wage of workers in group $g$ is 
\begin{align}
\left (1-\Delta^{n_g+n_U} \right ) \mathbb{E}_g[v] + \epsilon_{g,t},
\end{align}
where $\epsilon_{g,t}$ denotes the contribution 
to the average wage of the workers of group $g$ who are hired at earlier periods and have not exited the market by period $t.$ Note that, because of exogenous exit of workers, $\epsilon_{g,t}$ converges to zero as $t \to \infty$ for each group $g.$ Consider the case in which $n_A>n_B$ (the other cases of interest follow a nearly identical argument). The wage ratio converges to 
\begin{align}
\frac{1-\Delta^{n_A+n_U}}{1-\Delta^{n_B+n_U}} \cdot \frac{\mathbb{E}_A[v]}{\mathbb{E}_B[v]}>\frac{\mathbb{E}_A[v]}{\mathbb{E}_B[v]},
\end{align}
where the inequality follows because $n_A>n_B$. Noting that this inequality is strict and that the right-hand side is equal to  the wage ratio before EPSW (see Observation \ref{observation_pre_period}), this inequality implies that there is some $T\geq T^1$ such that the wage ratio is strictly more in favor of group $A$ for any $t\geq T$ compared to any period $t'<0$. \qed

\subsection*{Proof of \Cref{high d result}}

\noindent \textbf{Proof of Part \ref{near-equal-profit-claim}:}
Suppose for contradiction that $n_A>x_A+1$ (the case with $n_A<x_A-1$ follows a symmetric argument, and is therefore omitted). Then, $n_A$ and $n_B$ satisfy $n_A-1>x_A$ and $n_B+1<x_B$. Noting that the left- and right-hand sides of \eqref{search-equal-profit-condition} are decreasing in variables $x_A$ and $x_B$, respectively,  \eqref{search-equal-profit-condition} implies that 
\begin{align} \label{intermediate-equal-profit-condition}
    \frac{\Delta^{n_A+n_U-1}}{n_A+n_U-1} \cdot \beta \mathbb{E}_A[v] & <     \frac{\Delta^{x_B+n_U+1}}{n_B+n_U+1} \cdot \mathbb{E}_B[v].
\end{align}
Considering the left-hand side of the previous equation and noting $n_A>n_A-1$, it follows that 
    \begin{align}
    \frac{\Delta^{n_A+n_U}}{n_A+n_U} \cdot \beta \mathbb{E}_A[v] & <     \frac{\Delta^{x_B+n_U+1}}{n_B+n_U+1} \cdot \mathbb{E}_B[v].\label{almost-payoff-inequality}
\end{align}
 Let 
\begin{align}
\Pi:=\sum_{t=0}^\infty \left [ \delta^{t} (1-d)^t\right ] \left [ \sum_{t'=-\underline{t}}^{-1} (1-d)^{t'+\underline{t}} \beta \mathbb{E}_A[v] \right ]. \label{big pi definition}
\end{align}

    We note that $\Pi$ is the discounted sum of any firm's payoff (evaluated at $t=0$) from hiring and keeping all $A-$group workers who are born  before EPSW at wage zero. Therefore, $\Pi$ gives an upperbound on the discounted sum of any firm's payoffs (evaluated at any period post EPSW)  from hiring and keeping all $A-$group workers who are born and matched to the firm before EPSW.
Noting that $\Pi$ converges to zero as $d\to 1$, \eqref{almost-payoff-inequality} implies that there exists $d^* \in (0,1)$ such that, for any $d>d^*$ and $t_0 \in \mathbb N,$
   \begin{align}
   \Pi + \sum_{t'=t_0}^\infty \sum_{t=t'}^\infty \left [
 \delta^{t-t_0} (1-d)^{t-t'} \frac{\Delta^{n_A+n_U}}{n_A+n_U}\cdot \beta \mathbb{E}_A[v] \right ]
 <  \sum_{t'=t_0}^\infty \sum_{t=t'}^\infty \left [
 \delta^{t-t_0} (1-d)^{t-t'}\frac{\Delta^{x_B+n_U+1}}{n_B+n_U+1} \cdot \mathbb{E}_B[v] \right]. \label{long-inequality}
\end{align}
Now, let $T$ be the final time period during which any firm segregates for group $A$ on equilibrium path--note that since the equilibrium is in pure strategies, such a  period $T$ is well defined. Then, the payoff (evaluated at $T$) of any firm $i$ that segregates for $A$ at time $T$   is at most as high as the left-hand side of \eqref{long-inequality} with $t_0=T$. Meanwhile, if firm $i$ deviates and segregates for group $B$ at time $T$, then $i$'s continuation payoff (evaluated at $T$) following this deviation is at least as high as the right-hand side of \eqref{long-inequality} with $t_0=T$.\footnote{To see this, note that $n_A-1$ firms other than $i$ segregate for $A$  by the end of time $t_0$. So, in any period following $i$'s deviation, there are at most $n_B+1$ firms segregating for $B$.} Therefore, \eqref{long-inequality} implies that $i$ is strictly better off by deviating, a contradiction to the equilibrium assumption.
 \qed\\

\noindent \textbf{Proof of Part \ref{segregation-pattern-claim}:}
Suppose $n^t_A<n^t_B$ for large $t$ for contradiction. Assume, without loss of generality, $t$ is large enough that every firm has segregated by then on the equilibrium path. Consider the last period at which a firm segregates for $B$. For this firm,  the payoff from newly hired workers in each period is strictly higher if it deviates to segregating for $A$ rather than segregating for $B$,  because both the firm's surplus from each worker and the volume of workers  reaching a bargaining agreement with it are larger if it is segregated for $A$ than when it is segregated for $B$. Because of the assumption that $d \in (0,1)$ is sufficiently large, this means that this  firm has an incentive to deviate to segregating for $A$ in this period, contradicting the equilibrium hypothesis. \qed

\setcounter{equation}{0}
\renewcommand{\theequation}{B\arabic{equation}}

\setcounter{proposition}{0}
\renewcommand{\theproposition}{B\arabic{proposition}}

\setcounter{lemma}{0}
\renewcommand{\thelemma}{B\arabic{lemma}}

\setcounter{remark}{0}
\renewcommand{\theremark}{B\arabic{remark}}

\section{Microfoundations and Theoretical Discussions}\label{microfoundation app}
In this appendix, we present several microfoundations and additional discussions pertaining to our static model in \Cref{sec:static model} and to our search model in \Cref{section:search-model}.

\subsection{Non-cooperative Game Formulation}\label{section-noncooperative-games} We describe a non-cooperative game played by firms and workers which is an analogue to our model in \Cref{sec:static model}. We show that the pure-strategy  subgame perfect Nash equilibrium outcomes of this game are equivalent to the core of the static game analyzed in the main text. We then discuss how some of the mathematical machinery we produce in this appendix is applicable to the search model presented in \Cref{section:search-model}.

The set of players are composed of two firms $1,2$ and a continuum of workers. The set of workers are given by $\{A,B\} \times [0,1]^2$. A worker is identified by a tuple  $(g,v,\i) \in \{A,B\} \times [0,1]^2$, where $g$ is the group that the worker belongs to, $v$ is her productivity, and $\i$ is an index. For each $g$, we assume that there is measure $\mu_g$ that is given as a product measure of $\mu^\text{p}_g$ and $\mu^\text{w}_g.$ More specifically, let $\mu^{\text{p}}_g$ be the Lebesgue measure on Lebesgue  $\sigma$-algebra $\mathcal{B}^{\text{p}}$ on $[0,1]$, representing the measure of productivity.\footnote{Formally, we define a Borel measure $\tilde \mu^\text{p}_g$ such that
 $
 \tilde \mu^\text{p}_g([0,x])=F_g(x)$  for all $x \in [0,1]$,
which exists and is unique \citep[Proposition 25, Section 20.3]{royden}.  Let $\mu^\text{p}_g$ be the unique measure defined on the Lebesgue measurable sets and coincides with $\tilde \mu^\text{p}_g$ on Borel measurable sets: such $\mu^\text{p}_g$ exists and is unique due to the Caratheodory Extension Theorem and the Hahn Extension Theorem \citep[see][Theorems 7.3 and 7.2']{stokeylucas}.}
Let $\mu^{\text{w}}_g$ be a measure on a $\sigma$-algebra $\mathcal{B}^{\text{w}}$ on $[0,1]$, representing the measure of workers. 
We assume that   $\mu^{\text{w}}_g$ is non-atomic. The  density function associated with $\mu_g$  is given by $f_g(v,\i)=f^\text{p}_g(v) \times f^\text{w}_g(\i)$, where  $f^\text{p}_g(v)$ is associated with measure $\mu^{\text{p}}_g$ and represents the density of $g-$group workers with productivity $v$ while $f^\text{w}_g(\i)$ is associated with measure $\mu^{\text{w}}_g$ and represents the density of workers whose indices are $\i$.

We first provide a ``microfoundation'' to our approach to work with distribution functions over productivities in our game.  Let $X=[0,1]\times[0,1]$. For $Y\subseteq X$ and $v\in [0,1]$, let $Y_v=\{t\in [0,1]: (v,t)\in Y\}$.  
Let $\mu_g$ denote the product measure induced from $\mu_g^{\text{p}}$ and  $\mu_g^{\text{w}}$ with the corresponding $\sigma$-algebra $\mathcal{B}$. 
By the product measure theorem, there exists a unique product measure induced from the two measures. 
Theorem 7.14 of \citet{stokeylucas} implies that for any $Y\in \mathcal{B}$ and $v\in [0,1]$, it holds that $Y_v \in \mathcal{B}^{\text{w}}$.
We say that $h:[0,1]\rightarrow \mathbb{R}$ is $\mathcal{B}^{\text{p}}$-measurable if
$\{v\in [0,1]: h(v)\leq a\} \in \mathcal{B}^{\text{p}} \text{ for all } a\in \mathbb{R}.$

The following proposition shows how we can interpret  $\mu_g^w(\cdot)$ as the ``density'' of $g-$group workers with productivity $v$.

\begin{proposition}\label{koji-proposition}
Let $h:[0,1]\rightarrow \mathbb{R}_{\geq 0}$ be a $\mathcal{B}^{\text{p}}$-measurable function such that $h(v)\leq c$ for all $v\in [0,1]$. Then, there exists a subset $Y^*\subseteq X$ such that $Y^*\in \mathcal{B}$ and $\mu_g^{\text{w}}(Y^*_v)=h(v) \text{ for all } v\in [0,1].$
\end{proposition}
\begin{proof}
 We say that a function mapping $[0,1]$ to $\mathbb{R}$ is \emph{simple} if its range is a finite set. The following useful result is stated as Theorem 7.5 of \citet[p. 180]{stokeylucas}.

\begin{lemma}\label{intermediate-proposition-2}
 Suppose that $h:[0,1]\rightarrow \mathbb{R}_{\geq 0}$ is $\mathcal{B}^{\text{p}}$-measurable. Then, there exists a sequence of $\mathcal{B}^{\text{p}}$-measurable simple functions $(h^n)_{n=1}^\infty$ such that
\begin{align} 
&0\leq h^n(v) \leq h^{n+1}(v)\leq h(v) \text{ for all } v\in [0,1] \text{ and } n=1, 2, \dots, \label{r2-1} \\
&h^n(v) \rightarrow h(v) \text{ (as $n\rightarrow \infty$) }\text{ for all } v\in [0,1]. \label{r2-2} 
\end{align}
\end{lemma}

For $A\subseteq [0,1]$, let $\chi_A:[0,1]\rightarrow \{0,1\}$ denote the \emph{indicator function} for $A$, i.e. 
\begin{align*}
\chi_A(v)=\begin{cases} 1 & \text{ if } v\in A, \\
                                  0 & \text{ otherwise. }
              \end{cases}
\end{align*}

By \Cref{intermediate-proposition-2}, there exists a sequence of $\mathcal{B}^{\text{p}}$-measurable simple functions $(h^n)_{n=1}^\infty$ that satisfies \eqref{r2-1} and \eqref{r2-2}. 
For each $n=1, 2, \dots$, since $h^n$ is a simple function, its range consists of a finite number of reals; let $k(n)$ denote the number. Then, there exists a sequence of reals $(a^{n,r})_{r=1}^{k(n)}$ and a sequence of mutually disjoint $\mathcal{B}^{\text{p}}$-measurable sets $(A^{n,r})_{r=1}^{k(n)}$ such that
\begin{align*}
h^n=\sum_{r=1}^{k(n)}a^{n,r}\cdot \chi_{A^{n,r}}. 
\end{align*} 

We next state the following mathematical result, which is a known modification of Sierpi\'{n}ski's theorem on non-atomic measures.\footnote{See \cite{sierpinski} for the original result, and see \url{https://en.wikipedia.org/wiki/Atom_(measure_theory)} for the modified result we state.}
\begin{lemma}\label{intermediate-proposition-1}
 Suppose that $\mu_g^{\text{w}}$ is a non-atomic measure. Let $c=\mu_g^{\text{w}}([0,1])$. Then, 
there exists a function $\phi: [0, c] \rightarrow \mathcal{B}^{\text{w}}$ such that 
\begin{align}
&\mu_g^{\text{w}}(\phi(t))=t \text{ for all } t\in [0,c], \text{ and } \label{p2}\\
&\phi(t)\subseteq \phi(t') \text{ for all } t,t'\in [0,c] \text{ with } t\leq t'. \label{p3}  
\end{align}
\end{lemma}

By \Cref{intermediate-proposition-1}, there exists a function $\phi$ that satisfies \eqref{p2} and \eqref{p3}. 
For each $n=1, 2, \dots$, and $r=1, 2, \dots, k(n)$, by \eqref{r2-1} and $h(v)\leq c$ for all $v\in [0,1]$, it holds that $a^{n,r}\leq c$. We define 
$
\bar{A}^{n,r}=\phi(a^{n,r}) \text{ for all } n=1, 2, \dots, \text{ and } r=1, 2, \dots, k(n),$ and we define $
Y^*=\cup_{n=1}^\infty \cup_{r=1}^{k(n)}(A^{n,r}\times \bar{A}^{n,r}).$ 
It holds that $A^{n,r}\in \mathcal{B}^{\text{p}}$ (because $h^n$ is $\mathcal{B}^{\text{p}}$-measurable) and $\bar{A}^{n,r}=\phi(a^{n,r})\in \mathcal{B}^{\text{w}}$. Since $\mu_{h}$ is a product measure, $A^{n,r}\times \bar{A}^{n,r}\in \mathcal{B}$. 
Since $Y^*$ is obtained by taking union of measurable sets countably many times, we have $Y^*\in \mathcal{B}$. 

Fix an arbitrary $v\in [0,1]$. Recall that, for any $n=1, 2, \dots$, the subsets $(A^{n,r})_{r=1}^{k(n)}$ are mutually disjoint. Thus,  
there exists a unique sequence $(A^{n,r(n)})_{n=1}^\infty$ such that
\begin{align}
&v\in A^{n,r(n)} \text{ for all } n=1,2, \dots. \label{7}
\end{align}
By \eqref{r2-1}, we have  $a^{n,r(n)}=h^n(v)\leq h^{n+1}(v)=a^{n+1, r(n+1)}$ for all $n=1, 2, \dots$. 
Together with \eqref{p3},
\begin{align}
 \bar{A}^{n,r(n)}=\phi(a^{n,r(n)})\subseteq \phi(a^{n+1, r(n+1)})=\bar{A}^{n+1, r(n+1)} \text{ for all } n=1, 2, \dots. 
 \label{r1-3} 
\end{align}
We obtain 
\begin{align*}
\mu_g^{\text{w}}(Y^*_v)&=\mu_g^{\text{w}}(\cup_{n=1}^\infty \bar{A}^{n,r(n)}) =\lim_{n\rightarrow \infty} \mu_g^{\text{w}}(\bar{A}^{n,r(n)}) =\lim_{n\rightarrow \infty} a^{n,r(n)} =\lim_{n\rightarrow \infty} h^n(v) =h(v), 
\end{align*}
where the first equality follows from \eqref{7}, the second equality follows from \eqref{r1-3} and the monotone convergence theorem of a measure, the third equality follows from \eqref{p2}, and the fourth equality follows \eqref{7}, and the fifth equality follows from \eqref{r2-2}. 
\end{proof}

Now we proceed to describe our non-cooperative game.  In the first period, each firm $i$ simultaneously announces measurable sets of workers  $\S^A_i$ and $\S^B_i$ of groups  $A$ and $B$, respectively, to which it makes job offers, as well as a measurable function $w_i^g$ on $\S^g_i$ for each $g\in\{A,B\}$ where $w_i^g(v,\i)$ is the wage that the firm makes to worker $(g,v,\i)$. We assume that $w_i^g(v,\i)= w_i^g(v',\i')$ if $v = v'$.  Each worker observes the identity of the firm that made an offer to her (if any) and the associated wage offered to her and chooses to accept one of the offers or stay unassigned and receive a wage of zero. Then, each firm $i$ is matched to the workers from $ \S^A_i \cup  \S^B_i$ who accepted its offers and pays wages to the hired workers according to the offer it made to them.

We assume that each worker is only interested in monetary transfers, so the worker's payoff is equal to the wage paid to her if she accepts an offer from a firm and zero otherwise. Next, we describe the firm's payoff. First consider the case where there is no EPSW. Let $\tilde \S^A_i$ and $\tilde \S^B_i$ be the sets  of workers from groups $A$ and $B$, respectively, who accepted firm $i$'s offer. If $\tilde \S^A_i$ and $\tilde \S^B_i$ are both measurable, then the firm $i$ obtains a payoff of
$$
\beta \int_{\tilde \S_i^A} [v-w_i^A(v,\i)]d\mu_A +  \int_{\tilde \S_i^B} [v-w_i^B(v,\i)]d\mu_B.
$$
If at least one of $\tilde \S^A_i$ and $\tilde \S^B_i$ is nonmeasurable, then the firm's payoff is $-1.$ 

When there is a EPSW, we modify the firm's payoff such that  it receives a payoff of $-1$ if there are sets of workers with positive measure  $\tilde \S^g_i \subseteq g \times [0,1]^2$ and $\tilde \S^{g'}_i \subseteq g' \times [0,1]^2$ hired by it such that $g \neq g'$ and $w_i^g(v,\i) \neq w_i^{g'}(v',\i')$ for all $(g, v,\i) \in \tilde \S^g_i$ and $(g', v',\i') \in \tilde \S^{g'}_i$.

 Given an action profile, the associated \emph{outcome} is defined as $O:=(O_1,O_2)$, $O_i=\{(f_g^i(v),w_i^g(v))\}_{v \in [0,1],  g=A,B}$ for $i=1,2,$ where $f_g^i(v)$ is the density of workers of group $g$ with productivity $v$ who are hired by firm $i$ in equilibrium, and $w_i^g(v)$ be the wage paid to those workers if $f_g^i(v)>0$ and zero otherwise.

Our solution concept is pure-strategy  subgame perfect Nash equilibrium.

\begin{proposition}
    \label{equivalence-claim}
The set of pure-strategy subgame perfect Nash equilibrium outcomes of the noncooperative game without EPSW and with  EPSW, respectively, coincides with the core of the cooperative game without EPSW and with EPSW, respectively.
\end{proposition}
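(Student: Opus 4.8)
**The plan is to prove the two inclusions separately: every SPNE outcome of the noncooperative game is a core outcome, and every core outcome can be supported as an SPNE outcome.** I would handle all three informational regimes (no EPSW, group-based, non-group-based) in parallel, since the arguments are structurally identical and differ only in which wage profiles incur the $-1$ penalty.

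\textbf{SPNE outcome $\Rightarrow$ core outcome.} Fix an SPNE and let $O=(O_1,O_2)$ be its outcome. First I would analyze the workers' subgame: given any first-period announcement by the firms, worker $(g,v,\imath)$ accepts whichever offer gives her the highest wage (breaking ties in any measurable fashion consistent with the outcome), and stays unassigned if all offers are nonpositive. This pins down $\tilde{\mathcal S}^g_i$ up to measure-zero sets and lets me express each firm's continuation payoff as a function of the announcements alone. Now suppose for contradiction that $O$ is blocked in the cooperative sense by some firm $j$ via $\tilde O_j$, where $\tilde O_j$ satisfies the feasibility and (EPSW-)monotonicity constraints and one of the four blocking conditions holds for almost every $(g,v)$. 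I would translate $\tilde O_j$ into a first-period deviation for firm $j$: announce offer sets and wage functions that target exactly the workers in the support of $\tilde f^j_g(\cdot)$, at wages $\tilde w^g_j(\cdot)$ (perturbed upward by an arbitrarily small $\varepsilon>0$ on the set of workers being poached, to guarantee acceptance — this uses that $\mu^{\mathrm w}_g$ is non-atomic so the $\varepsilon$-perturbation can be made to cost an arbitrarily small amount). The four blocking conditions are precisely engineered so that in each case the targeted workers strictly prefer (or are willing) to accept firm $j$'s new offer given the equilibrium behavior of firm $-j$: condition~1 covers poaching with a strict raise, condition~2 covers using only currently-employed capacity, condition~3 covers poaching workers firm $-j$ underpays, condition~4 covers hiring the unemployed. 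In every case the deviation yields firm $j$ a payoff converging to $\pi^{\tilde O_j}_j>\pi^{O_j}_j$ as $\varepsilon\to 0$, contradicting that the original announcement was a best response. The EPSW cases require only the extra check that $\tilde O_j$ is a legal outcome under the relevant restriction, hence the deviating announcement does not trigger the $-1$ penalty.

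\textbf{Core outcome $\Rightarrow$ SPNE outcome.} Given a core outcome $O$, I would construct a strategy profile: each firm $i$ announces offer sets and wage functions inducing $O_i$ (assigning each productivity-level's workers in $O$ to specific indices via the non-atomic worker measure), and workers play the ``accept the best offer'' continuation strategy described above, with tie-breaking chosen to reproduce $O$. Subgame perfection on the workers' side is immediate since each worker faces a simple maximization. For the firms, I must show no profitable first-period deviation exists. A deviation by firm $j$ to announcement $(\hat{\mathcal S}^g_j,\hat w^g_j)$ induces, after the workers' best response, some outcome $\hat O_j$ for firm $j$; the key observation is that $\hat O_j$ — restricted to the workers who actually accept — is a feasible outcome satisfying the monotonicity constraint (and the EPSW constraint, else the payoff is $-1<0\le\pi^{O_j}_j$ and the deviation is unprofitable), and moreover the accepting workers satisfy one of the four blocking conditions relative to $O_{-j}$ (workers accept $j$ only if $j$ offers at least as much as they currently get and, for those currently with $-j$, strictly more — exactly conditions~1--4 after case-splitting on whether $j$ keeps/fires/poaches/hires-anew). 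Hence if the deviation were strictly profitable it would exhibit a cooperative block of $O$, contradicting $O\in\mathrm{core}$. Therefore the profile is an SPNE with outcome (equivalent to) $O$.

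\textbf{Main obstacle.} The delicate step is the reduction, in the second inclusion, from an \emph{arbitrary} first-period deviation to a cooperative blocking outcome — in particular verifying that the induced $\hat O_j$ always falls under one of the four enumerated blocking conditions for almost every $(g,v)$, and that measurability/monotonicity of $\hat w^g_j$ is preserved under the workers' best-response selection. I would need to be careful that when firm $j$ targets a worker but that worker receives an equal offer from $-j$, the tie-breaking does not spoil the argument; this is why the blocking conditions are stated with a mix of weak and strict inequalities, and I would exploit the non-atomicity of $\mu^{\mathrm w}_g$ to perturb wages by $\varepsilon$ and pass to the limit, exactly as in the first inclusion. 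The parallel $\varepsilon$-perturbation device is the technical glue that makes both directions go through, and I expect writing it cleanly once and invoking it twice to be the cleanest route.
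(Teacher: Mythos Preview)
Your first direction (SPNE $\Rightarrow$ core) is sound in outline and close to the paper's approach, which carries out the same reduction but via the explicit case analysis of \Cref{noespw_lemma3} rather than by translating an abstract block. The $\varepsilon$-perturbation you propose (applied \emph{uniformly}, so as to preserve wage monotonicity) together with the measure-theoretic lifting of densities $\tilde f^j_g$ to sets of individual workers --- this is where the paper invokes its Sierpi\'{n}ski-type \Cref{koji-proposition} --- would fill in the remaining details.

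The second direction (core $\Rightarrow$ SPNE) has a genuine gap. Under your proposed profile, each firm $i$ announces offers only to the workers it employs in $O_i$; a worker hired by $j$ in $O$ therefore receives \emph{no} competing offer from $-j$. Firm $j$ can then deviate by uniformly lowering all its wages by $\varepsilon$: each of its workers still accepts (having no alternative), monotonicity is preserved, and $j$'s profit strictly rises. You assert that any profitable deviation corresponds to a cooperative block, but this one does not: conditions~1 and~2 of the block definition require $\tilde w^g_j(v)\ge w^g_j(v)$, which fails; conditions~3 and~4 require $\tilde f^j_g(v)+f^j_g(v)\le f_g(v)$ (i.e.\ $j$ does not retain its current workers of that type), which also fails when $j$ keeps the same workforce. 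So your reduction ``profitable deviation $\Rightarrow$ block'' breaks precisely here, and the profile you build is not an SPNE. The failure is most transparent under group-based EPSW, where by \Cref{two-sided-EPL-equilibrium} a continuum of core outcomes have wages strictly below productivity and hence ample room to undercut.

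The paper closes this hole not by a general argument but by exploiting the \emph{characterization} of core outcomes. In the no-EPSW case, \Cref{no_EPL_equilibrium_prop} says every worker in any core outcome is paid her productivity; the paper therefore has \emph{both} firms offer \emph{every} worker wage equal to $v$, using tie-breaking (via the Sierpi\'{n}ski set $Y^*$) to reproduce $O$. Under this profile any wage cut by $j$ sends the affected worker to $-j$, so undercutting is unprofitable and verifying SPNE becomes immediate. To carry the argument through in the EPSW regimes one must likewise bake competing ``shadow'' offers into $-j$'s announcement, calibrated to the relevant core characterization; simply replicating $O_i$ is not enough.
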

\begin{proof} 
 We first prove that for any core outcome of the cooperative game $O:=\{(f_g^i(v),w_i^g(v))\}_{i \in \{1,2\}, v \in [0,1], g=A,B}$, there exists a subgame perfect Nash equilibrium in the noncooperative game whose outcome is $O.$
To show this, note by \Cref{no_EPL_equilibrium_prop} that, for almost all $i,g$ and $v,$ we have $f_g^1(v)+f_g^2(v)=f_g(v)$ and $w_i^g(v)=v$ if $f_g^i(v)>0$ in any outcome in the core.

For each $g$, let $Y^*_g \subseteq \{g\} \times [0,1]^2$ be a measurable set of workers containing, for each $v$, $f_g^1(v)$ share of workers--note that such sets exist thanks to \Cref{koji-proposition}. Consider the following strategy profile in the non-cooperative game:

\begin{enumerate}
    \item Both firms make job offers to all workers $(g,v,t)\in\{g\} \times [0,1]^2$ where the wage offer for each worker is equal to her productivity $v$.
    \item All workers accept an offer from the firm whose wage offer is the highest and weakly positive, if any. If they receive the same wage offers from both firms, then the workers in set $Y^*_A \cup Y^*_B$ accept an offer of firm 1 while all other workers accept an offer from firm 2. 
\end{enumerate}
It is straightforward to verify that the above strategy profile results in outcome $O.$ To see that this is a subgame perfect Nash equilibrium, first observe that by definition of the workers' strategies, each worker is  clearly maximizing their payoffs against all possible strategies of other players. For the firms, there is no other strategy that results in a strictly higher payoff than the payoff of zero under the prescribed strategy profile because all workers are offered a wage that is equal to their own productivity from both firms under the prescribed strategy profile. These observations show that the specified strategy profile constitutes a subgame perfect Nash equilibrium, as desired.

We next prove that any pure-strategy subgame perfect Nash equilibrium outcome of the noncooperative game is in the core of the cooperative game. To show this, assume that the outcome $O:=\{(f_g^i(v),w_i^g(v))\}_{v \in [0,1], i =1,2, g=A,B}$ is an outcome associated with a strategy profile of  the noncooperative game and is not in the core of the cooperative game. Then we construct an outcome $\tilde O_j=\{(\tilde f_g^j(v),\tilde w_j^g(v))\}_{v \in [0,1],g=A,B}$ such that the associated strategy profile of the noncooperative game yields a profitable deviation for at least one of the firms. We take $(\tilde f_A^j(v),\tilde w_j^A(v))=( f_A^j(v), w_j^A(v))$ (or $\tilde O_{-j}=\{(\tilde f_g^{-j}(v),\tilde w_{-j}^g(v))\}_{v \in [0,1],g=A,B}$ where $(\tilde f_A^{-j}(v),\tilde w_{-j}^A(v))=( f_A^{-j}(v), w_{-j}^A(v))$), i.e. we do not change the outcome for $A-$group workers for either firm. The argument in which the deviation involves workers in group $g=A$ is analogous, where terms related to the firm's profit must be multiplied by $\beta$. To show a profitable deviation,  we consider exhaustive cases as in \Cref{noespw_lemma3}. 

First, suppose there exist a firm $j$ and a subset of productivities $V \subset [0,1]$ with positive measure such that $w_j^g(v)>v$ for all $v \in V$.   
Consider a deviation by firm $j$ in which it continues making the same wage offers to all workers with productivities $v\notin V$ and makes no offers to any workers with $v\in V.$  This will be a profitable deviation, because the firm now fails to hire workers in $V$ at wages strictly greater than their productivities, while retaining the rest of its workforce at the same wages. Therefore, the original strategy profile is not a subgame perfect Nash equilibrium.

Other cases, as enumerated in the proof of \Cref{noespw_lemma3},
can be treated in a similar manner.
\end{proof}

Finally, we note that an analogous analysis to the one offered here also establishes that the search model in \Cref{section:search-model} using density functions has a measure-theoretic foundation. Specifically, each worker is associated with a type $(g,v,\i, t)$ where $t$ is the period at which the worker enters the market. The worker distribution is given as a measure on a sigma algebra defined over the set of worker types, and the wage function is measurable with respect to the same sigma algebra.

\subsection{Capacity constraints} \label{remark: capacity constraint}
     Our search model does not explicitly place any constraints on the measure of workers each firm can hire in each period. One relevant consideration may be whether segregation and wage setting decisions are artificially driven by ``unlimited'' hiring capacities. We do not believe this is the case.
     
     First, note that the unconstrained firms mechanically hire at least a fixed fraction of the workforce, as determined by the vector $r$ of search intensities. Letting $I^U$ represent the set of unconstrained firms, no firm $i\notin I^U$ ever hires more than $\frac{r_i^g}{\sum_{i' \in I^U\cup\{i\}}r_{i'}^g}$ fraction of workers of group $g$ in any time period $t$ (i.e. even if $i$ is the only firm segregated toward group $g$). Fixing $r_i^g$ for each $i\notin I^U$ and each $g\in\{A,B\},$ increasing $r_{i'}^g$ for each $i'\in I^U$ and each $g\in\{A,B\}$ mechanically lowers the maximum hiring rate of any firm $i\notin I^U$. Of note is that increasing search intensities as described above decreases the measure of workers hired by $i$ in a given period, regardless of the segregation patterns of the remaining firms. 
     
     Second, consider a more standard capacity interpretation: hiring is constrained in equilibrium if and only if ``too many'' workers seek positions at a firm.  Specifically, let $\kappa=(\kappa_i^g)_{i \in I, g \in \{A,B\}} \in (0,\infty)^{2n}$ and consider equilibria in which each firm $i$ is to hire no more than $\kappa_i^g$ measure of $g-$group workers in each period on path. We assume that unconstrained firms jointly have excess capacity:  $\sum_{i \in U}\kappa_{i}^A\geq \beta$ and  $\sum_{i \in I^U}\kappa_{i}^B\geq 1.$ To accommodate hiring limits, the search intensity at any time $t$ (i.e. the measure of workers of each $(g,v)$ type who initially search for each firm $i\in I^g_t$) is proportionally determined by weights $r^g_i$ such that no firm is initially searched for by more than $\kappa^g_i$ $g-$group workers.\footnote{Formally, for each firm $i\in I^A_t$, the fraction of $A-$group workers that first search at firm $i$ is $\hat r^A_i(x^A):=\min\{\kappa^A_i/\beta,r^A_i x^A\}$ where $x^A$ solves $\sum_{i \in  I^A_t} \hat r^A_i(x^A)=1$ and for each firm $i\in I^B_t$, the fraction of $B-$group workers that first search at firm $i$ is $\hat r^B_i(x^B):=\min\{\kappa^B_i,r^B_ix^B\}$ where $x^B$ solves $\sum_{i \in  I^B_t} \hat r^B_i(x^B)=1$. Notice that, for each $g\in\{A,B\}$, there is a unique solution for $x^g$: the function $\sum_{i \in I^g_t}\hat r^g_i(x^g)$ equals 0 at $x^g=0$, and is continuous and strictly increasing in $x^g$ until the capacity constraints of all firms bind, i.e. the minimum in the definition of $\hat r^g_i(\cdot)$ is achieved by the first element for each group $g$ and firm $i$.} Our results are qualitatively similar given such a modification to our model.

\subsection{Alternating offer microfoundation}\label{nash_bargaining}
Our search model considers that the total surplus accruing to each of the firm and a given worker is split in an exogenous manner given the worker's  ``outside option,'' leading to a ``Nash-in-Nash'' bargaining protocol. One question may be, given the finding of \cite{binmore1986nash},  whether any equilibrium  presented above has an analogue as the equilibrium of a game with alternating-offer bargaining. In this appendix, we show that this is indeed true.

   Formally, we alter Stage 2 of the timing of the game presented in \Cref{basic setup} in the following manner. Fix an exogenous order over unemployed workers.\footnote{Formally, let $\mathcal W^t$ denote the set of newly born workers at  period $t$. Within each period $t$, there are uncountably many ``moments'' of time comprising the set $[0,1]$.  Let there be a bijection $\varphi:[0,1] \to \mathcal W^t$ such that at each moment $y \in [0,1]$ of period $t$, the worker $\varphi(y) \in \mathcal W^t$ engages in search and bargaining according to the specified protocol.} Sequentially, each of these workers bargains with firms according to the following protocol, where we denote the relevant worker's type by $(g,v) \in \{A,B\} \times [0,1]$:
\begin{enumerate}
\item The worker decides which of the firms to bargain with.

\item The worker and firm engage in an alternating-offer bilateral bargaining game, with the firm being the first proposer. If the proposer's  offer is rejected, then there is probability $1-\delta' \in (0,1)$ that the bargaining breaks down, and with the complementary probability bargaining continues with the other player becoming the proposer. Additionally, if the worker's and the firm's strategies are such that the proposer's offer is rejected in all future rounds of bargaining, then we also say that the bargaining breaks down without an agreement being reached.

\item If bargaining breaks down with a particular firm, then the worker selects another eligible firm with which to bargain (the worker is not eligible to match with a firm she has previously bargained with). If there are no eligible firms remaining, the worker permanently remains unmatched and obtains a payoff of zero.
\end{enumerate}

In post EPSW periods $t\geq 0$, we assume that any firm $i$ that is desegregated at some wage $w$ must always offer workers wage $w$ in the current period, and will accept a worker's offer in the current period if and only if it is $w$.

All other aspects of the game are as before.

In general, a wide range of bargaining outcomes can be consistent with subgame perfect Nash equilibrium in this alternating-offer game, but we view some of them as being unrealistic. 
First, our model contains a continuum of infinitesimally small workers. This implies that firms can ``throw their weight around'' in potentially unrealistic ways because the surplus obtained through bargaining with any single worker is always zero. Suppose, for instance,  a firm's strategy were to propose a wage of 0 and accept only a wage of 0 when bargaining with any worker, and a   worker's strategy were to always offer and accept a wage of 0. If each worker made a strictly positive impact on firm profit, it would be farfetched for an equilibrium to contain such a strategy: if the worker were to propose a sufficiently small positive wage,  the firm would rather accept that wage offer and obtain positive surplus from this worker at the present round rather than risking a breakdown of the bargaining with that worker. However, our  continuum model makes the firm willing to forgo the infinitesimal surplus from bargaining with any one worker, so the above behavior does not formally violate sequential rationality. Second, some equilibria may not feature stationarity. Specifically, strategies employed by players may depend on the specific bargaining outcome struck between a (``small'') worker and a single firm in the past.

To avoid such nuisances, we impose a refinement of subgame perfect Nash equilibria as follows. Let 
$x$ denote the equilibrium offer that a worker makes to the firm as the worker's own share, and $y$ denote the equilibrium offer that a firm proposes to the worker as the worker's share. 
Denote the total surplus between the firm and the worker by  $V^v$, namely as defined in \eqref{V-definition}, and let $\bar V^v$ denote the worker's continuation value if the bargaining breaks down with the present firm. In standard alternating offer bargaining between two ``non-negligible'' parties, sequential rationality requires that $x$ and $y$ satisfy
\begin{align}\label{bargaining-refinement}
    y  =\delta'x+(1-\delta') \bar V^v,   \qquad \qquad 
    V^v-x =\delta'(V^v-y),
\end{align}
and the firm and the worker accept the offers if and only if the offers are at least $x$ and $y$ specified by the above equations, respectively. 
We require the same: any equilibrium strategy profile is such that the bargaining between any firm and worker follows \eqref{bargaining-refinement} and the acceptance rules as described above both on and off path.
Solving these equations, we obtain 
\begin{align*}
x= \frac{1}{1+\delta'}V^v+\frac{\delta'}{1+\delta'}\bar V^v,\qquad \qquad 
y= \frac{\delta'}{1+\delta'}V^v+\frac{1}{1+\delta'}\bar V^v.
\end{align*}

By the same logic as in our base game, it is optimal for the worker to first bargain with all segregated or unconstrained firms, and then bargain with a desegregated firm offering surplus $W^v$ if $D^v_t\neq \emptyset$. Therefore, applying the above acceptance and offer rules along the optimal bargaining path yields worker surplus of
\begin{align}\label{worker-surplus-solution_alternating}
V_1^v=\frac{1}{(1+\delta')^{m}} W^v + \left (1-\frac{1}{(1+\delta')^{m}} \right ) V^v.
\end{align}

Comparing \eqref{worker-surplus-solution_alternating} to \eqref{worker-surplus-solution} reveals an obvious similarity. Let $\Delta\in(1/2,1)$ denote the share of the available surplus that accrues to a segregated or unconstrained firm when bargaining with a worker in our base game. Let $\bar \Delta(\delta'):=\frac{1}{1+\delta'}$ for any $\delta'\in(0,1)$. Note that $\bar \Delta(\cdot)$ is continuous and strictly decreasing, the limit of the function as $\delta'\to 0$ is 1, and the limit of the function as $\delta'\to 1$ is 1/2. These observations together imply that there is a unique value of  $\delta'$ for which $\bar \Delta(\delta')=\Delta$, that is, there is a unique value of the bargaining friction $\delta'$ in our alternating-offer game that yields the same equilibrium outcomes as does the exogenous split $\Delta\in(1/2,1)$ in our ``Nash-in-Nash'' game.  

We note that the previous paragraph establishes the equivalence of the set of equilibrium outcomes of the two games for $\Delta\in(1/2,1)$. There is nothing in the ``Nash-in-Nash'' procedure that suggests a surplus split $\Delta\in(1/2,1)$ is more viable than a split $\Delta\in(0,1/2)$. By a similar logic, any split $\Delta\in(0,1/2)$ can be obtained in our alternating-offer game by setting the first mover in any bargaining situation to be the worker instead of the firm, and appropriately selecting $\delta'$. 

\begin{remark}
    For any $\Delta\in(0,1)\setminus\{1/2\}$, there is a unique value of  $\delta'\in(0,1)$ such that the set of equilibria (with our imposed regularity conditions) of our ``Nash-in-Nash'' bargaining game with exogenous split $\Delta$ corresponds to the set of equilibria of the alternating-offer game with negotiation breakdown probability $\delta'$, and the firm as first proposer if $\Delta>1/2$ and the worker as first proposer if $\Delta<1/2.$
\end{remark}

\section{Empirical appendix}\label{emperical app}

\subsection{Designation of treatment status}\label{app_treatment_designation}

We define a treated firm as a firm that employs at least 10 long-term workers at the time of EPSW announcement. In this section, we discuss several reasons lending validity to this choice.

First, as firm size is endogenous, a potential concern is that manipulations in size at the time of EPSW announcement could affect our analysis. As discussed by \cite{mccrary2008}, a discontinuity in the share of firms with fewer than 10 long-term workers at the time of announcement suggests that firms may strategically alter their workforce quickly to avoid EPSW, which would mean our designation of treatment would not be as good as random.  Panel (A) of Figure \ref{mccrary} investigates this concern by plotting the share of firms in our sample with strictly fewer than 10 long-term workers across time, and overlays a separate best-fit polynomial for the time period before versus after policy announcement. As can be seen visually, there is a small increase in the share of firms with fewer than 10 long-term workers around the time of policy announcement. However, there is no statistically significant discontinuity; the p-value for the test of the null hypothesis that the share of firms with fewer than 10 long-term workers is equal in May and June 2009 is $0.43.$ 

Second, would an alternative time, such as policy enactment instead of policy announcement, be more appropriate to denote the ``post'' period? We believe not. As seen in panel (A) of Figure \ref{mccrary}, there is one notable time interval over which the share of firms with strictly fewer than 10 long-term workers increases, and this is centered around policy announcement. This indicates EPSW announcement likely led to anticipatory firm responses. Note that firm size responses are consistent with firms attempting to avoid the bite of the policy, suggesting policy announcement was salient to firms (but as discussed in the previous paragraph, the lack of discontinuity around the announcement date allows us to proceed with our difference-in-difference analysis). No such change is discernible around the time of policy enactment.\footnote{We believe that earlier anticipation of the policy is unlikely because EPSW was introduced to the Chilean senate only in May 2009. Moreover, in unreported results, we show that our findings are qualitatively and quantitatively similar if we change our analysis to designate policy time to be April 2009.}

Third, a concern may be that our definition of treatment using time of announcement may not affect the probability that a firm is bound by EPSW in subsequent months. Panel (B) of Figure \ref{mccrary} replicates Panel (A) but generates separate series by firm treatment status. While there is a mechanical mean reversion due to our definition of treatment (i.e. control firms all have fewer than 10 long-term workers at EPSW announcement, and treated firms all have at least 10 long-term workers at EPSW announcement), our treatment variable is positively correlated with a firm being affected by EPSW in future time periods. In all months after announcement, strictly fewer than 25\% of firms in our control group are bound by EPSW while strictly more than 50\% of firms in our treatment group are bound by EPSW. Panel (B) of Figure \ref{mccrary} also suggests that the magnitudes of our estimates are likely conservative, because some control firms are bound by EPSW and some treated firms are not bound by EPSW in every time period after announcement. An alternative empirical approach would be to instrument for these shares across the two groups. However, such an analysis would require additional assumptions on how a firm's specific history of long-term workers translates into policy bite.\footnote{For example, consider a firm $j$ that employs strictly fewer than 10 long-term workers at some time $t$, but at least 10 long-term workers for time periods $t-10,\dots, t-1$. Consider another firm $i$ that employs strictly fewer than 10 long-term workers in times $t-10,\dots,t-1$, but at least 10 long-term workers in time period $t$. It is not obvious, given potential wage rigidities, which of these two firms is more affected by EPSW in period $t$.} Our approach, while conservative, avoids such ad hoc assumptions.

\begin{figure}[ht]\caption{Firm size and treatment status over time \label{mccrary}}
\makebox[\textwidth][c]{
\vspace{3mm}
\centering

\begin{minipage}[c][1\totalheight][t]{0.7\textwidth}\center{\scriptsize{(A): Share with fewer than 10 long-term workers}}
\begin{center}
\includegraphics[width=0.9\textwidth]{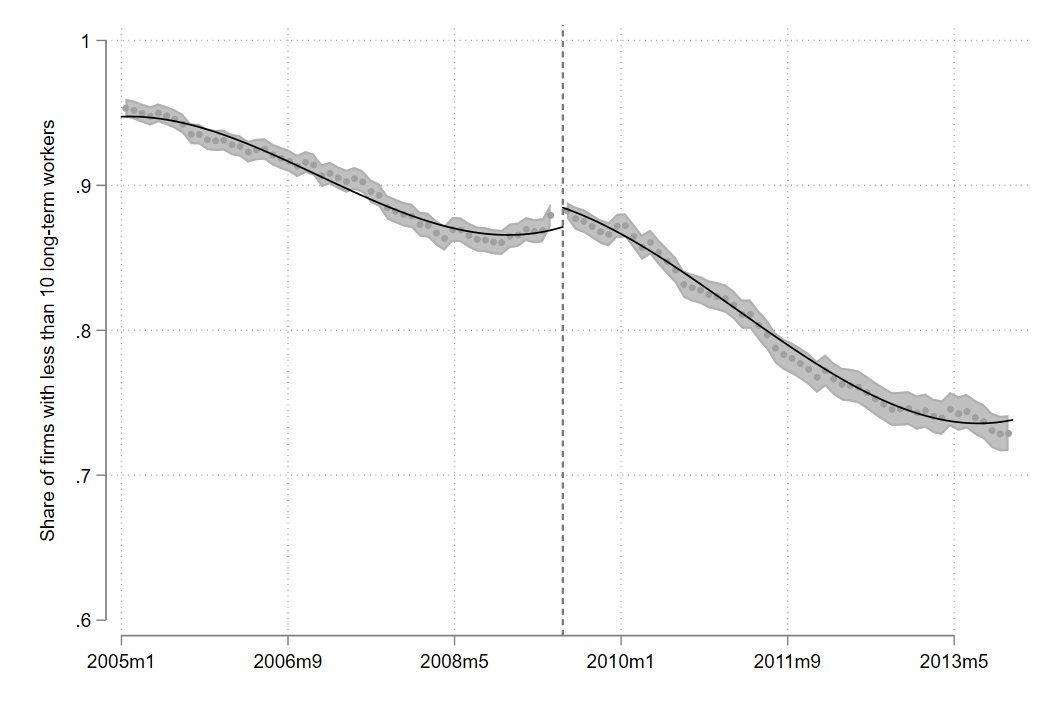}
\end{center}
\end{minipage}\hspace{-1cm}
\begin{minipage}[c][1\totalheight][t]{0.7\textwidth}\center{\scriptsize{(B): Share affected by EPSW, by treatment status}}
\begin{center}
\includegraphics[width=0.9\textwidth]{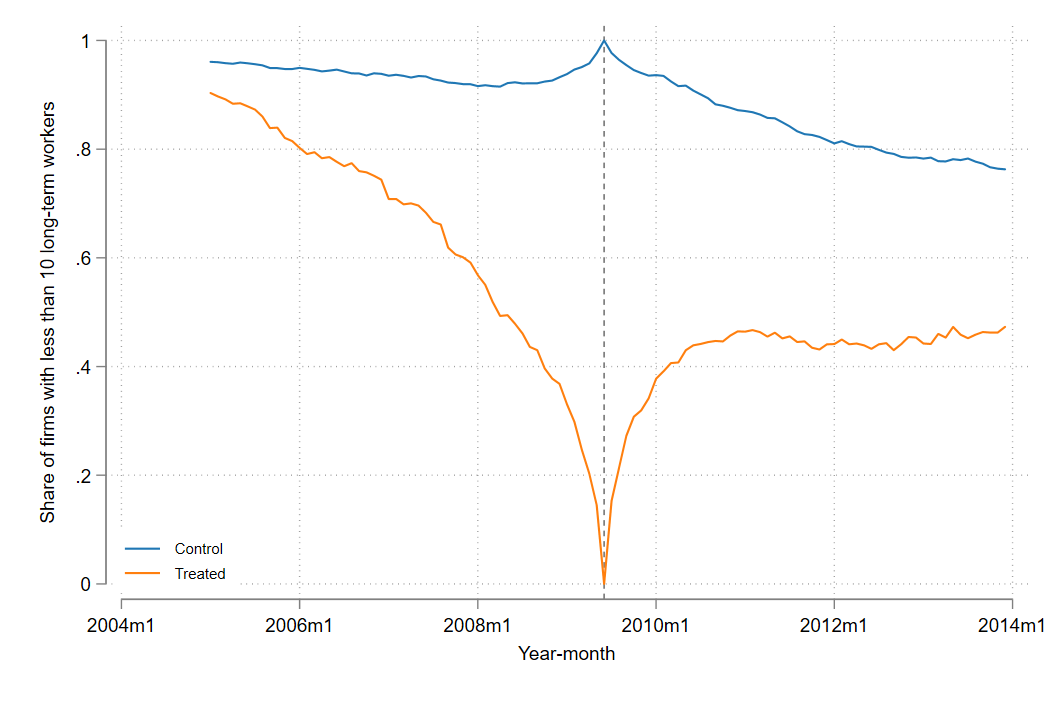}
\end{center}
\end{minipage}\hspace{1cm}}
 \begin{minipage}{.999\textwidth} \footnotesize
        \vspace{2mm}
       Notes: Both panels display size information about firms in our baseline estimation sample.
       Panel (A) presents binned means (across time, where each bin corresponds to a month-year) for the share of firms with fewer than 10 long-term workers. The black curve  is a 3rd-degree polynomial, fitted on each side of the cutoff. The shaded gray region  represents 95\% confidence bands for the local means, computed as in \cite{cattaneo}. Plots were implemented using rdplot from the rdrobust STATA package (See \url{https://rdpackages.github.io/rdrobust/}). Panel (B) displays the monthly share of firms that employ fewer than 10 long-term workers by treatment status, where treatment is defined as having at least 10 long-term workers in June 2009.
    \end{minipage}
\end{figure}

Fourth, is the comparison between control and treatment firms in our empirical specifications plausibly revealing a lower bound on the causal effect of EPSW? We believe it is. Our model in \Cref{section:search-model} finds that EPSW causes treated firms to segregate but not control firms. That is, there are no equilibrium spillovers from treated to control firms (or vice versa) on the outcome of segregation. Our model also finds that EPSW causes treated firms to make larger contributions to shifting the wage gap in favor of the majority group, as control firms retain more of their pre-EPSW workforce at existing wages. That is, the equilibrium effects serve only to attenuate the presented empirical results on wage inequality, as EPSW affects the wages set by both treated and control firms in the same direction, but it affects treated firms to a larger degree. 
In other words, our presented estimates on the effect of EPSW on the wage gap are smaller in magnitude than the difference between wage gaps between a particular market in which EPSW was never introduced and that same market under the counterfactual in which all firms were treated by EPSW.

Fifth, one may worry that there is an alternative reason for the effects that we observe. We are unaware of any contemporaneous labor-market policy that differentially affects firms in our treatment and control groups. However, it is possible that other labor-market forces could be differentially affecting firms with more long-term workers. To assuage this concern, we run several ``placebo'' tests. Specifically, we rerun our analyses in \eqref{emp_eq1} and \eqref{emp_eq5} but instead suppose the treatment cutoff for number of long-term workers at a firm varies from 10. If the true cause for the effects we observe in our baseline specification is the policy itself, we should expect little to no effects around these placebo cutoffs for firms with the same actual treatment status according to the real policy. For each cutoff $c$ we consider, we construct a sample of firms that have between $[c-4,c+3]$ workers at the time of EPSW announcement, and firms are considered treated if and only if they employ at least $c$ long-term workers at announcement. All other details are as in our baseline specification. Results are presented below in Tables \ref{tab:placebo_full_seg} and \ref{tab:placebo_het_wage_gap}. 
Under most of the alternative
cutoffs, the effects are insignificant and/or the point estimates are reversed in sign from our baseline estimates,\footnote{The sole exception is in column IV of \ref{tab:placebo_het_wage_gap}. The effect in male-majority LLMs is close to zero and statistically insignificant, while the effect in female-majority LLMs is statistically significant at the 5\% level. Given that there is no significant effect on segregation in this sample (see Table \ref{tab:placebo_full_seg}) and no significant effect on wages in male-majority LLMs, it is likely that the effect in female-majority LLMs is spurious.} suggesting that the effects we observe in our analysis are not due to other factors that differentially affect firms with different numbers of long-term workers.

\begin{table}[ht]
\centering
\addtolength{\leftskip} {-2.5cm}
    \addtolength{\rightskip}{-2.5cm}
    \caption{Effect of EPSW on Segregation--Placebos}
{
\def\sym#1{\ifmmode^{#1}\else\(^{#1}\)\fi}
\begin{tabular}{l*{5}{c}}
\hline\hline
            &\multicolumn{1}{c}{(I)}&\multicolumn{1}{c}{(II)}&\multicolumn{1}{c}{(III)}&\multicolumn{1}{c}{(IV)}&\multicolumn{1}{c}{(V)}\\
            &\multicolumn{1}{c}{Baseline - 6 to 13}&\multicolumn{1}{c}{14 to 21}&\multicolumn{1}{c}{16 to 23}&\multicolumn{1}{c}{18 to 25}&\multicolumn{1}{c}{20 to 27}\\
\hline
($\hat{\beta}^{seg}$) Post $\times$ Treated             &   $0.0441$ &    $-0.0067$    &  $0.0089$      & $-0.0167$ &  $0.0121$    \\
                                  &  $(0.0133)$         & $(0.0219)$  &   $(0.0232)$    &  $(0.0271)$     & $(0.0292)$    \\
\hline
Number of Firms                             &        $5312$         &      $1811$         &         $1496$         &         $1265$         &         $1097$          \\
Number of Observations                         &   515361	& 181298 &	150478	 &127427 &	110409              \\
\hline\hline
\textit{Fixed effects}                     &       &   &   &   &          \\
\quad Firm                              & Yes & Yes  & Yes  & Yes & Yes   \\
\quad Month$\times$exit year$\times$county              & Yes  & Yes & Yes & Yes & Yes \\
\quad Month$\times$exit year$\times$region$\times$industry              & Yes  & Yes & Yes & Yes & Yes  \\
\quad Firm-month level controls            & Yes  & Yes  & Yes  & Yes & Yes \\
\hline\hline
\end{tabular}
}
    \begin{minipage}{1.09\textwidth} \footnotesize
        \vspace{2mm}
      Notes: In this table we display estimated coefficients for the difference-in-differences regression described in \eqref{emp_eq1}. Column I presents our baseline specification for our balanced sample with firms that employed between 6 to 13 workers at June 2009, and corresponds to column I of Table \ref{emp_table_seg_robust}. Subsequent columns are identical except they consider placebo cutoffs in the number of long-term workers $c\in\{18,20,22,24\}$. For each cutoff $c$ we consider, we construct a sample of firms that have between $[c-4,c+3]$ workers at the time of EPSW announcement. For each cutoff $c$, firms are considered treated if they employ at least $c$ long-term workers at EPSW announcement, and are considered control otherwise. Throughout, standard errors in parentheses are two-way clustered at the firm and month levels. 
    \end{minipage}
    \label{tab:placebo_full_seg}
\end{table}

\begin{table}[ht]\centering
\addtolength{\leftskip} {-2.5cm}
    \addtolength{\rightskip}{-2.5cm}
    \caption{Effect of EPSW on Gender Wage Gap, by Majority Worker Group--Placebos}
{
\def\sym#1{\ifmmode^{#1}\else\(^{#1}\)\fi}
\begin{tabular}{l*{5}{c}}
\hline\hline
            &\multicolumn{1}{c}{(I)}&\multicolumn{1}{c}{(II)}&\multicolumn{1}{c}{(III)}&\multicolumn{1}{c}{(IV)}&\multicolumn{1}{c}{(V)}\\
           &\multicolumn{1}{c}{Baseline - 6 to 13}&\multicolumn{1}{c}{14 to 21}&\multicolumn{1}{c}{16 to 23}&\multicolumn{1}{c}{18 to 25}&\multicolumn{1}{c}{20 to 27}\\
\hline
($\hat \beta^{Mgap}$) Treated $\times$ Male $\times$ Post          &   $0.0427$	& $0.0209$	& $0.0039$	& $-0.0022$	& $-0.0439$     \\
                                             &   $(0.0116)$          &  $0.0175$ &	$0.0169$ &	$0.0199$	& $0.0181$            \\

($\hat \beta^{Mgap}+\hat \beta^{Fgap}$) Effect in Female                    &    $-0.0624$          &      $-0.0541$	&$0.0053$&	$-0.1054$	&$-0.0561$        \\
\quad Majority LLM                   
                        &    $(0.0197)$     &      $0.0372$ &	$0.0297$	& $0.0353$	& $0.0423$      \\

\hline
Number of Firms                                        &        $6424$        &        $2286$    & $1916$    & $1648$     & $1436$      \\
Number of Observations                                 &     $5551100$       &     $3892193$	& $3626862$	& $3433948$	& $3252092$     \\
\hline\hline
\textit{Fixed effects}                                   &       &   &   &   &  \\
\quad Firm                                          & Yes & Yes & Yes & Yes & Yes \\
\quad Worker                                             & Yes & Yes & Yes & Yes & Yes  \\
\quad Month$\times$exit year$\times$county$\times$hum. cap.         & Yes  & Yes & Yes & Yes & Yes  \\ 
\quad Month$\times$exit year$\times$region$\times$industry$\times$hum. cap.         & Yes  & Yes & Yes & Yes & Yes \\
\quad Firm-month level controls                                 & Yes  & Yes  & Yes  & Yes & Yes  \\
\quad Worker-firm-month level controls                              & Yes  & Yes  & Yes  & Yes & Yes \\
\hline\hline
\end{tabular}
}
    \begin{minipage}{1.22\textwidth} \footnotesize
        \vspace{2mm}
        Notes: In this table we display estimated coefficients for the difference-in-differences regression described in \eqref{emp_eq5}. Column I presents our baseline specification for our  sample with firms that employed between 6 to 13 workers at June 2009, and corresponds to column I of Table \ref{emp_tableYYYY}. Subsequent columns are identical except they consider placebo cutoffs in the number of long-term workers $c\in\{18,20,22,24\}$. For each cutoff $c$ we consider, we construct a sample of firms that have between $[c-4,c+3]$ workers at the time of EPSW announcement. For each cutoff $c$, firms are considered treated if they employ at least $c$ long-term workers at EPSW announcement, and are considered control otherwise. Throughout, standard errors in parentheses are two-way clustered at the firm and month levels. 
    \end{minipage}
    \label{tab:placebo_het_wage_gap}
\end{table}

\subsection{Alternative firm comparison groups}\label{alt_comparison}

In this section, we re-estimate our baseline specifications with different fixed effects. Primarily, we consider time-varying fixed effects that alter those in our segregation and wage gap analyses by altering the firm comparison groups. As discussed in \Cref{empirical section}, these time-varying fixed effects control for composition changes across different ``sectors'' of the labor market, as defined by the firm comparison group. Therefore, if our results are quantitatively similar across various specification of ``sector,'' we interpret this as evidence that composition changes across sectors of the labor market  are not driving the findings we present in the main body.

Table \ref{emp_table_seg} presents estimates from \eqref{emp_eq1}. The specifications in columns I--IV differ in the firm comparison groups $k(j)$, and column V recreates the results of our baseline specification presented initially in \Cref{emp_table_seg_robust}, column I. 
Across specifications, we find a 3.9--4.6 percentage point increase in segregation due to EPSW.

\begin{table}[ht]
\centering
\addtolength{\leftskip} {-2cm}
    \addtolength{\rightskip}{-2cm}
\caption{Effect of EPSW on Segregation--Alternative Time Trends }\label{emp_table_seg}
{
\def\sym#1{\ifmmode^{#1}\else\(^{#1}\)\fi}
\begin{tabular}{l*{5}{c}}
\hline\hline
            &\multicolumn{1}{c}{(I)}&\multicolumn{1}{c}{(II)}&\multicolumn{1}{c}{(III)}&\multicolumn{1}{c}{(IV)}&\multicolumn{1}{c}{(V)}\\
\hline
($\hat \beta^{seg}$) Post $\times$ Treated &     $0.0401$  & $0.0398$  & $0.0459$  & $0.0389$  & $0.0441$\\
            &    $(0.0118)$      & $(0.0124)$      & $(0.0124)$      & $(0.0141)$      &  $(0.0133)$                 \\
\\
Mean Pre-Treatment &  $-0.0032$           &   $-0.0065$      &   $-0.0025$        &  $0.0037$   &  $-0.0014$             \\
 &   $(0.0104)$           &        $(0.0108)$     &      $(0.0111)$         &    $(0.0129)$          &     $(0.0118)$              \\
\hline
Number of Firms       &        $6542$          & $6326$          & $5612$          & $4599$          & $5312$               \\
Number of Observations&      $602426$        & $584252$        & $536079$        & $449306$        & $515361$                \\
\hline\hline
\textit{Fixed effects}                           &       &   &   &   &             \\
\quad Firm                               & Yes & Yes  & Yes  & Yes & Yes  \\
\quad Month$\times$exit year                           & Yes & No  & No  & No & No  \\
\quad Month$\times$exit year$\times$industry               & No  & Yes & No & No & No  \\
\quad Month$\times$exit year$\times$county                 & No  & No & Yes & No & Yes  \\
\quad Month$\times$exit year$\times$county$\times$industry             & No  & No & No & Yes & No  \\
\quad Month$\times$exit year$\times$region$\times$industry               & No  & No & No & No & Yes  \\
\quad Firm-month level controls            & Yes  & Yes  & Yes  & Yes & Yes  \\
\hline\hline
\end{tabular}
}
 \begin{minipage}{1.06\textwidth} \footnotesize
        \vspace{2mm}
       Notes:  This table displays estimated coefficient  $\hat \beta^{seg}$ for the difference-in-differences regression described in \eqref{emp_eq1}. The unit of the panel is the firm-month and the dependent variable is a binary variable that indicates whether all workers at the firm in question are of a single gender in a given month. 
       Each column presents  time-varying fixed effects  corresponding to a different comparison group of firms $k(j)$ for each firm $j$. In column I, all firms are included in the comparison group. In column II firms with the same industry code are included in the comparison group. In column III firms in the same geographic county are included in the comparison group. In column IV firms in both the same county and industry code are included in the comparison group. Column  V is our baseline specification (see column I of Table \ref{emp_table_seg_robust}). 
       The mean pre-treatment effect is the mean of $\hat \beta^{seg}_\tau$ for $\tau<0$ calculated from \eqref{emp_eq2}. Throughout, standard errors in parentheses are two-way clustered at the firm and month levels. 
    \end{minipage}
\end{table}

Table \ref{emp_table3} presents estimates from \eqref{emp_eq5}. The specifications in columns I--IV differ in the firms included in comparison group $k(ij)$, and column VI recreates the results of our baseline specification presented initially in \Cref{emp_tableYYYY}, column I. Across specifications, we find a 3.5-4.7 percentage point increase in the wage gap due to EPSW in male-majority LLMs, and a 5.0-7.5 percentage point decrease in the wage gap due to EPSW in female-majority LLMs. Column V presents our baseline specification, but replaces the separate worker and firm fixed effects--$\alpha_i$ and $\alpha_j$, respectively--with a worker$\times$firm fixed effect $\alpha_{ij}.$ By the construction of our triple difference specification, a worker-firm pair contributes to the estimates of the main coefficients of interest only if the pair are matched both before and after June 2009 (and only in the time periods in which the pair is matched). In other words, workers who change jobs due to EPSW are not driving our presented estimates. It is moreover the case that workers of firms that are fully segregated either before or after the policy do not contribute to the main coefficient estimates. Our findings resemble those in our baseline, but are smaller in magnitude; there is a smaller absolute change in the wage gap in both male- and female-majority LLMs. Recalling that EPSW is predicted to increase gender segregation \emph{by ``role'' but not across ``roles,''} within a firm, these findings are consistent with further gender segregation within roles among firms that employ workers of both genders post EPSW (because firms that are desegregated by role within firm can contribute to the wage gap in the ``opposite'' direction, see \Cref{section:search-model}).

\begin{table}[ht]\centering
\addtolength{\leftskip} {-2cm}
    \addtolength{\rightskip}{-2cm}
    \caption{Effect of EPSW on Gender Wage Gap, by Majority Worker Group\\ \centering Alternative Fixed Effects }\label{emp_table3}
{
\def\sym#1{\ifmmode^{#1}\else\(^{#1}\)\fi}
\centerline{\begin{tabular}{l*{6}{c}}
\hline\hline
            &\multicolumn{1}{c}{(I)}&\multicolumn{1}{c}{(II)}&\multicolumn{1}{c}{(III)}&\multicolumn{1}{c}{(IV)}&\multicolumn{1}{c}{(V)}&\multicolumn{1}{c}{(VI)}\\
\hline
($\hat \beta^{Mgap}$) Treated $\times$ Male $\times$ Post          &     $0.0349$       & $0.0403$       & $0.0356$      & $0.0467$ & $0.0407$ & $0.0427$                    \\
                                                                                                     & $(0.0123)$     & $(0.0123)$     & $(0.0118)$       &
                                                                                            $(0.0121)$   & $(0.0119)$ 
                                                                                                     & $(0.0116)$\\
($\hat \beta^{Mgap}+\hat \beta^{Fgap}$) Effect in Female                 &    $-0.0564$      & $-0.0583$      & $-0.0499$       &  $-0.0747$   & $-0.0506$ & $-0.0624$         \\
\quad Majority LLM          
&      $(0.0204)$     & $(0.0201)$     & $(0.0201)$          &$(0.0225)$ & $(0.0184)$ & $(0.0197)$       \\
\\
Mean Pre-Treatment                       &    $-0.0182$         &   $-0.0244$          &   $-0.0130$             &  $-0.0266$              & $-0.0143$            &  $-0.0155$           \\
\quad (Male Majority LLM)
&      $(0.0176)$         &     $(0.0164)$           &   $(0.0161)$            &    $(0.0165)$             &     $(0.0165)$             &    $(0.0163)$       \\
Mean Pre-Treatment                           &  $-0.0223$         &   $-0.0165$         &  $-0.0209$         &  $0.0028$          &   $-0.0015$             & $-0.0012$         \\
\quad (Female Majority LLM)       
&    $(0.0240)$          &        $(0.0249)$       &    $(0.0255)$         & $(0.0301)$            &  $(0.0296)$          &  $(0.0269)$       \\
\hline
Number of Firms                                        &        $6436$    & $6436$    & $6433$ & $6424$   &  $6421$  & $6424$                         \\
Number of Observations                                 &     $6045315$ & $5885503$ & $5677415$ &$5181609$ & $5500022$  & $5551100$     \\
\hline\hline
\textit{Fixed effects}                       &       &   &   &   &             \\
Firm                                & Yes & Yes & Yes & Yes& No & Yes  \\
Worker                                 & Yes & Yes & Yes & Yes & No & Yes  \\
Worker$\times$firm                                 & No & No & No & No & Yes & No \\
Month$\times$exit year$\times$hum. cap.                & Yes  & No & No & No& No & No  \\ 
Month$\times$exit year$\times$industry$\times$hum. cap.              & No & Yes & No & No& No & No  \\
Month$\times$exit year$\times$county$\times$hum. cap.       & No  & No & Yes & No & Yes & Yes  \\
Month$\times$exit year$\times$county$\times$industry$\times$hum. cap.         & No  & No & No & Yes & No & No \\ 
 Month$\times$exit year$\times$region$\times$industry$\times$hum. cap.         & No  & No & No & No & Yes & Yes \\ 
Firm-month level controls              & Yes  & Yes  & Yes  & Yes & Yes & Yes \\
Worker-firm-month level controls           & Yes  & Yes  & Yes  & Yes & Yes & Yes \\
\hline\hline
\end{tabular}}
}
 \begin{minipage}{1.34\textwidth} \footnotesize
        \vspace{2mm}
       Notes:  This table displays estimated coefficients  for the regression described in \eqref{emp_eq5}. In particular, we present estimates of $\hat \beta^{Mgap}$ and $\hat \beta^{Fgap}$. The unit of the panel is the worker-firm-month and the dependent variable is the natural logarithm of monthly earnings. 
       Each column presents time-varying fixed effects corresponding to a different  comparison group of firms $k(ij)$ for each worker $i$ and each firm $j$. All columns include worker ``human capital'' comparison groups defined by equivalence across three binary dimensions at time $t$ at firm $j$: an indicator for tertiary education, an indicator for long-term versus fixed-term contract, and an indicator for being above median age in the particular industry-region in which firm $j$ operates. Columns differentially include firm comparison groups. In column I, all firms are included in the comparison group. In column II firms with the same industry code are included in the comparison group. In column III firms in the same geographic county are included in the comparison group. In column IV firms in both the same county and industry code are included
in the comparison group.  Columns V and VI have time-varying fixed effects at the same level as in our baseline specification in column I of Table \ref{emp_tableYYYY}; Column V removes firm and worker fixed effect and replaces them with a fixed effect for each worker-firm pair, while column VI is precisely our baseline specification.
The mean pre-treatment effects are the mean of $\hat \beta^{Mgap}_\tau$ and $\hat \beta^{Mgap}_\tau+\hat \beta^{Fgap}_\tau$, respectively, for $\tau<0$ calculated from \eqref{emp_eq6}. Throughout, standard errors in parentheses are two-way clustered at the firm and month levels. 
    \end{minipage}
\end{table}

\subsection{Alternative empirical specifications}\label{alt samples}

\Cref{empirical section} of the paper discusses alternative empirical specifications and samples as robustness checks. Specifically, results from each of these robustness specifications is presented in a separate column in each of Tables \ref{emp_table_seg_robust} and \ref{emp_tableYYYY}. In this section, we describe  these robustness specifications.

\paragraph{Removing firm fixed effects}

The results from this specification are presented in column II (``No firm FEs'') of the aforementioned tables. The sample of firms and workers used is the same as in our main analysis, but we remove ``$\alpha_j$'' firm fixed effects and add ``$\delta above10_j$'' to each of the specifications presented in \eqref{emp_eq1}, \eqref{emp_eq2}, \eqref{emp_eq5}, and \eqref{emp_eq6}. 
This specification removes the restriction of identifying effects only through within-firm variation, and also allows for between-firm comparisons. In our firm-month panel, corresponding to our segregation results, the presence of firm fixed effects accounts for the possibility that firms sort differentially into segregation based on unobservables. Because such sorting does not contribute to the treatment effects in our baseline specification with firm fixed effects, the comparison to the specification without firm fixed effects provides evidence on the relative importance of sorting. In the worker-firm-month panel, corresponding to our wage gap results, the presence of firm fixed effects additionally accounts for the possibility that \emph{workers} sort differentially into firms based on firm unobservables. Therefore, the comparison to the model without firm fixed effects provides evidence on the relative importance of worker sorting.

\paragraph{Removing controls}

The results from this specification are presented in column III (``No controls'') of the aforementioned tables. The sample of firms and workers used is the same as in our main analysis, but we remove the vector of firm-month level controls $X_{jt}$ from \eqref{emp_eq1} and \eqref{emp_eq2}, and we remove the vector of worker-firm-month level controls $X_{ijt}$ from \eqref{emp_eq5} and \eqref{emp_eq6}. 
Because this specification removes observable characteristics as covariariates, stability of the coefficient estimates of interest between this and our baseline specifications indicates that plausibly-related characteristics are not driving our results. Moreover, it suggests that accounting for other, similar observable characteristics that could be present in some hypothetical data set would not greatly affect our results.

\paragraph{Doughnut hole}

Recall that our baseline sample includes firms with between 6 and 13 workers at announcement. 
Column IV (``Doughnut hole'') in the aforementioned tables displays results from a sample which excludes all firms with either 9 or 10 workers at announcement from our baseline sample, resulting in a total of 5,799 firms. This sample is to account for the mechanical increase in likelihood that the excluded firms' treatment status does not match whether they are bound by EPSW at any given point in time. 
Our results reanalyze the specifications in \eqref{emp_eq1}, \eqref{emp_eq2}, \eqref{emp_eq5}, and \eqref{emp_eq6} with this alternative sample.

\paragraph{Firm growth}

The results from this specification are presented in column V (``Firm growth'') of the aforementioned tables. We additionally interact each of the time varying fixed effects (i.e. $\alpha_{k(j)t}$ in \eqref{emp_eq1} and \eqref{emp_eq2}, and $\alpha_{k(ij)t}$ in \eqref{emp_eq5} and \eqref{emp_eq6}) with a firm-level indicator that tracks the absolute growth of each firm in terms of its workforce in the 6 months prior to EPSW announcement. Specifically, for each firm, we subtract the number of employed workers in December 2008 from the number of employed workers in June 2009. There are 285 firms that are not in existence in December 2008 but are in our baseline sample. We exclude these firms, leading to a total of 6,266 firms in the current sample. We then define an indicator variable for each quartile of the growth distribution. 
Therefore, this specification identifies the effects of EPSW on labor market outcomes of interest only comparing (workers of) firms to those of other firms that are on a similar growth path, and not to those of firms that are on different growth paths and may therefore respond differently to EPSW.

\paragraph{Balanced panel of firms}

The results from this specification are presented in column VI (``balanced sample'') of the aforementioned tables. The sample of firms and workers is defined as all firms that are present in the market in all time periods between January 2005 and December 2013. That is, it includes only the firms present in our main analysis that neither enter nor exit our panel. This leads to a total of 3,418 firms. This sample therefore studies the impact of EPSW on segregation and wage gaps for firms that neither enter nor exit our panel ``in the middle.''

\FloatBarrier
\subsection{Details on data sample}\label{filtering}

We use administrative data from the Chilean unemployment insurance system from January 2005 to December 2013. In our data, an observation is a worker-firm-month. We observe two stratified (by firm size) random samples---a 1\% sample and a 3\% sample---of firms, and for each sampled firm we observe the entire monthly working history of every worker that was ever employed by the sampled firm, regardless of whether the worker remains at the sampled firm or not. Therefore, we observe some workers during time periods they are employed by other, non-sampled firms. We do not directly observe which firms are sampled. For that reason, our dataset includes ``incidental firms'' for which we do not necessarily observe the entire workforce at any given moment in time. This happens, for example, if a worker from a sampled firm switched to a non-sampled firm. 

Because we do not observe the entire workforce for these incidental firms, we do not observe the size (number of workers) for these firms. This naturally leads to a potential concern with our size-based empirical strategy. To address this potential concern, we attempt to filter out the incidental firms. Our filtering approach is built on the notion that, due to the sampling procedure, we anticipate incidental firms to have high variance in the number (and presence) of workers across time periods. 

Our filtering procedure does the following. 1) For each of the 1\% and 3\% samples, we compute for each firm the first and the last month it is observed. We drop firms that employ no workers for some months in between these two dates in any of the two samples, i.e. if the firm has ``holes'' in its employment history. 2) For firms in our sample, we compute the monthly average number of workers in each firm in our data and we drop firms that have an average of fewer than 4 in any of the two samples. 3) For each of the firms present in both the 1\% and 3\% samples, we compute the average number of workers of each firm across time, and we drop firms that have different average numbers of workers in the two samples. Descriptive statistics for the set of firms and workers left after the filtering are presented in column I of Table \ref{tab:stats_samples}.

\end{document}